\newtheorem{theo}{Theorem}
\newtheorem{defi}[theo]{Definition}
\newtheorem{lemm}[theo]{Lemma}
\newtheorem{coro}[theo]{Corollary}
\newtheorem{rema}[theo]{Remark}
\newtheorem{prop}[theo]{Proposition}
\numberwithin{theo}{section} 
\newcommand{\ii}{\mathrm{i}}
\newcommand{\mR}{\mathbb{R}}
\newcommand{\mE}{\mathbb{E}}
\newcommand{\ux}{\underline{x}}
\newcommand{\uD}{\underline{D}}
\newcommand{\cD}{\mathcal{D}}
\newcommand{\cA}{\mathcal{A}}
\numberwithin{equation}{section}
\begin{document}
\begin{center}
\noindent {\Large \bf
On the algebra of symmetries of Laplace and Dirac operators
} \\[5mm]
{\bf Hendrik De Bie${}^{1}$, Roy Oste${}^{2}$, Joris Van der Jeugt${}^{3}$}\\[3mm]
${}^{1}$Department of Mathematical Analysis, Faculty of Engineering and Architecture, Ghent University, Krijgslaan 281-S8, 9000 Gent, Belgium\\[3mm]
${}^{2,3}$Department of Applied Mathematics, Computer Science and Statistics, Faculty of Sciences, Ghent University, Krijgslaan 281-S9, 9000 Gent, Belgium.\\
\end{center}
E-mail: {\tt Hendrik.DeBie@UGent.be}; {\tt Roy.Oste@UGent.be}; {\tt Joris.VanderJeugt@UGent.be}

\begin{abstract}
We consider a generalization of the classical Laplace operator, which includes the Laplace--Dunkl operator defined in terms of the differential-difference operators associated with finite reflection groups called Dunkl operators. For this Laplace-like operator, we determine a set of symmetries commuting with it, which are generalized angular momentum operators, and we present the algebraic relations for the symmetry algebra. In this context, the generalized Dirac operator is then defined as a square root of our Laplace-like operator. We explicitly determine a family of graded operators which commute or anticommute with our Dirac-like operator depending on their degree. The algebra generated by these symmetry operators is shown to be a generalization of the standard angular momentum algebra and the recently defined higher rank Bannai--Ito algebra.
\end{abstract}
\noindent \textbf{Keywords:} Laplace operator; Dirac operator; Dunkl operator; Symmetry algebra; Bannai--Ito algebra


\setcounter{equation}{0}
\section{Introduction} \label{sec:Introduction}%

The study of solutions of the Laplace equation or of the Dirac
equation, in any context or setting, is a major topic of investigation. For that purpose, a crucial role is played by the symmetries of the Laplace operator $\Delta$ or of the Dirac operator $\underline D$, i.e., operators commuting with $\Delta$ or (anti)commuting with $\underline D$. The symmetries involved and the algebras they generate lead to topics such as separation of variables and special functions. Without claiming completeness, we refer the reader to~\cite{KH,Eastwood,DeBie&Genest&Vinet-2016,DeBie&Genest&Vinet-2016-2,Boyer,Miller}.

For this paper, the context we have in mind is that of Dunkl derivatives~\cite{1989_Dunkl_TransAmerMathSoc_311_167,2003_Rosler}, i.e., where the ordinary derivative $\frac{\partial}{\partial x_i}$ is replaced by a Dunkl derivative ${\cal D}_i$ in the expression of the Laplace or Dirac operator.
One often refers to these operators as the Laplace--Dunkl and the Dirac--Dunkl operator. 
The chief purpose of this paper is to determine the symmetries of the Laplace--Dunkl operator and of the Dirac--Dunkl operator, and moreover study the algebra generated by these symmetries.

In the process of this investigation, it occurred to us that it is advantageous to treat this problem in a more general context, which we shall describe here in the introduction.
For this purpose, let us first turn to a standard topic in quantum mechanics: the description of the $N$-dimensional (isotropic) harmonic oscillator.
The Hamiltonian $\hat H$ of this oscillator (with the common convention $m=\omega=\hbar=1$ for mass, frequency and the reduced Planck constant) is given by
\begin{equation}
\hat H= \frac12 \sum_{j=1}^N \hat p_j^2 + \frac12 \sum_{j=1}^N \hat x_j^2.
\label{Ham}
\end{equation}
In canonical quantum mechanics, the coordinate operators $\hat x_j$ and momentum operators $\hat p_j$ are required to be (essentially self-adjoint) operators satisfying the canonical commutation relations
\begin{equation}
[\hat x_i, \hat x_j] =0, \qquad [\hat p_i, \hat p_j] =0, \qquad [\hat x_i, \hat p_j] = \ii\delta_{ij}.
\label{CCR}
\end{equation}
So in the ``coordinate representation,'' where $\hat x_j$ is represented by multiplication with the variable $x_j$, the operator $\hat p_j$ is represented by $\hat p_j= -\ii \frac{\partial}{\partial x_j}$. 

Because the canonical commutation relations are sometimes considered as ``unphysical'' or ``imposed without a physical motivation,'' more fundamental ways of quantization have been the topic of various research fields (such as geometrical quantization).
One of the pioneers of a more fundamental quantization procedure was Wigner, who introduced a method that later became known as ``Wigner quantization''~\cite{Palev1982b,PS1997,SV2005,VdJWigner,Wigner1950}.
Briefly said, in Wigner quantization one preserves all axioms of quantum mechanics, except that the canonical commutation relations are replaced by a more fundamental principle: the compatibility of the (classical) Hamilton equations with the Heisenberg equations of motion. 
Concretely, these compatibility conditions read
\begin{equation}
[\hat H,\hat x_j] = -\ii \hat p_j, \qquad [\hat H,\hat p_j] = \ii \hat x_j \qquad (j=1,\ldots,N).
\label{CC}
\end{equation}
Thus, for the quantum oscillator, one keeps the Hamiltonian~\eqref{Ham}, but replaces relations~\eqref{CCR} by~\eqref{CC}.
When the canonical commutation relations~\eqref{CCR} hold, the compatibility relations~\eqref{CC} are automatically valid (this is a version of the Ehrenfest theorem), but not vice versa.
Hence, Wigner quantization is a generalization of canonical quantization, and canonical quantization is just one possible solution of Wigner quantization.
Note that in Wigner quantization the coordinate operators $\hat x_j$ (and the momentum operators) in general do not commute, so this is of particular significance in the field of non-commutative quantum mechanics.

In a mathematical context, as in this paper, one usually replaces the physical momentum operator components $\hat p_j$ by operators $p_j=\ii \hat p_j$, and one also denotes the coordinate operators $\hat x_j$ by $x_j$. 
Then the operator $\hat H$ takes the form
\begin{equation}
H= -\frac12 \sum_{j=1}^N p_j^2 + \frac12 \sum_{j=1}^N x_j^2.
\label{Hpx}
\end{equation}
So in the canonical case, where $x_j$ stands for multiplication by the variable $x_j$, $p_j$ is just the derivative $\frac{\partial}{\partial x_j}$, and the first term of $H$ is (up to a factor $-1/2$) equal to the Laplace operator $\sum_{j=1}^N p_j^2= \sum_{j=1}^N \frac{\partial^2}{\partial x_j^2} = \Delta$.
In the more general case, the compatibility conditions~\eqref{CC} read
\begin{equation}
[H, x_j] = - p_j, \qquad [H, p_j] = - x_j \qquad (j=1,\ldots,N).
\label{CCm}
\end{equation}

We are now in a position to describe our problem in a general framework:
\begin{quote}
	{\em 
		Given $N$ commuting operators $x_1,\ldots,x_N$ and $N$ commuting operators $p_1,\ldots,p_N$, consider the operator 
		$H= \displaystyle -\frac12 \sum_{j=1}^N p_j^2 + \frac12 \sum_{j=1}^N x_j^2$, and suppose that the compatibility conditions~\eqref{CCm} hold. 
		Classify the symmetries of the generalized Laplace operator, i.e., classify the operators that commute with $\sum_{j=1}^N p_j^2$.
	}
\end{quote}
In other words, we are given $N$ operators $x_1,\ldots,x_N$ and $N$ operators $p_1,\ldots,p_N$ that satisfy
\begin{align}
& [x_i, x_j] =0, \qquad [p_i, p_j] =0, \label{cond1} \\*
& \bigg[\frac12 \sum_{i=1}^N p_i^2, x_j\bigg]=p_j, \qquad \bigg[\frac12 \sum_{i=1}^N x_i^2, p_j\bigg] = -x_j \label{cond2}.
\end{align}
Under these conditions, the {\em first problem} is: determine the operators that commute with the generalized Laplace operator
\begin{equation}
\Delta = \sum_{i=1}^N p_i^2.
\label{Laplace-op}
\end{equation}

Our two major examples of systems satisfying \eqref{cond1} and \eqref{cond2} are the ``canonical case'' and the ``Dunkl case.'' Another example, which we will not handle in detail, can be found in the context of discrete counterparts of the Laplacian~\cite{DRRS}. 

For the first example, $x_i$ is just the multiplication by the variable $x_i$, and $p_i$ is the derivative with respect to $x_i$: $p_i=\frac{\partial}{\partial x_i}$. Clearly, these operators satisfy \eqref{cond1} and \eqref{cond2}, and the operator $\Delta$ in~\eqref{Laplace-op} coincides with the 
classical Laplace operator.

For the second example, $x_i$ is again the multiplication by the variable $x_i$, but $p_i$ is the Dunkl derivative $p_i={\cal D}_i$, which is a certain differential-difference operator with an underlying reflection group determined by a root system (a precise definition follows later in this paper).
Conditions \eqref{cond1} still hold: the commutativity of the operators $x_i$ is trivial, but the commutativity of the operators $p_i$ is far from trivial~\cite{1989_Dunkl_TransAmerMathSoc_311_167,Etingof}.
Following~\cite{1989_Dunkl_TransAmerMathSoc_311_167}, also the conditions \eqref{cond2} are valid in the Dunkl case. 
The operator $\Delta$ in~\eqref{Laplace-op} now takes the form $\sum_{i=1}^N {\cal D}_i^2$ and is known as the Dunkl Laplacian or the Laplace--Dunkl operator.
By the way, it is no surprise that the operators $x_i$ and ${\cal D}_j$ do not satisfy the canonical commutation relations.
It is, however, very surprising that they satisfy the more general Wigner quantization relations (for a Hamiltonian of oscillator type).

So the solution of the first problem in the general context will in particular lead to the determination of symmetries of the Laplace--Dunkl operator.

Since we are dealing with these operators in an algebraic context, it is worthwhile to move to a closely related operator, the Dirac operator.
For this purpose, consider a set of $N$ generators of a Clifford algebra, i.e., $N$ elements $e_i$ satisfying 
\[
\{e_i,e_j\}=\epsilon\, 2\delta_{ij}
\]
where $\{a,b\} = ab + ba$ denotes the anticommutator and $\epsilon$ is $+1$ or $-1$. The generators $e_i$ are supposed to commute with the general operators $x_j$ and $p_j$.
Under the general conditions \eqref{cond1} and \eqref{cond2}, the {\em second problem} is now: determine the operators that commute (or anticommute) with the generalized Dirac operator
\begin{equation}
\underline D = \sum_{i=1}^N e_i p_i.
\label{Dirac-op}
\end{equation}
Obviously, this is a refinement of the first problem, since $\underline D^2= \epsilon\, \Delta$. 

For our two major examples, in the canonical case the operator~\eqref{Dirac-op} is just the classical Dirac operator; in the Dunkl case, the operator~\eqref{Dirac-op} is known as the Dirac--Dunkl operator. 

In the present paper, we solve both problems in the general framework \eqref{cond1}--\eqref{cond2}, and even go beyond it by determining the algebraic relations satisfied by the symmetries.
In section~\ref{sec:2} we consider the generalized Laplace operator $\Delta$ and determine all symmetries, i.e., all operators commuting with $\Delta$ (Theorem~\ref{theoSyms}). Next, in Theorem~\ref{theoLAlg} the algebraic relations satisfied by these symmetries are established.
For the generalized Dirac operator $\underline D$, the determination of the symmetries is computationally far more involved. In section~\ref{sec:3}, Theorem~\ref{TOA} classifies essentially all operators that commute or anticommute with $\underline D$. 
In the following subsections, we derive the quadratic relations satisfied by the symmetries of the Dirac operator. The computations of these relations are very intricate, and involve subtle techniques. 
Fortunately, there is a case to compare with. For the Dunkl case, in which the underlying reflection group is the simplest possible example (namely ${\mathbb Z}_2^N$), the symmetries and their algebraic relations have been determined in~\cite{DeBie&Genest&Vinet-2016,DeBie&Genest&Vinet-2016-2} and give rise to the so-called (higher rank) Bannai--Ito algebra. Our results can be considered as an extension of these relations to an arbitrary underlying reflection group, in fact in an even more general context.

\section{Symmetries of Laplace operators} \label{sec:2}%

We start by formally describing the operator algebra that will contain the desired symmetries of a generalized Laplace operator \eqref{Laplace-op}, as brought up in the introduction. 
\begin{defi}\label{defA}
We define the algebra $\mathcal{A}$ to be the unital (over the field $\mR$ or $\mathbb{C}$) associative algebra generated by the $2N$ elements $x_1,\dotsc,x_N$ and $p_1,\dotsc,p_N$ subject to the following relations:
\begin{align*}
& [x_i, x_j] =0, \qquad\qquad\quad [p_i, p_j] =0, \\
& \bigg[\frac12 \sum_{i=1}^N p_i^2, x_j\bigg]=p_j, \qquad \bigg[\frac12 \sum_{i=1}^N x_i^2, p_j\bigg] = -x_j .
\end{align*}
\end{defi}
Note that an immediate consequence of the relations of $\mathcal{A}$ is
\[
[x_i,p_j]=[x_i,-[H,x_j]]=-[[x_i,H],x_j]=-[p_i,x_j]=[x_j,p_i],
\]
where $H$ is given by~\eqref{Hpx}. This reciprocity 
\begin{equation}
[x_i,p_j]=[x_j,p_i]
\label{cond2b}
\end{equation} 
will be useful for many ensuing calculations, starting with the following theorem.

\begin{theo}\label{theosl2}
The algebra $\cA$ contains a copy of the Lie algebra $\mathfrak{sl}(2)$ generated by the elements
 \begin{equation}\label{Euler}
\frac{|x|^2}{2}  = \frac12\sum_{i=1}^N x_i^2 , \qquad -\frac{\Delta}{2} =  -\frac12\sum_{i=1}^N p_i^2,\qquad \mathbb{E} =  \frac12 \sum_{i=1}^N \{p_i,x_i\},
 \end{equation}
satisfying the relations
\[
\Big[ \mE , \frac{|x|^2}{2} \Big] = |x|^2, \qquad  \Big[ \mE , -\frac{\Delta}{2} \Big] = \Delta, \qquad \Big[  \frac{|x|^2}{2} , -\frac{\Delta}{2} \Big] = \mE.
\]
\end{theo}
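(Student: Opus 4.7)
The three required brackets are all established by a short direct computation from the defining relations of $\cA$ together with the Leibniz rule $[AB,C]=A[B,C]+[A,C]B$. I would handle them one at a time.

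I would start with the bracket $\bigl[|x|^2/2,-\Delta/2\bigr]=\mE$, since it is the most transparent consequence of Definition~\ref{defA}. Expanding $-\Delta/2=-\tfrac12\sum_i p_i^2$ and applying the Leibniz rule gives
\[
[\tfrac12|x|^2,-\tfrac12\Delta]=-\tfrac12\sum_i\bigl(p_i\,[\tfrac12|x|^2,p_i]+[\tfrac12|x|^2,p_i]\,p_i\bigr);
\]
the defining relation $[\tfrac12|x|^2,p_i]=-x_i$ then collapses the right-hand side immediately to $\tfrac12\sum_i(p_ix_i+x_ip_i)=\mE$, which is the desired identity.

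For the two brackets that involve $\mE$, I would expand $\mE=\tfrac12\sum_i(p_ix_i+x_ip_i)$ and bracket each monomial $p_ix_i$ and $x_ip_i$ separately against $|x|^2/2$, and then against $-\Delta/2$, again via the Leibniz rule. The first computation needs only $[x_i,|x|^2/2]=0$ (the $x_j$ mutually commute) together with $[p_i,|x|^2/2]=x_i$; each of the two resulting brackets contributes $x_i^2$, and after the overall prefactor $1/2$ one recovers $\sum_i x_i^2=|x|^2$. The second computation is entirely symmetric: it uses $[p_i,-\Delta/2]=0$ and $[x_i,-\Delta/2]=p_i$, and gives $\sum_i p_i^2=\Delta$.

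There is no genuine obstacle here: the relations in Definition~\ref{defA} have been tailored so that the quadratic invariants $\tfrac12|x|^2$ and $-\tfrac12\Delta$ act, through the adjoint action, as raising and lowering operators on the $p_i$ and $x_i$, which is precisely the input needed for the $\mathfrak{sl}(2)$ brackets. The only conceptual point I would emphasize in the write-up is that the $\mathfrak{sl}(2)$-triple arises from the quadratic compatibility conditions \eqref{cond2} alone; the reciprocity \eqref{cond2b} derived just above the statement is not invoked at any step.
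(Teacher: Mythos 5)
Your proof is correct, and for the bracket $\bigl[\tfrac12|x|^2,-\tfrac12\Delta\bigr]=\mE$ it coincides with the paper's own computation. For the two brackets involving $\mE$, however, you take a genuinely different route. You apply the Leibniz rule to the factors of the monomials $p_ix_i$ and $x_ip_i$ inside $\mE$, so that the only commutators that ever appear are $[x_i,|x|^2]=0$, $[p_i,\Delta]=0$, $[\Delta,x_i]=2p_i$ and $[|x|^2,p_i]=-2x_i$ --- all of them read off directly from \eqref{cond1}--\eqref{cond2}. The paper instead applies Leibniz to $p_j^2$ (resp.\ $x_j^2$) first, which produces the individually unknown mixed commutators $[x_i,p_j]$ with $i\neq j$ and therefore forces an appeal to the reciprocity \eqref{cond2b} in order to re-assemble the sum over $i$ into $-[\Delta,x_j]$. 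Your closing observation that \eqref{cond2b} is not needed anywhere in this theorem is thus correct and is the substantive difference between the two arguments: your ordering of the Leibniz expansion makes the proof a strict consequence of \eqref{cond1}--\eqref{cond2} with less bookkeeping, while the paper's ordering gains nothing here (though the same manipulation pattern, where \eqref{cond2b} genuinely is needed, recurs in later proofs such as Theorem~\ref{theoosp}). One caution so that you do not ``correct'' yourself into an error: your computation yields $[\mE,-\tfrac12\Delta]=+\Delta$, i.e.\ $[\mE,\Delta]=-2\Delta$, which is exactly what the paper's own proof derives, what the classical Euler operator satisfies, and what the Jacobi identity forces given the other two brackets; the sign printed in the theorem's display is consistent with $[\mE,\tfrac12\Delta]=-\Delta$ rather than with the generator $-\tfrac12\Delta$ as written, so the discrepancy lies in the display, not in your calculation.
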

\begin{proof}
By direct computation we have	
 \begin{equation*}
\frac14[\Delta,|x|^2] =  \frac14\sum_{i=1}^N [\Delta,x_i^2] =  \frac14\sum_{i=1}^N \{[\Delta,x_i],x_i\} = \frac12 \sum_{i=1}^N \{p_i,x_i\}.
 \end{equation*}
Using the commutativity of $p_1,\dotsc,p_N$ and relation~\eqref{cond2b}, we have
 \begin{align*}
 [\mE ,\Delta ] =\ & \frac12\sum_{i=1}^N\sum_{j=1}^N [ \{p_i,x_i\},p_j^2]
 =\  \frac12\sum_{i=1}^N\sum_{j=1}^N \big\{[ \{p_i,x_i\},p_j] ,p_j\big\}\\
 =\ &  \frac12\sum_{i=1}^N\sum_{j=1}^N \big\{\big(p_i(x_ip_j-p_jx_i)+(x_ip_j-p_jx_i)p_i\big) ,p_j\big\} \\
 =\ & \frac12\sum_{i=1}^N\sum_{j=1}^N \big\{\big(p_i(x_jp_i-p_ix_j)+(x_jp_i-p_ix_j)p_i\big),p_j\big\}
 \\
 =\ & -\frac12 \sum_{j=1}^N \bigg\{\bigg[ \sum_{i=1}^N p_i^2,x_j\bigg],p_j\bigg\}
 = - \sum_{j=1}^N \{p_j,p_j\} = -2\Delta.
 \end{align*}
 In the same manner, using now the commutativity of $x_1,\dotsc,x_N$, we find $[\mE,|x|^2] =2|x|^2$.  		
\end{proof}
In the spirit of Howe duality~\cite{Howe,Howe2}, our objective is to determine the subalgebra of $\cA$ which commutes with the Lie algebra $\mathfrak{sl}(2)$ realized by
$\Delta$ and $|x|^2$ as appearing in Theorem~\ref{theosl2}. As mentioned in the introduction, the element $\Delta$ corresponds to a generalized version of the Laplace operator, which reduces to the classical Laplace operator for a specific choice of the elements $p_1,\dotsc,p_N$. In the (Euclidean) coordinate representation, $|x|^2$ of course represents the norm squared.

\subsection{Symmetries} \label{sec:2.2}%

As $p_1,\dotsc,p_N$ are commuting operators, by definition they also commute with $\Delta$. However, in general they are not symmetries of $|x|^2$. An immediate first example of an operator which does commute with both $\Delta$ and $|x|^2$ is given by the Casimir operator (in the universal enveloping algebra) of their $\mathfrak{sl}(2)$ realization
\begin{equation}\label{Omega}
\Omega =\mE^2 -2\mE-|x|^2\Delta  \in \mathcal{U}( \mathfrak{sl}(2)) \subset \cA.
\end{equation}
Note that this operator is of the same order in both $x_1,\dotsc,x_N$ and $p_1,\dotsc,p_N$ as it has to commute with both $\Delta$ and $|x|^2$. More precisely it is of fourth order in the generators of $\cA$, being quadratic in $x_1,\dotsc,x_N$ and quadratic in $p_1,\dotsc,p_N$.
We now set out to consider the most elementary symmetries, those which are of second order in the generators of $\cA$. 

\begin{theo}\label{theoSyms}
	In the algebra $\cA$, the elements which are quadratic in the generators $x_1,\dotsc,x_N$ and $p_1,\dotsc,p_N$, and which commute with $\Delta$ and $|x|^2$ are spanned by
	\begin{equation}\label{syms}
	L_{ij} = x_i p_j -x_j p_i, \qquad C_{ij} = [p_i,x_j] = p_i x_j -x_j p_i\qquad (i,j \in \{1,\dotsc,N\}).
	\end{equation}
\end{theo}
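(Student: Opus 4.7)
I would prove the two inclusions separately. The easy direction is to check directly that each $L_{ij}$ and each $C_{ij}$ commutes with both $\Delta$ and $|x|^2$. For $L_{ij} = x_ip_j - x_jp_i$ this is a straightforward expansion using $[\Delta, x_k] = 2p_k$, $[|x|^2, p_k] = -2x_k$ (both immediate from the defining relations of $\cA$) together with the commutativity of the $p$'s and of the $x$'s. For $C_{ij} = [p_i, x_j]$ a one-line Jacobi computation gives $[C_{ij}, \Delta] = [p_i, [x_j, \Delta]] - [x_j, [p_i, \Delta]] = [p_i, -2p_j] - 0 = 0$, and the same argument yields $[C_{ij}, |x|^2] = 0$.

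For the converse, I would first bring an arbitrary quadratic element into a normal form. Using $[x_i,x_j]=0$, $[p_i,p_j]=0$, then pushing every $p$ past every $x$ via $p_ix_j = x_jp_i + C_{ij}$, and invoking the reciprocity $C_{ij} = C_{ji}$ derived from \eqref{cond2b}, a general quadratic element of $\cA$ can be written as
\[
Z = \sum_{i \le j} a_{ij}\, x_i x_j + \sum_{i \le j} b_{ij}\, p_i p_j + \sum_{i,j} c_{ij}\, x_i p_j + \sum_{i \le j} d_{ij}\, C_{ij}.
\]
The next step is a bookkeeping computation of $[Z, \Delta]$ and $[Z, |x|^2]$, re-expressed in the same normal form. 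The key one-step identities are $[x_ix_j, \Delta] = -2x_ip_j - 2x_jp_i - 2C_{ij}$, $[x_ip_j, \Delta] = -2p_ip_j$, $[p_ip_j, |x|^2] = 2x_ip_j + 2x_jp_i + 2C_{ij}$, and $[x_ip_j, |x|^2] = 2x_ix_j$, together with the vanishing of $[p_ip_j, \Delta]$, $[x_ix_j, |x|^2]$ and of $[C_{ij}, \Delta]$, $[C_{ij}, |x|^2]$ from the first part.

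Setting $[Z, \Delta] = 0$ and reading off coefficients, the $p_ip_j$-coefficient forces $c_{ij} + c_{ji} = 0$ (so $c$ is antisymmetric) and the $x_ip_j$-coefficient forces $a_{ij} = 0$; dually, $[Z, |x|^2] = 0$ forces $b_{ij} = 0$ and reconfirms the antisymmetry of $c$. What survives is exactly $\sum_{i<j} c_{ij}(x_ip_j - x_jp_i) + \sum_{i \le j} d_{ij} C_{ij} = \sum_{i<j} c_{ij} L_{ij} + \sum_{i \le j} d_{ij} C_{ij}$, which is the claimed span. The main obstacle is the legitimacy of reading off coefficients in the first place: it rests on the monomials $x_ix_j$, $p_ip_j$, $x_ip_j$, $C_{ij}$ being linearly independent in $\cA$, a PBW-type statement not formally encoded in the defining relations but transparently true in any faithful realization (in particular the canonical and Dunkl cases singled out in the Introduction), where all of these assemble independent operators.
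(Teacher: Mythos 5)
Your proof is correct, and its first half coincides with the paper's: the paper likewise verifies by direct expansion that $[\Delta, x_ip_j-x_jp_i]=0$ and $[\Delta,p_ix_j-x_jp_i]=0$ using $[\Delta,x_k]=2p_k$ and the commutativity of the $p_k$ (your Jacobi-identity shortcut for $C_{ij}$ is a minor variant of the same computation). Where you go further is the spanning direction: the paper disposes of it with the single remark that no nonzero quadratic in the $x$'s alone commutes with $\Delta$ and no nonzero quadratic in the $p$'s alone commutes with $|x|^2$, leaving the reduction of a general quadratic element to normal form and the coefficient bookkeeping entirely implicit. Your explicit version of that bookkeeping — the identities $[x_ix_j,\Delta]=-2x_ip_j-2x_jp_i-2C_{ij}$, $[x_ip_j,\Delta]=-2p_ip_j$, $[p_ip_j,|x|^2]=2x_ip_j+2x_jp_i+2C_{ij}$, $[x_ip_j,|x|^2]=2x_ix_j$, yielding $a_{ij}=0$, $b_{ij}=0$ and the antisymmetry of $c$ — is exactly the argument the paper is tacitly invoking, and it checks out. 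The caveat you flag, that reading off coefficients presupposes linear independence of the normal-form monomials $x_ix_j$, $p_ip_j$, $x_ip_j$, $C_{ij}$ (a PBW-type statement for $\cA$), is genuine but is equally unaddressed in the paper; since the theorem is applied in faithful realizations (canonical and Dunkl) where this independence is manifest, your explicit acknowledgment of the assumption leaves you on the same footing as, if not slightly firmer than, the published proof.
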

Note that when $i=j$ we have $L_{ii} = 0$, while $C_{ii} = [p_i,x_i]$ does not necessarily vanish. Moreover, as $L_{ij} = -L_{ji}$, every symmetry $L_{ij}$ is up to a sign equal to one of the $N(N-1)/2$ symmetries $\{L_{ij} \mid 1\leq i<j\leq N\}$. By relation~\eqref{cond2b}, $C_{ij} = C_{ji}$, and thus, we have $N(N+1)/2$ symmetries $\{C_{ij} \mid 1\leq i\leq j\leq N\}$. 
In total, this gives $N^2$ generically distinct symmetries.
\begin{proof}
	We first show that $L_{ij}$ and $C_{ij}$ as defined by~\eqref{syms} are indeed symmetries of $\Delta$ and $|x|^2$. 
As $\Delta$ commutes with $p_1,\dotsc,p_N$ and using condition \eqref{cond2}, we have for $i,j \in \{1,\dotsc,N\}$
\[
[\Delta,x_ip_j - x_j p_i]   
= x_i[\Delta,p_j ] + [\Delta,x_i ]p_j - x_j[\Delta,  p_i] - [\Delta, x_j ]p_i 
=  2p_i p_j  - 2p_j p_i
= 0.
\]
In the same manner, we have $[\Delta,p_ix_j - x_j p_i]=0$. The relations for $|x|^2$ follow similarly.

Now, a general element $S\in\cA$ quadratic in the generators $x_1,\dotsc,x_N$ and $p_1,\dotsc,p_N$ is of the form
\[
S = \sum_{i,j} ( a_{ij} x_i p_j + b_{ij} p_i x_j + c_{ij} x_ix_j + d_{ij}  p_ip_j)\rlap{\,,}
\]
where $i$ and $j$ are summed over $\{1,\dotsc,N\}$ and $a_{ij}, b_{ij} , c_{ij}, d_{ij}$ are scalars. 
Using relations~\eqref{cond1} and \eqref{cond2}, we have 
\begin{align*}
	&	\frac12[\Delta,S] = \sum_{i,j} ( a_{ij} p_i p_j + b_{ij} p_i p_j + c_{ij}(x_ip_j+p_ix_j ) ) \\ 
	=\	&  \sum_{i,j}  c_{ij}(x_ip_j+p_ix_j )  + \sum_{i} (a_{ii}+ b_{ii})p_i^2   + \sum_{i<j}  (a_{ij} + a_{ji}  + b_{ij}  + b_{ji} )p_i p_j   \rlap{\,.}
\end{align*}	
In order for this to vanish, the coefficients must satisfy $a_{ii}+ b_{ii}=0$, $c_{ij}=0$ for all $i,j$, and 
$a_{ij} + a_{ji}  + b_{ij}  + b_{ji}=0$ for $i<j$ (though by symmetry also for $i>j$, and by the previous case also for $i=j$, thus finally for all $i,j$). The condition $[|x|^2,S]=0$ yields the additional requirements $d_{ij}=0$ for all $i,j$. Hence, say $c_{ij}=0$, $d_{ij}=0$, $a_{ii}=- b_{ii}$, and $a_{ji} =-a_{ij} - b_{ij}  - b_{ji}$, then the symmetry $S$ is of the form
\[
S = \sum_{i} b_{ii}(p_ix_i-x_ip_i)  + \sum_{i<j} ( a_{ij}( x_i p_j -x_j p_i ) + b_{ij} (p_i x_j-x_j p_i)  + b_{ji} (p_j x_i-x_j p_i) )
\]
where in the right-hand side we recognize $C_{ii}$, $L_{ij}$, $C_{ij}$ and $C_{ji}+L_{ij}$.
\end{proof}
	
\subsection{Symmetry algebra} \label{ssec:Algebra}%

For the following results, we make explicit use of the symmetry $C_{ij}=[p_i,x_j]$ being symmetric in its two indices, by relation~\eqref{cond2b}. 
This is the case for $p_i$ corresponding to classical partial derivatives, but also for their generalization in the form of Dunkl operators. We will return in detail to these examples in section~\ref{ssec:Examples}. Another consequence of relation~\eqref{cond2b} pertains to the form of the other symmetries of Theorem~\ref{theoSyms}. By means of $x_ip_j-p_jx_i=x_jp_i-p_ix_j$ we readily observe that
	\begin{equation}\label{Lij}
L_{ij}=x_ip_j-x_jp_i = p_jx_i-p_ix_j.
	\end{equation}
Given these symmetry properties, the symmetries of Theorem~\ref{theoSyms} generate an algebraic structure within $\cA$ whose relations we present after the following lemma. 
\begin{lemm}\label{lemma2}	
In the algebra $\cA$, the symmetries~\eqref{syms} satisfy the following relations for all $i,j,k\in\{1,\dotsc,N\}$ 
\[
	[C_{ij},p_k] =[C_{kj},p_i],
\qquad\mbox{
and }
\qquad
[C_{ij},x_k] =[C_{kj},x_i].
\]
Moreover, we also have 
\[
	L_{ij}p_k + L_{ki}p_j + L_{jk}p_i = 0 = p_kL_{ij} + p_jL_{ki} + p_iL_{jk},
\]
and
\[
 x_kL_{ij} + x_jL_{ki} + x_iL_{jk}= 0 =	L_{ij}x_k + L_{ki}x_j + L_{jk}x_i .
\]
\end{lemm}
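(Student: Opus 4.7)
My plan is to derive all four identities directly from the defining relations of $\cA$ (commutativity of the $x_i$'s among themselves and of the $p_j$'s among themselves) together with the reciprocity $[x_i,p_j]=[x_j,p_i]$, i.e.\ the symmetry $C_{ij}=C_{ji}$. None of the steps should require a real calculation; the only inputs are the Jacobi identity and a cyclic-cancellation trick.

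For the first identity $[C_{ij},p_k]=[C_{kj},p_i]$, I would unfold $C_{ij}=[p_i,x_j]$ and apply the Jacobi identity to $[[p_i,x_j],p_k]$. The cyclic term $[[p_k,p_i],x_j]$ vanishes because the $p$'s commute, leaving precisely $[[p_k,x_j],p_i]=[C_{kj},p_i]$. For the second identity $[C_{ij},x_k]=[C_{kj},x_i]$, I would first use $C_{ij}=C_{ji}$ to rewrite the left-hand side as $[[p_j,x_i],x_k]$, then apply Jacobi; the cyclic term $[[x_k,x_i],p_j]$ vanishes by commutativity of the $x$'s, producing $[[p_j,x_k],x_i]$, which equals $[C_{kj},x_i]$ after one more application of the reciprocity relation. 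The only subtle point here is keeping track of indices so that the symmetry $C_{ij}=C_{ji}$ is invoked in the right place; this is the one spot where the general $\cA$-relations are genuinely used rather than mere Lie-algebra bookkeeping.

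For the $L$-identities, I would expand the symmetries using both presentations $L_{ij}=x_ip_j-x_jp_i=p_jx_i-p_ix_j$ guaranteed by \eqref{Lij}. To prove $L_{ij}p_k+L_{ki}p_j+L_{jk}p_i=0$, use the first presentation: each monomial is of the form $x_\alpha p_\beta p_\gamma$; collecting coefficients of $x_i$, $x_j$, and $x_k$ separately produces commutators $[p_\beta,p_\gamma]$ of the momenta, all of which vanish. To prove $p_kL_{ij}+p_jL_{ki}+p_iL_{jk}=0$, the second presentation $L_{ij}=p_jx_i-p_ix_j$ places all the $p$'s on the left, and the same cyclic cancellation of $[p_\beta,p_\gamma]$ applies.

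The two remaining identities involving the $x_k$'s are entirely dual: for $x_kL_{ij}+x_jL_{ki}+x_iL_{jk}=0$ use $L_{ij}=x_ip_j-x_jp_i$ so that coefficients of $p_i,p_j,p_k$ collect into vanishing commutators $[x_\alpha,x_\beta]=0$; for $L_{ij}x_k+L_{ki}x_j+L_{jk}x_i=0$ use $L_{ij}=p_jx_i-p_ix_j$ for the same reason. No obstacle is expected: the entire lemma reduces to Jacobi plus the commutativity built into Definition~\ref{defA} and the reciprocity~\eqref{cond2b}; the most delicate aspect is simply the bookkeeping needed to reconcile the two equivalent forms of $L_{ij}$ in the left- and right-multiplication versions of the cyclic identities.
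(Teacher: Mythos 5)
Your proposal is correct and follows essentially the same route as the paper: the $L$-identities are handled exactly as in the paper's proof (choose the appropriate presentation of $L_{ij}$ from~\eqref{Lij} and let the cyclic sum collapse onto vanishing commutators of the $x$'s or $p$'s), and your Jacobi-identity derivation of $[C_{ij},p_k]=[C_{kj},p_i]$ and $[C_{ij},x_k]=[C_{kj},x_i]$ is just a cleaner packaging of the paper's term-by-term cancellation, with the reciprocity $C_{ij}=C_{ji}$ invoked at the same (and indeed necessary) point.
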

\begin{proof}
For the first relation, writing out the commutators in $\big[[p_k ,x_j ],p_i\big] -\big[[p_i , x_j ],p_k\big]$ we find
\[
	p_k x_j p_i -x_j p_k p_i - p_ip_k x_j + p_ix_jp_k  - p_i  x_j p_k+ x_jp_i   p_k+p_kp_i  x_j -p_k  x_jp_i.
\]
We see that all terms cancel due to the mutual commutativity of the operators $p_1,\dotsc,p_N$. The other relation of the first line follows in the same way.

For the other two relations, the identities follow immediately by choosing the appropriate expression for $L_{ij}$ of \eqref{Lij} and making use of the commutativity of either $x_1,\dotsc,x_N$ or $p_1,\dotsc,p_N$.
\end{proof}

\begin{theo}\label{theoLAlg}
		In the algebra $\cA$, the symmetries~\eqref{syms} satisfy the following relations for all $i,j,k,l\in\{1,\dotsc,N\}$, 
	\begin{align}\label{rel1}
	[ L_{ij}, L_{kl}] & =L_{il}C_{jk} +L_{jk}C_{il}+L_{ki}C_{lj}+L_{lj}C_{ki} \\
	& =C_{jk}L_{il} +C_{il}L_{jk}+C_{lj}L_{ki}+C_{ki}L_{lj}, \notag
	\end{align}
	\begin{equation}\label{rel2}
	\{L_{ij},L_{kl}\}+\{L_{ki},L_{jl}\} +\{L_{jk},L_{il}\}  =0\rlap{\,,}
\end{equation}
\begin{equation}\label{rel3}
	[L_{ij} ,C_{kl}] +[L_{ki},C_{jl}]+[L_{jk},C_{il}] =0\rlap{\,,}
\end{equation}
and
\begin{equation}\label{rel4}
	L_{ij}L_{kl}+L_{ki}L_{jl} +L_{jk}L_{il}  =	L_{ij}C_{kl} +L_{ki}C_{jl}+L_{jk}C_{il} \rlap{\,.}
\end{equation}
\end{theo}
\begin{proof}
	We will prove the first line of the first relation, i.e.~\eqref{rel1}, the second line follows in a similar manner. We have 
	\begin{align*}
	[x_ip_j - x_j p_i,x_kp_l - x_l p_k]   = \ &    [x_ip_j ,x_kp_l ] - [x_ip_j , x_l p_k]  -[ x_j p_i,x_kp_l ]  + [ x_j p_i, x_l p_k]  \\
	 =\ &  x_i[p_j ,x_k ]p_l + x_k[x_i ,p_l ]p_j - x_i[p_j , x_l ]p_k- x_l[x_i ,  p_k]p_j  \\
	 & -x_j[  p_i,x_k ]p_l -x_k[ x_j,p_l ] p_i  + x_j[  p_i, x_l ]p_k + x_l[ x_j ,  p_k]p_i\\
	 =\ &  x_iC_{jk}p_l - x_kC_{li}p_j - x_iC_{jl}p_k+ x_lC_{ki}p_j  \\
	 & -x_jC_{ik}p_l +x_kC_{lj} p_i  + x_jC_{il}p_k - x_lC_{kj}p_i.
	 \end{align*}
	 Swapping all operators $p_l$ with $C_{jk}$, we find
	 	\begin{align*}
	  [ L_{ij}, L_{kl}] =\ &  
	  x_ip_lC_{jk} + x_i[C_{jk},p_l] 
	  - x_kp_jC_{li}  - x_k[C_{li},p_j ]
	   - x_ip_kC_{jl} - x_i[C_{jl},p_k] \\
	 & + x_lp_jC_{ki}   + x_l[C_{ki},p_j ] 
	 -x_jp_lC_{ik}  -x_j[C_{ik},p_l] 
	 +x_kp_iC_{lj}   \\
	 & +x_k[C_{lj} ,p_i ]
	 + x_jp_kC_{il}   + x_j[C_{il},p_k ] 
	 - x_lp_iC_{kj} - x_l[C_{kj},p_i ] \\
	 =\ &  x_ip_l C_{jk}   - x_kp_jC_{li}  - x_ip_kC_{jl}   + x_lp_jC_{ki}  -x_jp_lC_{ik}	 +x_kp_iC_{lj} + x_jp_kC_{il} - x_lp_iC_{kj} \\
	 & + x_i\big([C_{jk},p_l] -[C_{jl},p_k]\big)
	 + x_k\big(-[C_{il},p_j ]+[C_{lj} ,p_i ]\big)\\
	& - x_l\big(-[C_{ki},p_j ] + [C_{kj},p_i ]\big) 
	  -x_j\big([C_{ik},p_l]  -[C_{il},p_k ] \big)
	    \\
	  =\ &  L_{il}C_{jk} +L_{jk}C_{il}+L_{ki}C_{lj}+L_{lj}C_{ki}, 
	\end{align*}
	where we used $C_{jk}=C_{kj}$ and Lemma~\ref{lemma2}. 
	
The identities \eqref{rel2} and \eqref{rel3} follow by making explicit use of both expressions of \eqref{Lij} for $L_{ij}$. For the left-hand side of \eqref{rel2} we have
\begin{align*}
	&	L_{ij}L_{kl}+L_{kl}L_{ij}+L_{ki}L_{jl}+L_{jl}L_{ki} +L_{jk}L_{il} +L_{il}L_{jk}  \\
	= \ & (x_ip_j-x_jp_i) (p_lx_k-p_kx_l)+ ( x_kp_l-x_lp_k)(p_jx_i-p_ix_j)
		+ (x_kp_i-x_ip_k)(p_lx_j-p_jx_l)\\
	&+(x_jp_l-x_lp_j)(p_ix_k-p_kx_i)
		+(x_jp_k-x_kp_j)(p_lx_i-p_ix_l)+(x_ip_l-x_lp_i)(p_kx_j-p_jx_k),
\end{align*}
where one observes that all terms vanish due to the commutativity of  $p_1,\dotsc,p_N$.
	
Working out the commutators, the left-hand side of \eqref{rel3} becomes
\begin{align*}
	 & 	 L_{ij} [p_l,x_k]-  [p_l,x_k]L_{ij} +
	L_{ki}[p_l,x_j]-[p_l,x_j]L_{ki}+L_{jk}[p_l,x_i]-[p_l,x_i]L_{jk} \\
	= \ &  L_{ij} p_lx_k-  p_lx_kL_{ij} +
	L_{ki}p_lx_j-p_lx_jL_{ki}+L_{jk}p_lx_i-p_lx_iL_{jk} \\
	&  -L_{ij} x_kp_l+  x_kp_lL_{ij} -
	L_{ki}x_jp_l+x_jp_lL_{ki}-L_{jk}x_ip_l+x_ip_lL_{jk} .
\end{align*}
 Hence, plugging in suitable choices for the symmetries $L_{ij}$, this becomes
\begin{align*}
	&(x_ip_j-x_jp_i) p_lx_k-  p_lx_k(x_ip_j-x_jp_i)+
	(x_kp_i-x_ip_k)p_lx_j -p_lx_j(x_kp_i-x_ip_k)\\ & +(x_jp_k-x_kp_j)p_lx_i-p_lx_i(x_jp_k-x_kp_j) -(p_jx_i-p_ix_j) x_kp_l+  x_kp_l(p_jx_i-p_ix_j)\\ & -
	(p_ix_k-p_kx_i)x_jp_l +x_jp_l(p_ix_k-p_kx_i)-(p_kx_j-p_jx_k)x_ip_l+x_ip_l(p_kx_j-p_jx_k) .
\end{align*}
One observes that all terms vanish due to the commutativity of $x_1,\dotsc,x_N$ and $p_1,\dotsc,p_N$ respectively.

For the final relation, using the definitions~\eqref{syms} we have 
\begin{align*}\label{rel4}
	&	L_{ij}(L_{kl}-C_{kl})+L_{ki}(L_{jl}-C_{jl}) +L_{jk}(L_{il}-C_{il})  \\
	= \ & L_{ij}(x_kp_l-p_kx_l)+L_{ki}(x_jp_l-p_jx_l) +L_{jk}(x_ip_l-p_ix_l)\\
	= \ & 	(	L_{ij}x_k + L_{ki}x_j + L_{jk}x_i)p_l - (L_{ij}p_k + L_{ki}p_j + L_{jk}p_i )x_l
	\rlap{\,,}
\end{align*}
which vanishes by Lemma~\ref{lemma2}. 
\end{proof}

\subsection{Examples}\label{ssec:Examples}

\noindent\textbf{Example 2.1.} As a first example, we consider $N$ mutually commuting variables $x_1,\dotsc,x_N$, doubling as operators acting on functions by left multiplication with the respective variable and $p_j$ being just the derivative $\partial/\partial x_j$ for $j\in\{1,\dotsc,N\}$. In this case, obviously $p_1,\dotsc,p_N$ mutually commute and the operators of interest are
\[
\Delta = \sum_{i=1}^N \frac{\partial^2}{\partial x_i^2} ,\qquad  |x|^2 =   \sum_{i=1}^N x_i^2,\qquad H = -\frac12 \Delta + \frac12 |x|^2,
\] 
which satisfy
\[
\frac12[ \Delta,x_i] = \frac{\partial}{\partial x_i} = p_i,\qquad \frac12\big[ |x|^2,p_i\big] = -x_i.
\]

By Theorem~\ref{theoSyms}, we have the following symmetries: 
\[  
L_{ij} = x_i  \frac{\partial}{\partial x_j} -x_j  \frac{\partial}{\partial x_i}, \qquad C_{ij} = \delta_{ij} = \begin{cases}
1 & \text{if }i=j \\
0 & \text{if }i\neq j.
\end{cases}
\]
While $C_{ij}$ is a scalar for every $i,j$, the $L_{ij}$ symmetries are the standard angular momentum operators whose symmetry algebra is the Lie algebra $\mathfrak{so}(N)$:
	\[
	[ L_{ij}, L_{kl}]  =L_{il}\delta_{jk} +L_{jk}\delta_{il}+L_{ki}\delta_{lj}+L_{lj}\delta_{ik} .
	\]
This is in accordance with Theorem~\ref{theoLAlg} as in this case evidently $C_{ij} = C_{ji}$.

Note that $\Delta$ and $|x|^2$ are also invariant under $\mathrm{O}(N)$, the group of orthogonal transformations on $\mR^{N}$, but these transformations are not contained in the algebra $\cA$.

\noindent\textbf{Example 2.2.} A more intriguing example is given by a generalization of partial derivatives to dif\-fer\-en\-tial-dif\-fer\-ence operators associated with a Coxeter or Weyl group $W$. Let $R$ be a (reduced) root system and $k$ a multiplicity function which is invariant under the natural action of the Weyl group $W$ consisting of all reflections associated with $R$, 
\[
\sigma_{\alpha}(x)={x} -2\langle {x},{\alpha}\rangle{\alpha}/\|{\alpha}\|^2 ,\qquad \alpha \in R, x\in\mathbb{R}^N.
\]
For $\xi\in\mathbb{R}^N$, the Dunkl operator~\cite{1989_Dunkl_TransAmerMathSoc_311_167,2003_Rosler} is defined as
\[
\mathcal{D}_{ \xi}f(x) := \frac{\partial}{\partial\xi} f(x) + \sum_{\alpha\in R_+} k(\alpha)  \frac{f(x)-f(\sigma_{\alpha}(x))}{\langle\alpha,x\rangle}\langle\alpha,\xi\rangle ,
\]	
where the summation is taken over all roots in $R_+$, a fixed positive subsystem of $R$. 
For a fixed root system and function $k$, the Dunkl operators associated with any two vectors commute, see~\cite{1989_Dunkl_TransAmerMathSoc_311_167}. Hence, they form potential candidates for the operators $p_1,\dotsc,p_N$ satisfying condition~\eqref{cond1}. 
The operator of interest is the Laplace--Dunkl operator $\Delta_k$, which can be written as 
\[
\Delta_k = \sum_{i=1}^N (\cD_{\xi_i})^2
\]
for any orthonormal basis $\{\xi_1,\dotsc,\xi_N\}$ of $\mR^N$. For the orthonormal basis associated with the coordinates $x_1,\dotsc,x_N$, we use the notation
\begin{equation}\label{Dunkl}
\mathcal{D}_{ i}f(x) := \frac{\partial}{\partial x_i} f(x) + \sum_{\alpha\in R_+} k(\alpha)  \frac{f(x)-f(\sigma_{\alpha}(x))}{\langle\alpha,x\rangle} \alpha_i \qquad i\in\{1,\dotsc,N\} 
\end{equation}
where $\alpha_i = \langle \alpha , \xi_i \rangle$.

For our purpose, let $x_j$ again stand for multiplication by the variable $x_j$ and take now $p_j=\cD_j$ for $j\in\{1,\dotsc,N\}$. Besides condition~\eqref{cond1}, condition~\eqref{cond2} is also satisfied 
(see, for instance, \cite{1989_Dunkl_TransAmerMathSoc_311_167,2003_Rosler}). We note that the $\mathfrak{sl}(2)$ relations in this context were already obtained by \cite{Heckman-1991}.

By Theorem~\ref{theoSyms}, we have as symmetries, on the one hand, the Dunkl version of the angular momentum operators
\[
L_{ij} = x_i \cD_j - x_j \cD_i.
\]
On the other hand, the symmetries 
\[
 C_{ij} =
[\cD_i,x_j] = \delta_{ij} +  \sum_{\alpha\in R_+} 2k(\alpha) \alpha_i \alpha_j \sigma_{\alpha}
\] 
consist of linear combinations of the reflections in the Weyl group, with coefficients determined by the multiplicity function $k$ and the roots of the root system. This is of course in agreement with $\Delta_k$ being $W$-invariant~\cite{2003_Rosler}. The Weyl group is a subgroup of $\mathrm{O}(N)$, and in this case the algebra $\cA$ does contain these reflections in $W$.

Note that indeed $C_{ij} = C_{ji}$, in accordance with relation~\eqref{cond2b}. Theorem~\ref{theoLAlg} now yields the Dunkl version of the angular momentum algebra: 
\begin{align*}
&	[ L_{ij}, L_{kl}] =  L_{il}C_{jk} +L_{jk}C_{il}+L_{ki}C_{lj}+L_{lj}C_{ki} \\
&	= L_{il}\delta_{jk} +L_{jk}\delta_{il}+L_{ki}\delta_{lj}+L_{lj}\delta_{ki}  +  \sum_{\alpha\in R_+} 2k(\alpha) \big( L_{il} \alpha_j \alpha_k +  L_{jk} \alpha_i \alpha_l +  L_{ki} \alpha_l \alpha_j +  L_{lj} \alpha_k \alpha_i \big)  \sigma_{\alpha}.
\end{align*}
This relation states the interaction of the $L_{ij}$ symmetries among one another. 
The interaction between the symmetries $C_{kl}$ is governed by the group multiplication of the Weyl group, while the relations for the symmetries $L_{ij}$ and $C_{kl}$ follow immediately from the action of a reflection $\sigma_{\alpha}\in W$ on the coordinate variables and the Dunkl operators:
\[
\sigma_{\alpha}\, \xi = \sigma_{\alpha} (\xi) \,\sigma_{\alpha}
,\qquad  \sigma_{\alpha} \,\cD_{\xi} = \cD_{\sigma_{\alpha} (\xi)}\, \sigma_{\alpha}
\qquad (\xi\in\mathbb{R}^N)\rlap{\,.}
\]
Let $\{\xi_1,\dotsc,\xi_N\}$ denote the orthonormal basis associated with the coordinates $x_1,\dotsc,x_N$, then 
\[
\sigma_{\alpha} L_{ij} =  L_{\sigma_{\alpha} (\xi_i)\sigma_{\alpha} (\xi_j)} \sigma_{\alpha} ,
\]
where for $\xi,\eta\in\mathbb{R}^N$ we define
\[ 
L_{\xi\eta} = \langle x ,\xi\rangle \cD_{\eta} - \langle x ,\eta\rangle \cD_{\xi}  = \sum_{k,l} \langle\xi,\xi_k\rangle \langle\eta,\xi_l\rangle  L_{kl}  \rlap{\,.}
\]
This allows us to interchange any two symmetries of the form $L_{ij}$ and $C_{kl}$.

These results have been considered before, namely for the specific case $W=(\mathbb{Z}_2)^3$ in~\cite{2015_Genest&Vinet&Zhedanov_CommMathPhys_336_243}, and for $W=S_N$, and also for arbitrary Coxeter group, in~\cite{Feigin}. Furthermore, relation~\eqref{rel4} has been obtained already in the Dunkl case, and dubbed ``the crossing relation''~\cite{Feigin}.

\section{Symmetries of Dirac operators} \label{sec:3}

We now turn to a closely related operator of
the generalized Laplace operator considered in the preceding section, namely the Dirac operator. 
For an operator of the form~\eqref{Laplace-op}, one can construct a ``square root'' by introducing a set of elements $e_1,\dotsc,e_N$ which commute with $x_i$ and $p_i$ for all $i\in\{1,\dotsc,N\}$ and which satisfy the following relations
\begin{equation}
\label{Clifford}
\{e_i,e_j\} = e_ie_j +e_je_i = \epsilon\,2\delta_{ij},
\end{equation}
where $\epsilon = \pm 1$, or thus for $i\neq j$
\[
(e_i)^2 = \epsilon = \pm 1, \qquad   e_ie_j +e_je_i = 0.
\]
We use these elements to define the following two operators
\[
\uD = \sum_{i=1}^N e_i p_i , \qquad \ux = \sum_{i=1}^N e_i x_i,
\]
whose squares equal
\[
\uD^2 = \epsilon\sum_{i=1}^N (p_i)^2 = \epsilon \Delta , \qquad \ux^2 = \epsilon\sum_{i=1}^N  (x_i)^2 = \epsilon |x|^2,
\]
by means of the anticommutation relations~\eqref{Clifford} of $e_1,\dotsc,e_N$ and condition~\eqref{cond1}. 
For the classical case where $p_i$ is the $i$th partial derivative, the operator $\uD$ is the standard Dirac operator. 

The elements $e_1,\dotsc,e_N$ in fact generate what is known as a Clifford algebra~\cite{Porteous}, which we will denote as $\mathcal{C} = \mathcal{C} \ell(\mR^N)$. A general element in this algebra is a linear combination of products of $e_1,\dotsc,e_N$. The standard convention is to denote, for instance, $e_1e_2e_3$ simply as $e_{123}$. 
Hereto, we introduce the concept of a ``list'' for use as index of Clifford numbers. 
\begin{defi}\label{deflist}
We define a list to indicate a finite sequence of distinct elements of a given set, in our case the set $\{1,\dotsc,N\}$. 
For a list 
$A=a_1\dotsm a_n$ of $\{1,\dotsc,N\}$ with $0\leq n\leq N$, we will use the notation 
\begin{equation}\label{eA}
e_A = e_{a_1} e_{a_2} \dotsm e_{a_n}.
\end{equation}		
\end{defi}
\begin{rema}\label{remaeA}
Note that in a list the order matters as the Clifford generators $e_1,\dotsc,e_N$ anticommute. Moreover, duplicate elements would cancel out as they square to $\epsilon = \pm1$, so we consider only lists containing distinct elements. 
For a set $A = \{ a_1,\dotsc,a_n\} \subset \{1,\dotsc,N\}$, the notation $e_A$ stands for  $e_{a_1} e_{a_2} \dotsm e_{a_n}$ with $a_1 < a_2 < \dotsb < a_n$. 
\end{rema}

The collection $\{e_A\mid A\subset \{1,\dotsc,N\}\}$ forms a basis of the Clifford algebra $\mathcal{C}$, where for the empty set we put $e_{\emptyset}=1$. 

\begin{rema}
	In general, the square of each individual element $e_i$ $(i\in\{1,\dotsc,N\})$ can independently be chosen equal to either $+1$ or $-1$. 
	This corresponds to an underlying space with arbitrary signature defined by the specified signs. The original Dirac operator was constructed as a square root of the wave operator by means of the gamma or Dirac matrices which form a matrix realization of the Clifford algebra for $N=4$ with Minkowski signature. 
	
	To simplify notations in the following, we have chosen the square of all $e_i$ $(i\in\{1,\dotsc,N\})$ to be equal to $\epsilon$ which can be either $+1$ or $-1$. One can generalize all results to arbitrary signature by making the appropriate substitutions.
\end{rema}

In order to consider symmetries of the generalized Dirac operator \eqref{Dirac-op} we will work in the tensor product $\cA \otimes \mathcal{C}$ with the algebra $\cA$ as defined in Definition~\ref{defA}. To avoid overloading on notations, we will omit the tensor symbol $\otimes$ when writing down elements of $\cA \otimes \mathcal{C}$ and use regular product notation. In this notation, $e_1,\dotsc,e_N$ indeed commute with $x_i$ and $p_i$ for all $i\in\{1,\dotsc,N\}$. 

Akin to the realization of the Lie algebra $\mathfrak{sl}(2)$ in the algebra $\cA$ given by Theorem~\ref{theosl2}, we have something comparable in this case. 

\begin{theo}\label{theoosp}
	The algebra $\cA \otimes \mathcal{C}$ contains a copy of the Lie superalgebra $\mathfrak{osp}(1|2)$ generated by the (odd) elements $\uD$ and $\ux$ satisfying the relations
\begin{align*}
 \{\ux,\ux\} &= \epsilon \,2 |x|^2 & \{\uD,\uD\} &= \epsilon \,2 \Delta & \{\ux,\uD\} &= \epsilon \,2\mE\\
 \big[|x|^2,\ux\big] & = 0  & \big[|x|^2,\uD\big] & = -2\uD & [\mE,\ux] & = \ux \\
[\Delta,\ux] & = 2\ux& [\Delta,\uD] & = 0 & [\mE,\uD] & = -\uD 
\end{align*}	
and containing as an even subalgebra the Lie algebra $\mathfrak{sl}(2)$ in the algebra $\cA$ given by Theorem~\ref{theosl2} with relations
	\[
	\Big[ \mE , \frac{|x|^2}{2} \Big] = |x|^2 \qquad\quad  \Big[ \mE , -\frac{\Delta}{2} \Big] = -\Delta \qquad \Big[  \frac{|x|^2}{2} , -\frac{\Delta}{2} \Big] = \mE.
	\]
\end{theo}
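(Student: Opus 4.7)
The plan is to verify the nine displayed relations one by one in $\cA\otimes\mathcal{C}$, using that $e_1,\dotsc,e_N$ commute with every generator of $\cA$, together with the Clifford relations~\eqref{Clifford}, the defining relations of $\cA$ from Definition~\ref{defA}, and the reciprocity $[x_i,p_j]=[x_j,p_i]$ derived in~\eqref{cond2b}. The claim about the even subalgebra is then immediate from Theorem~\ref{theosl2}, which already furnishes all three $\mathfrak{sl}(2)$ relations among $|x|^2$, $\Delta$ and $\mE$.

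For the three odd--odd anticommutators I would expand each double sum $\sum_{i,j}e_ie_j(\cdot)$ and split it into a diagonal part ($i=j$) and paired off-diagonal terms. For $\{\ux,\ux\}$ and $\{\uD,\uD\}$ the diagonal part produces $2\epsilon|x|^2$ and $2\epsilon\Delta$ respectively, while the off-diagonal pairs vanish by $\{e_i,e_j\}=0$ for $i\neq j$ together with the commutativity of the $x_i$ (resp.\ $p_i$) among themselves. The genuinely new computation is $\{\ux,\uD\}$: after relabeling indices in $\uD\,\ux$, the diagonal part yields $\epsilon\sum_i\{x_i,p_i\}=2\epsilon\mE$, and pairing $(i,j)$ with $(j,i)$ the off-diagonal contribution collapses to
\[
\sum_{i<j} e_ie_j\bigl((x_ip_j+p_ix_j)-(x_jp_i+p_jx_i)\bigr),
\]
which is zero because the identity $x_ip_j+p_ix_j=x_jp_i+p_jx_i$ is a direct reformulation of the reciprocity~\eqref{cond2b}.

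The six odd--even commutators then cost almost nothing. The two trivial ones $[|x|^2,\ux]=0$ and $[\Delta,\uD]=0$ hold because $\ux$ (resp.\ $\uD$) is built only from mutually commuting $x_i$'s (resp.\ $p_i$'s) and $e_j$'s that commute with them. For the remaining four commutators I would simply factor out $e_j$ and invoke Definition~\ref{defA}: for instance,
\[
[|x|^2,\uD]=\sum_{j}e_j[|x|^2,p_j]=\sum_{j}e_j(-2x_j)=-2\ux,
\]
and analogously $[\Delta,\ux]=2\uD$. The two Euler-type relations $[\mE,\ux]=\ux$ and $[\mE,\uD]=-\uD$ can be verified by the same pattern after expanding $\mE=\tfrac12\sum_i\{p_i,x_i\}$, or, more economically, deduced from $\{\ux,\uD\}=2\epsilon\mE$ already in hand via the short computation $[\{\ux,\uD\},\ux]=[\uD,\ux^2]=\epsilon[\uD,|x|^2]=2\epsilon\ux$.

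The single step with any real subtlety is the anticommutator $\{\ux,\uD\}=2\epsilon\mE$: it is the unique relation in which the off-diagonal Clifford terms are not killed by graded anti-symmetry of the $e_i$'s alone, so the structural hypothesis~\eqref{cond2b} must be invoked to make them vanish. Once that identity is secured, every other relation of the theorem reduces to one-line bookkeeping.
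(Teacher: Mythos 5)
Your proposal is correct and follows essentially the same route as the paper: the only relation requiring the structural hypothesis is $\{\ux,\uD\}=2\epsilon\,\mE$, whose off-diagonal Clifford terms you cancel via the reciprocity~\eqref{cond2b} exactly as in the paper's proof, while the remaining relations are the same one-line computations (the paper derives $[\mE,\uD]$ directly from~\eqref{cond2b}, whereas you optionally deduce it from $[\{\ux,\uD\},\ux]=[\uD,\ux^2]$ --- a harmless variation). Incidentally, your computations yield $[|x|^2,\uD]=-2\ux$ and $[\Delta,\ux]=2\uD$, which silently corrects two typos in the displayed statement of the theorem.
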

\begin{proof}
	The relations follow by straightforward computations. 
By means of the anticommutation relations~\eqref{Clifford}, one finds that
\begin{align*}
\{\ux,\uD\} = \ & \sum_{i=1}^Nx_ie_i\sum_{j=1}^Np_je_j+\sum_{j=1}^Np_je_j\sum_{i=1}^Nx_ie_i \\*
= \ & \sum_{i=1}^N\epsilon(x_ip_i+p_ix_i)
+\sum_{1\leq i<j\leq N}(x_ip_j -p_jx_i-x_jp_i+p_ix_j  )e_ie_j.
\end{align*}
Looking back at \eqref{Euler}, the first summation is precisely $\epsilon \,2\mE$, while the second summation vanishes by relation~\eqref{cond2b}. Moreover, by relation~\eqref{cond2b} we have
\begin{align*}
[\mE,\uD] = \ & 
\frac12 \sum_{i=1}^N\sum_{j=1}^N[\{p_i,x_i\},p_j]e_j
= \  \frac12\sum_{i=1}^N\sum_{j=1}^N \{p_i,[x_i,p_j]\}e_j\\
= \ & \frac12\sum_{i=1}^N\sum_{j=1}^N \{p_i,[x_j,p_i]\}e_j = -\frac12\sum_{j=1}^N \bigg[\sum_{i=1}^N p_i^2,x_j\bigg] e_j = -\uD,
\end{align*}
and in the same manner, $[\mE,\ux]= \ux$. 
\end{proof} 

\subsection{Symmetries}\label{sec:3.2}

We wish to determine symmetries in the algebra  $\cA \otimes \mathcal{C}$ for the Dirac operator $\uD$ which are linear in both $x_1,\dotsc,x_N$ and $p_1,\dotsc,p_N$. Given the Lie superalgebra framework, it is natural to consider operators which either commute or anticommute with $\uD$. Indeed, the Lie superalgebra  $\mathfrak{osp}(1|2)$ has both a Scasimir and a Casimir element in its universal enveloping algebra~\cite{Casi}. The Scasimir operator
\begin{equation}
\label{Scasimir}
\mathcal{S} = \frac12\left( [\uD,\ux]-\epsilon\right) \in \mathcal{U}(\mathfrak{osp}(1|2))\subset \cA\otimes \mathcal{C},
\end{equation}
anticommutes with odd elements and commutes with even elements. In the classical case, the Scasimir operator is up to a constant term equal to the angular Dirac operator $\Gamma$, i.e.~$\uD$ restricted to the sphere. The Scasimir $\mathcal{S} $ is a symmetry which is linear in both $x_1,\dotsc,x_N$ and $p_1,\dotsc,x_N$ and we will get back to it before the end of this subsection. 
Finally, the square of the Scasimir element yields the Casimir element $\mathcal{C} = \mathcal{S}^2$, which commutes with all elements of $\mathfrak{osp}(1|2)$.

Note that another symmetry is obtained by means of the anticommutation relations~\eqref{Clifford} of the Clifford algebra. The so-called pseudo-scalar $e_1\dotsm e_N$ is easily seen to commute with $\uD$ for $N$ odd and anticommute with $\uD$ for $N$ even.

The Dirac operator is defined such that it squares to the Laplace operator, $\uD^2 = \epsilon \Delta$. This allows us to readily make use of the properties of $\Delta$  by means of the following straightforward relations. 
For an operator $Z$, we have that
\begin{equation}
\label{comacom}
[\uD,\{\uD,Z\}] = \uD(\uD Z + Z\uD ) - (\uD Z + Z\uD )\uD = \left[\uD^2,Z\right] 
\end{equation}
and
\begin{equation}
\label{acomcom}
\{\uD,[\uD,Z]\} = \uD(\uD Z - Z\uD ) + (\uD Z - Z\uD )\uD = \left[\uD^2,Z\right].
\end{equation}
A direct consequence of these relations is that every symmetry of the Laplace operator $\Delta$ yields symmetries of the Dirac operator $\uD$.

\begin{prop}	\label{SymDLaplace}	If $Z$ commutes with $\Delta$ and $\uD^2 = \epsilon \Delta$, then the operator $\{\uD,Z\}$ commutes with $\uD$, while the operator $[\uD,Z]$ anticommutes with $\uD$.
\end{prop}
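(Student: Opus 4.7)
The proposition is essentially an immediate corollary of identities \eqref{comacom} and \eqref{acomcom}, which were set up precisely for this purpose, together with the hypothesis $\uD^2=\epsilon\Delta$. My plan is therefore to make this chain of implications explicit in two short lines.

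First, I would observe that from $\uD^2=\epsilon\Delta$ and the assumption $[\Delta,Z]=0$ one gets immediately
\[
[\uD^2,Z]=\epsilon[\Delta,Z]=0.
\]
This is the only non-tautological input needed.

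Next, I would invoke identity \eqref{comacom}, which reads $[\uD,\{\uD,Z\}]=[\uD^2,Z]$. Combined with the vanishing above, this yields $[\uD,\{\uD,Z\}]=0$, i.e.\ $\{\uD,Z\}$ commutes with $\uD$. Analogously, identity \eqref{acomcom} gives $\{\uD,[\uD,Z]\}=[\uD^2,Z]=0$, so $[\uD,Z]$ anti-commutes with $\uD$. That finishes the proof.

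There is no real obstacle here: the two algebraic identities \eqref{comacom} and \eqref{acomcom} do all the work, and the only ingredient beyond them is the trivial fact that $[\Delta,Z]=0$ implies $[\uD^2,Z]=0$. The proof should thus be only two or three lines long; no induction, no case analysis, and no auxiliary lemma is required.
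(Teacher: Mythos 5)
Your proof is correct and follows exactly the route the paper intends: the proposition is stated as a direct consequence of identities \eqref{comacom} and \eqref{acomcom} together with $[\uD^2,Z]=\epsilon[\Delta,Z]=0$, which is precisely your argument. Nothing is missing.
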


Letting $Z$ be one of the symmetries of Theorem~\ref{theoSyms}, we indeed obtain symmetries of $\uD$, but they are not of the same order in $x_1,\dotsc,x_N$ as in $p_1,\dotsc,p_N$. These symmetries are in fact combinations of the obvious symmetries $p_1,\dotsc,p_N$ and symmetries which are linear in $x_1,\dotsc,x_N$ and in $p_1,\dotsc,p_N$. We set forth to determine the latter explicitly. Hereto, a first observation is that the elements of the Clifford algebra also commute with the Laplace operator, by definition as it is a scalar (non-Clifford) operator. For $A$ a list of distinct elements of $\{1,\dotsc,N\}$, we have
	\begin{align*}
	\uD ( \uD e_A \pm e_A \uD )  \mp   ( \uD e_A \pm e_A \uD )   \uD  
	=  \left[  	\uD^2 ,e_A  \right] =0.
	\end{align*}	
	The explicit expression of these operators follows from the anticommutation relations~\eqref{Clifford} as
	\begin{equation}\label{eAD}
	e_A \uD  =  e_A  \sum_{l=1}^N p_l e_l   =    (-1)^{|A|-1}\sum_{a\in A} p_a e_a e_A +   (-1)^{|A|}\sum_{a\notin A} p_a e_a e_A,
	\end{equation}	
with $|A|$ denoting the number of elements of the list $A$, so
	\begin{equation}\label{DeA}
\uD e_A - (-1)^{|A|} e_A \uD 	 = \sum_{a\in A} 2p_a e_a e_A
	\qquad\mbox{and}\qquad  
 \uD e_A + (-1)^{|A|} e_A \uD 	 = \sum_{a\notin A} 2p_a e_a e_A,
	\end{equation}
where (here and throughout the paper) the summation index $a\notin A$ is meant to run over all elements of $\{1,\dotsc,N\}\setminus A$. Note that for a list of one element $A=i$, we have $\uD e_i + e_i\uD = \epsilon 2p_i$. 
With this information, relations~\eqref{comacom} and \eqref{acomcom} now also lend themselves to the construction of more intricate symmetries of both $\uD$ and $\ux$.

\begin{theo}	\label{DOi}	In the algebra $\cA\otimes \mathcal{C}$, for $i \in \{1,\dotsc,N\}$, the operator
	\begin{equation}
\label{Oi}
	O_{i}  =   \frac{\epsilon}{2}\left( [\uD,x_i] -   e_i \right)   =   \frac{\epsilon}{2}\left( [p_i,\ux] -   e_i \right)    =   \frac{\epsilon}{2}\left(\sum_{l=1}^N  e_l C_{li}-   e_i \right)
	\end{equation}
	anticommutes with $\uD$ and $\ux$.
\end{theo}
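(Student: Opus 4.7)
My plan is to reduce both anti-commutation relations to the elementary identity $\{A,[A,B]\} = [A^2, B]$ applied with $A = \uD$ and $A = \ux$, combined with direct Clifford computations of $\{\uD, e_i\}$ and $\{\ux, e_i\}$. I would start by verifying the three equivalent expressions for $O_i$ in \eqref{Oi}: expanding gives $[\uD, x_i] = \sum_l e_l [p_l, x_i] = \sum_l e_l C_{li}$, which matches the third form, and $[p_i, \ux] = \sum_l e_l [p_i, x_l] = \sum_l e_l C_{il}$, which coincides with the previous expression via the reciprocity $C_{il} = C_{li}$ from \eqref{cond2b}.

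For the anti-commutation with $\uD$, I would instantiate \eqref{acomcom} with $Z = x_i$ to get
\[
\{\uD, [\uD, x_i]\} = [\uD^2, x_i] = \epsilon\, [\Delta, x_i] = 2\epsilon\, p_i,
\]
where the final equality is the first half of \eqref{cond2}. Separately, since each $p_l$ commutes with every $e_i$, the Clifford relations \eqref{Clifford} give $\{\uD, e_i\} = \sum_l p_l \{e_l, e_i\} = 2\epsilon\, p_i$. Subtracting these two equalities and multiplying by $\epsilon/2$ then yields $\{O_i, \uD\} = p_i - p_i = 0$.

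The $\ux$-side is strictly parallel. The identity $\{\ux, [\ux, Z]\} = [\ux^2, Z]$ holds by the same algebraic manipulation as \eqref{acomcom}, so applying it with $Z = p_i$ produces
\[
\{\ux, [\ux, p_i]\} = [\ux^2, p_i] = \epsilon\, [|x|^2, p_i] = -2\epsilon\, x_i
\]
by the second half of \eqref{cond2}. Combining this with $[p_i, \ux] = -[\ux, p_i]$ and the analogous Clifford computation $\{\ux, e_i\} = 2\epsilon\, x_i$ yields $\{O_i, \ux\} = x_i - x_i = 0$, where I use the middle expression for $O_i$ in \eqref{Oi}.

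I do not foresee a substantive obstacle: the whole argument is a two-step check resting on \eqref{comacom}--\eqref{acomcom}, the Wigner-type conditions \eqref{cond2}, and the defining Clifford anti-commutators. The only bookkeeping care needed is tracking the $\epsilon$ factors and the sign flip $[p_i, \ux] = -[\ux, p_i]$; the reason the $e_i$ subtraction in the definition of $O_i$ is exactly the correct correction term is that the Clifford anti-commutator $\{\uD, e_i\} = 2\epsilon\, p_i$ happens to match $\{\uD, [\uD, x_i]\}$ on the nose, and similarly on the $\ux$-side.
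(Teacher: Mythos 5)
Your proposal is correct and follows essentially the same route as the paper: both verify the equivalent forms of $O_i$ via $C_{ij}=C_{ji}$, then apply the identity \eqref{acomcom} together with $\{\uD,e_i\}=2\epsilon p_i$ (and the analogous computation for $\ux$) to see that the two contributions cancel. The sign and $\epsilon$ bookkeeping in your write-up is accurate.
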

\begin{proof}
The equalities in~\eqref{Oi} follow immediately from relation~\eqref{cond2b}, that is, $C_{ij}= [p_i,x_j]= [p_j,x_i]=C_{ji}$. 
By direct computation, using \eqref{acomcom} and the anticommutation relations~\eqref{Clifford}, we have
		\[
		\{   \uD, O_{i}\}  =  \frac{\epsilon}{2}\{   \uD, [\uD,x_i]\}	 -   \epsilon \frac12 \{   \uD, e_i\} 
		  =  \frac12 [\Delta,x_i]  -  p_i   = 0.
		\]
In the same manner, one finds that $O_{i}  =     \frac{\epsilon}{2}\left( [p_i,\ux] -   e_i \right) $ anticommutes with $\ux$.
	\end{proof}
The symmetries $O_i$ with one index $i \in \{1,\dotsc,N\}$ defined in \eqref{Oi} can be generalized to symmetries with multiple indices. 	
%
%
Hereto, we define the operators
\begin{equation}\label{DAxA}
\uD_A = \sum_{a\in A} p_a e_a \qquad\mbox{and}\qquad  \ux_A = \sum_{a\in A} x_a e_a,
\end{equation} 
for $A$ a subset of $\{1,\dotsc,N\}$, and by extension for $A$ a list of $\{1,\dotsc,N\}$ as the order does not matter in the summation. By means of the operators~\eqref{DAxA}, we state the following result. 
\begin{theo}\label{TOA}	In the algebra $\cA\otimes \mathcal{C}$, for $A$ a list of distinct elements of $\{1,\dotsc,N\}$, the operator
	\begin{align}
	\label{OA}
	O_{A}   = \ &  \frac{1}{2}\big(\uD\, \ux_A e_A - e_A \ux_A \uD - \epsilon e_A \big)\\* 
	\label{OAx}
	= \ &  \frac{1}{2}\big(e_A\uD_A\, \ux  -\ux \,\uD_A  e_A - \epsilon e_A \big)\\* 
	\label{OA1}
	= \ &  \frac12\bigg(-\epsilon+\sum_{j\in A}\sum_{i\notin A\setminus\{j\}}C_{ij}e_ie_j- \sum_{\{i, j\} \subset A} 2L_{ij} e_ie_j\bigg)e_A
	\end{align}	
satisfies 
		\[
		\uD \, O_{A}= (-1)^{|A|} O_{A} \uD \qquad\mbox{and}\qquad 	\ux \, O_{A}= (-1)^{|A|} O_{A} \ux .
		\]
\end{theo}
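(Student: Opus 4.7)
The plan is to exploit the three equivalent presentations of $O_A$: form \eqref{OA} is tailored for the relation with $\uD$, form \eqref{OAx} for the (mirror-image) relation with $\ux$, while the explicit expansion \eqref{OA1} is only needed to verify the equivalence. The engine of the proof is the pair of identities \eqref{comacom}--\eqref{acomcom}, which convert (anti)commutators with $\uD$ around (anti)commutators with $\uD$ into the scalar operator $[\uD^2,\cdot] = \epsilon[\Delta,\cdot]$.

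For the identity $\uD\, O_A = (-1)^{|A|} O_A\, \uD$, the starting observation is the Clifford fact $e_A e_a = (-1)^{|A|-1} e_a e_A$ for $a\in A$, which implies $e_A \ux_A = (-1)^{|A|-1} \ux_A e_A$. Setting $Z := \ux_A e_A$, form \eqref{OA} rewrites as
\[
2 O_A + \epsilon e_A \;=\; \uD Z - (-1)^{|A|-1} Z \uD,
\]
which is the anti-commutator $\{\uD,Z\}$ when $|A|$ is even and the commutator $[\uD,Z]$ when $|A|$ is odd. Applying $\uD\,\cdot - (-1)^{|A|}\,\cdot\,\uD$ (i.e.\ $[\uD,\cdot]$ for $|A|$ even and $\{\uD,\cdot\}$ for $|A|$ odd) reduces, by \eqref{comacom} or \eqref{acomcom}, to $[\uD^2, Z] = \epsilon [\Delta, \ux_A e_A]$. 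Since each $e_j$ commutes with all of $\cA$, $e_A$ commutes with the scalar $\Delta$, so this collapses to $\epsilon[\Delta,\ux_A]e_A = 2\epsilon \sum_{a\in A} p_a e_a e_A$ by \eqref{cond2}. The residual $-\epsilon e_A$ contributes $-\epsilon$ times $[\uD,e_A]$ (even case) or $\{\uD, e_A\}$ (odd case); either branch of \eqref{DeA} evaluates this to $-2\epsilon \sum_{a\in A} p_a e_a e_A$, which cancels the first contribution exactly and establishes the identity for both parities simultaneously.

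The identity $\ux\, O_A = (-1)^{|A|} O_A\, \ux$ is established in the completely symmetric way starting from \eqref{OAx}. The relation $e_A \uD_A = (-1)^{|A|-1} \uD_A e_A$ recasts $2 O_A + \epsilon e_A$ as an anti-commutator (respectively commutator) of $\ux$ with $\uD_A e_A$, and the analogues of \eqref{comacom}--\eqref{acomcom} built from $\ux^2 = \epsilon |x|^2$, combined with $[|x|^2, p_a] = -2 x_a$ from \eqref{cond2} and the $\ux$-analogue of \eqref{DeA}, produce the same cancellation.

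The main obstacle is verifying that \eqref{OA}, \eqref{OAx}, and \eqref{OA1} really do define the same operator. The equality \eqref{OA} = \eqref{OAx} is driven by the reciprocity $C_{ij} = C_{ji}$ from \eqref{cond2b}: expanding $\uD \ux_A - \ux \uD_A$ via $p_i x_j = x_j p_i + C_{ij}$ splits it into a symmetric $C$-piece and an antisymmetric $L$-piece, and matching against the analogous expansion of $\uD_A \ux - \ux_A \uD$ comes down to tracking Clifford signs. The passage to \eqref{OA1} is a direct normal-ordering computation: split the double sums according to whether each index lies in $A$ or in its complement, use Clifford anti-commutation to bring every product to the form $e_i e_j e_A$, and group the resulting coefficients into the $C_{ij}$ and $L_{ij}$ terms. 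This bookkeeping is tedious but mechanical, and is where most of the length of the theorem's proof resides.
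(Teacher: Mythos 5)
Your proposal is correct and follows essentially the same route as the paper: both establish $\uD\,O_A-(-1)^{|A|}O_A\uD=0$ from the form \eqref{OA} by using $\ux_Ae_A=(-1)^{|A|-1}e_A\ux_A$ to reduce everything to $[\uD^2,\ux_Ae_A]=\epsilon[\Delta,\ux_A]e_A$ (your invocation of \eqref{comacom}--\eqref{acomcom} is exactly the cancellation the paper writes out by hand), cancel against the $e_A$-term via \eqref{DeA}, and handle $\ux$ symmetrically from \eqref{OAx}; the equivalence of the three expressions is likewise done in the paper by the direct normal-ordering expansion you describe, using $C_{ij}=C_{ji}$ and the two forms of $L_{ij}$. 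The only difference is that you leave that last bookkeeping step sketched rather than executed, but the strategy is sound and matches the paper's.
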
 

\begin{proof} 
	We first show the equivalence of the three expressions~\eqref{OA} and \eqref{OA1}. Starting from \eqref{OA}, up to a factor $1/2$, and using $C_{ij}=C_{ji}$, we have
\begin{align*}
\ &  \uD\, \ux_A e_A - e_A \ux_A \uD - \epsilon e_A  
=    \sum_{l=1}^N p_l e_l\sum_{a\in A} x_a e_a e_A -e_A \sum_{a\in A} x_a e_a\sum_{l=1}^N p_l e_l  - \epsilon e_A\\
= \ &  \sum_{a\in A} \bigg(\epsilon(p_a  x_a - x_a  p_a) e_A 
+\sum_{l\in A\setminus\{a\}}( p_l x_a   + x_a  p_l   ) e_l e_ae_A
+ \sum_{l\notin A}(p_l x_a  - x_a  p_l )e_le_a e_A \bigg) -  \epsilon e_A\\
= \ &  \sum_{a\in A} \bigg(\epsilon(p_a  x_a - x_a  p_a) e_A 
+\sum_{l\in A\setminus\{a\}}( p_a x_l   + x_l  p_a   ) e_a e_le_A
+\sum_{l\notin A}(p_a x_l  - x_l  p_a )e_le_a e_A \bigg) - \epsilon e_A\\
=  \ &  e_A \sum_{a\in A} p_a e_a \sum_{l=1}^Nx_l e_l -  \sum_{l=1}^N x_l e_l\sum_{a\in A}p_a e_ae_A  - \epsilon e_A =  e_A\uD_A\, \ux  -\ux \,\uD_A  e_A - \epsilon e_A  .
\end{align*}	
Again starting from \eqref{OA}, up to a factor $1/2$, we have
\begin{align*}
	\ & \uD\, \ux_A e_A - e_A \ux_A \uD - \epsilon e_A 
	=    - \epsilon e_A+\sum_{l=1}^N p_l e_l\sum_{a\in A} x_a e_a e_A -e_A \sum_{a\in A} x_a e_a\sum_{l=1}^N p_l e_l  \\
	= \ &  \bigg(- \epsilon+ \sum_{l=1}^N p_l e_l\sum_{a\in A} x_a e_a  -\sum_{a\in A} x_a e_a\sum_{l\in A} p_l e_l + \sum_{a\in A} x_a e_a\sum_{l\notin A} p_l e_l   \bigg)e_A \\
  = \ &  \bigg(- \epsilon	
  + \epsilon \sum_{a\in A} (p_ax_a-x_ap_a) + 
  \sum_{a\in A}\sum_{l\in A\setminus\{a\}} ( p_l  x_a+x_a p_l) e_le_a +  \sum_{a\in A}\sum_{l\notin A}( p_l  x_a-x_a p_l) e_l e_a  \bigg)e_A \\
  = \ &  \bigg(- \epsilon	
   +  \sum_{a\in A}\sum_{l\notin A\setminus\{a\}}C_{la} e_l e_a 
    + 
  \sum_{\{a,l\}\subset A} (( p_l  x_a+x_a p_l) e_le_a+( p_a  x_l+x_l p_a) e_ae_l) \bigg)e_A 
\end{align*}	
which equals \eqref{OA1}, up to a factor $1/2$, when using $L_{ij}=x_ip_j-x_jp_i = p_jx_i-p_ix_j$  and $C_{ij}=C_{ji}$. 
	
Now for the proof itself, the case where $A$ is the empty set is trivial, as $O_{\emptyset}= -\epsilon/2$ obviously commutes with $\uD$ and $\ux$. 
For $A$ a singleton the result is given by Theorem~\ref{DOi}, so let now $|A|\geq 2$.	Using $\ux_A e_A = (-1)^{|A|-1} e_A \ux_A$ and \eqref{eAD}, we have
		\begin{align*}
		\uD \, O_{A}- (-1)^{|A|} O_{A} \uD  = \ 
		& \frac{1}{2}\uD \big(\uD\, \ux_A e_A - e_A \ux_A \uD - \epsilon e_A \big) - (-1)^{|A|}\frac{1}{2} \big(\uD\, \ux_A e_A - e_A \ux_A \uD - \epsilon e_A \big)\uD  \\
		 = \ 
		 & \frac{1}{2} \big(\uD^2\, \ux_A e_A  -  \ux_A e_A  \uD^2 \big)- \frac{\epsilon}{2}\big( \uD e_A  - (-1)^{|A|}e_A\uD  \big)   \\
		 = \ 
		 & \frac{\epsilon}{2} 	\sum_{a\in A} [\Delta,x_a]e_ae_A - 
	\epsilon	\sum_{a\in A}  p_ae_a e_{A} ,
		\end{align*}
		which vanishes because of condition \eqref{cond2}. In the same manner, using now the form \eqref{OAx} for $O_A$, the expression $\ux \, O_{A}- (-1)^{|A|} O_{A} \ux$ vanishes.
\end{proof}	
\begin{rema}
	Note that if the order of the list $A$ is altered, $O_A$ changes but only in sign. Say $\pi$ is a permutation on the list $A$, we have $O_{A} = \mathrm{sign}(\pi) O_{\pi(A)}$, where  $\mathrm{sign}(\pi)$ is positive for an even permutation $\pi$ and negative for an odd one. Hence, up to a sign all the symmetries of this form are given by $\{O_{A} \mid A \subset \{1,\dotsc,N\}\}$ where the elements of $A$ are in ascending order in accordance with the standard order for natural numbers.
\end{rema}

For the special case where $A =\{1,\dotsc,N\}$, the operator~\eqref{OA} is seen to correspond precisely to the Scasimir element~\eqref{Scasimir} of $\mathfrak{osp}(1|2)$ multiplied by the pseudo-scalar 
\begin{equation*}
O_{1\dotsm N} =\frac12\left( [\uD,\ux]-\epsilon\right) \, e_1\dotsm e_N.
\end{equation*}

For a list $A$ of $\{1,\dotsc,N\}$, the operator $O_A$ either commutes or anticommutes with $\uD$ and $\ux$. The subsequent corollary is useful if one is interested solely in commuting symmetries.
\begin{coro}\label{cor}
	In the algebra $\cA\otimes \mathcal{C}$, for $A$ a list of distinct elements of $\{1,\dotsc,N\}$, we have
	\[
	\Big[\uD,O_{A} \prod_{i\in A} O_i\Big]=0\,,\qquad 	\Big[\ux,O_{A} \prod_{i\in A} O_i\Big]=0 \rlap{\,.}
	\]
\end{coro}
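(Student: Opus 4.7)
The plan is to leverage the $(\pm)$-commutation rules already established, namely Theorem~\ref{TOA} which states $\uD\, O_A = (-1)^{|A|} O_A\, \uD$, and its specialization Theorem~\ref{DOi} (the case of a singleton list) which gives $\uD\, O_i = -O_i\, \uD$ for every $i \in \{1,\dotsc,N\}$. With these two facts in hand the corollary is a sign count.

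More precisely, I would commute $\uD$ past the product $O_A \prod_{i\in A} O_i$ one factor at a time, from left to right. Passing $\uD$ through $O_A$ first produces a sign $(-1)^{|A|}$ by Theorem~\ref{TOA}. Then, for each of the $|A|$ factors $O_i$ with $i \in A$, an additional sign $-1$ is picked up by Theorem~\ref{DOi}, yielding a cumulative sign $(-1)^{|A|}$ from the singleton symmetries. Multiplying, the total sign becomes $(-1)^{|A|}\cdot(-1)^{|A|} = (-1)^{2|A|} = 1$, so
\[
\uD\, O_A \prod_{i\in A} O_i = O_A \prod_{i\in A} O_i\, \uD,
\]
which is the claim.

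No obstacle really arises: the argument uses only the grading $\uD\, O_B = (-1)^{|B|} O_B\, \uD$ valid for any list $B$, applied once with $B = A$ and $|A|$ times with $B$ a singleton. Note in particular that the argument is insensitive to the order chosen for the product $\prod_{i \in A} O_i$ (the individual $O_i$ need not commute among themselves), since only the parity of the number of anti-commuting factors matters. In the same way one would also obtain $[\ux,\, O_A \prod_{i \in A} O_i] = 0$, although this is not needed for the stated corollary.
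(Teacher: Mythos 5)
Your proof is correct and is essentially the paper's own argument: the paper's proof simply cites Theorem~\ref{DOi} and Theorem~\ref{TOA}, and your sign count $(-1)^{|A|}\cdot(-1)^{|A|}=1$ is exactly the intended computation. Your remark that the order of the product is irrelevant for the result also matches the note the paper makes after the corollary.
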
	
	Note that the order of the product matters in general, but not for the result.
	\begin{proof}
		Follows immediately from Theorem~\ref{DOi} and Theorem~\ref{TOA}.
		\end{proof}
		
Expression~\eqref{OA1} shows that the symmetries $O_A$ are constructed using the symmetries $C_{ij}$ and $L_{ij}$ from the previous section, together with the Clifford algebra generators $e_1,\dotsc,e_N$. The factor $1/2$ is chosen such that for a list of $\{1,\dotsc,N\}$ consisting of just two distinct elements $i$ and $j$, the symmetry $O_{ij}$ corresponds to the generalized angular momentum symmetry $L_{ij}$, up to additive terms. Indeed, we have that
\[
O_{ij} =L_{ij}-\epsilon\frac12 e_ie_j+\epsilon\frac12\sum_{l\neq j}C_{li} e_l e_j -\epsilon\frac12\sum_{l\neq i}C_{lj}  e_l e_i.
\]	
This can be written	more compactly by means of the explicit expression~\eqref{Oi} for $O_i$ as
\begin{equation}
	O_{ij} 	 =L_{ij}+\epsilon\frac12 e_ie_j+ O_i e_j - O_j  e_i  	\label{Oij}
\end{equation}	
Together with the Clifford algebra generators $e_1,\dotsc,e_N$, the symmetries $O_i$ with one index and $O_{ij}$ with two indices in fact suffice to build up all other symmetries $O_A$. Indeed, plugging in the expression~\eqref{Oi} for $O_i$, one easily verifies that
\begin{equation}	\label{OA2}
O_A	=   \bigg(\epsilon\frac{|A|-1}{2}+  \epsilon \sum_{i\in A} O_i e_i - \sum_{\{i, j\} \subset A} L_{ij} e_ie_j\bigg) e_A
\end{equation}
reduces to~\eqref{OA1}.  
Using now the form~\eqref{Oij} to substitute $L_{ij}$, the operator $O_A$ can also be written as
	\begin{equation}
	\label{OA3}
O_A	=  \Big(-\epsilon\frac{	(|A|-1)(|A|-2)}{4}- \epsilon (|A|-2)\sum_{i\in A} O_i e_i - \sum_{\{i, j\} \subset A} O_{ij} e_ie_j\Big) e_A.
	\end{equation}	
	
Finally, the symmetries can also be constructed recursively. If we denote by $A\setminus\{a\}$ the list $A$ with the element $a$ omitted, and we define 
$\mathrm{sign}(A,a)$ such that $\mathrm{sign}(A,a)e_{A\setminus\{a\}} e_a = e_A$, then it follows that
\begin{align*}
 \sum_{a\in A} \mathrm{sign}(A,a) O_{A\setminus\{a\}}e_a
 = \ 
 &   \sum_{a\in A} \Big(\epsilon\frac{|A|-2}{2}+ \epsilon \sum_{i\in A\setminus\{a\}} O_i e_i -  \sum_{\{i, j\} \subset A\setminus\{a\}} L_{ij} e_ie_j\Big) e_A
 \\
 = \ 
 &  \Big(\epsilon|A|\frac{|A|-2}{2}+ \epsilon (|A|-1)\sum_{i\in A} O_i e_i - (|A|-2) \sum_{\{i, j\} \subset A} L_{ij} e_ie_j\Big) e_A \\
 = \ 
 & \epsilon\frac{|A|-2}{2}e_A +\epsilon  \sum_{i\in A} O_i e_i  e_A + (|A|-2) O_A.
	\end{align*}
Using this relation, Theorem~\ref{TOA} can also be proved by induction on the cardinality of $A$, starting from $|A|=3$.
%

\subsection{Symmetry algebra}
\label{ssec:3.4}

Before establishing the algebraic structure generated by the symmetries $O_A$, we first introduce some helpful relations with Clifford numbers. From the definition~\eqref{Oi} of $O_j$, we have
	\[
	\{e_i,O_j\}= e_iO_j +O_je_i = [p_i,x_j] - \delta_{ij}.
	\]
The property $[p_i,x_j]=[p_j,x_i]$ then implies that $\{e_i,O_j\}= \{e_j,O_i\}$, or by a reordering of terms $O_ie_j-O_je_i = e_iO_j-e_jO_i$. This is in fact a special case of the following useful result.
\begin{lemm}\label{lemmA}
	In the algebra $\cA\otimes \mathcal{C}$, for $A$ a list of distinct elements of $\{1,\dotsc,N\}$, we have
	\[
	\sum_{a\in A}  O_a    e_a e_A =  e_A\sum_{a\in A}e_aO_a  .   
	\]
\end{lemm}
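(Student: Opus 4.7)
The plan is to substitute the explicit expression \eqref{Oi} for $O_a$, namely $O_a = \frac{\epsilon}{2}\big(\sum_{l=1}^N e_l C_{la} - e_a\big)$, into both sides and reduce the claim to a purely Clifford-algebraic identity. Since each $C_{la}$ lies in $\cA$ and therefore commutes with every Clifford generator, and since $e_a^2=\epsilon$, one computes
\[
O_a e_a = \tfrac{\epsilon}{2}\sum_{l=1}^N e_l e_a\, C_{la} - \tfrac{1}{2}, \qquad e_a O_a = \tfrac{\epsilon}{2}\sum_{l=1}^N e_a e_l\, C_{la} - \tfrac{1}{2}.
\]
Summing over $a\in A$ and multiplying by $e_A$ on the appropriate side, the scalar $-\tfrac{1}{2}$ contributions agree on both sides (each yields $-\tfrac{|A|}{2}e_A$), so the lemma reduces to showing
\[
\sum_{a\in A}\sum_{l=1}^N \bigl(e_l e_a e_A - e_A e_a e_l\bigr)\,C_{la}=0.
\]

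I would then split the inner sum according to whether $l\in A$ or $l\notin A$. For $l\in A$ with $l=a$ the bracket vanishes because $e_a^2=\epsilon$ is central. For $l\in A$ with $l\neq a$, I pair the $(a,l)$ and $(l,a)$ terms, use the symmetry $C_{la}=C_{al}$ from \eqref{cond2b}, and invoke the anti-commutation $e_l e_a + e_a e_l=0$ to conclude the paired contribution cancels. For $l\notin A$ and $a\in A$, the factors $e_l$ and $e_a$ anti-commute, and $e_l$ anti-commutes with every factor of $e_A$, giving $e_l e_A=(-1)^{|A|}e_A e_l$; together with the elementary identity $e_a e_A=(-1)^{|A|-1}e_A e_a$ (obtained by moving $e_a$ to its place inside $e_A$ and using $e_a^2=\epsilon$), a short sign count yields $e_l e_a e_A = e_A e_a e_l$, so these terms vanish individually.

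The only mildly delicate step is the sign bookkeeping in the case $l\notin A$, since one must track two anti-commutation moves simultaneously; however, once the formula $e_a e_A=(-1)^{|A|-1}e_A e_a$ for $a\in A$ is in hand, the two factors of $(-1)^{|A|}$ cancel. No deeper structural input is needed beyond the symmetry $C_{la}=C_{al}$ and the Clifford relations \eqref{Clifford}, so the whole argument is a careful but routine direct verification.
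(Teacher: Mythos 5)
Your proof is correct and follows essentially the same route as the paper: both substitute the explicit expression \eqref{Oi} for $O_a$, split the sum over $l$ according to $l\in A$ versus $l\notin A$, and combine the symmetry $C_{la}=C_{al}$ with the Clifford anti-commutation relations to handle each piece (the paper relabels the double sum over $A\times A$ where you pair the $(a,l)$ and $(l,a)$ terms, but this is the same cancellation). The sign bookkeeping in the $l\notin A$ case, $e_le_ae_A=(-1)^{|A|+1}(-1)^{|A|-1}e_Ae_ae_l=e_Ae_ae_l$, checks out.
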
	
\begin{proof}
	The identity follows by direct calculation using the definition~\eqref{Oi} of $O_a$ and the commutation relations of $e_1,\dotsc,e_N$: 
	\begin{align*}	
		\sum_{a\in A}  O_a    e_a e_A 
		= \ & \sum_{a\in A}  \epsilon\frac12\sum_{l=1}^N  [p_l,x_a]e_le_a e_A 
		- \sum_{a\in A}  \epsilon\frac12 e_a    e_a e_A \\
		= \ & \sum_{a\in A}  \epsilon\frac12\sum_{l\in A}  [p_l,x_a]e_Ae_le_a 
		-\sum_{a\in A}  \epsilon\frac12\sum_{l\notin A}  [p_l,x_a]e_Ae_le_a 
		- \sum_{a\in A}  \frac12 e_A  \\
	= \ & \sum_{a\in A}  \epsilon\frac12\sum_{l\in A}  [p_a,x_l]e_Ae_ae_l 
	+\sum_{a\in A}  \epsilon\frac12\sum_{l\notin A}  [p_l,x_a]e_Ae_ae_l  
	-\sum_{a\in A}  \epsilon\frac12 e_Ae_a    e_a 
	 	=   e_A\sum_{a\in A}e_aO_a.
	\end{align*}
\end{proof}
Note that by means of this lemma, the symmetry $O_A$, in the form~\eqref{OA2}, can equivalently be written with $e_A$ in front, that is 
	\begin{equation*}
	O_{A}   = e_A\Big(\epsilon\frac{|A|-1}{2}+  \epsilon \sum_{i\in A} e_iO_i  - \sum_{\{i, j\} \subset A} L_{ij} e_ie_j\Big) .
	\end{equation*}

\subsubsection{Relations for symmetries with one or two indices}	

Next, we present some relations which hold for symmetries with one or two indices.

\begin{theo}\label{theoOijOk}
	In the algebra $\cA\otimes \mathcal{C}$, for $i,j,k\in\{1,\dotsc,N\}$ we have
\begin{equation*}
	[O_{ij},O_k] + [O_{jk},O_i]+[O_{ki},O_j] = 0.
\end{equation*}
\end{theo}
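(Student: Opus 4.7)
The strategy is to exploit the decomposition
\[
O_{ij} = L_{ij} + \frac{\epsilon}{2} e_i e_j + O_i e_j - O_j e_i
\]
from \eqref{Oij}, which splits the cyclic sum $\sum_{\text{cyc}}[O_{ij},O_k]$ into three independent contributions that I expect to vanish separately.

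For the $L_{ij}$-contribution, since $L_{ij}\in\mathcal{A}$ commutes with every Clifford generator, writing $O_k = \frac{\epsilon}{2}(\sum_l e_l C_{lk} - e_k)$ as in \eqref{Oi} gives $[L_{ij}, O_k] = \frac{\epsilon}{2}\sum_l e_l [L_{ij}, C_{lk}]$. The cyclic sum over $(i,j,k)$ then vanishes for each $l$ by relation \eqref{rel3} of Theorem \ref{theoLAlg}. For the $e_i e_j$-contribution, a short Clifford computation using $[e_i e_j, e_l] = 2\epsilon(\delta_{jl} e_i - \delta_{il} e_j)$ yields
\[
[e_i e_j, O_k] = e_i C_{jk} - e_j C_{ik} + \delta_{ik} e_j - \delta_{jk} e_i,
\]
and the cyclic sum cancels termwise by the symmetry $C_{ij}=C_{ji}$ from \eqref{cond2b} together with that of $\delta_{ij}$.

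The heart of the argument is the third piece, $\sum_{\text{cyc}}[O_i e_j - O_j e_i, O_k]$. I would first verify the clean identity
\[
\{e_j, O_k\} = C_{jk} - \delta_{jk},
\]
obtained directly from \eqref{Oi} using $\{e_j, e_l\}=2\epsilon\delta_{jl}$ and the fact that $C_{lk}$ commutes with $e_j$ in the tensor product. The crucial point is that the right-hand side is symmetric in $j$ and $k$. Using $e_j O_k = \{e_j, O_k\} - O_k e_j$, I rewrite
\[
[O_i e_j, O_k] = O_i\,\{e_j, O_k\} - \{O_i, O_k\}\,e_j = O_i(C_{jk}-\delta_{jk}) - \{O_i, O_k\}\,e_j,
\]
and similarly for $[O_j e_i, O_k]$. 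Taking the cyclic sum over $(i,j,k)$, the $O_\ell(C_{\ell'\ell''}-\delta_{\ell'\ell''})$-terms cancel by $C_{ij}=C_{ji}$ and $\delta_{ij}=\delta_{ji}$, while the $\{O_\ell,O_{\ell'}\}\,e_{\ell''}$-terms cancel by the symmetry of the anticommutator. Hence the third piece also vanishes.

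The only real obstacle is spotting the identity $\{e_j, O_k\} = C_{jk} - \delta_{jk}$ and, with it, the rewriting $[O_i e_j, O_k] = O_i\{e_j, O_k\} - \{O_i, O_k\}\,e_j$. Once these are in place, no computation of $[O_i,O_k]$ itself is required and all cancellations are purely combinatorial, ultimately tracing back to the symmetry $C_{ij}=C_{ji}$ and to relation \eqref{rel3} of Theorem \ref{theoLAlg}.
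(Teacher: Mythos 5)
Your proof is correct and follows essentially the same route as the paper's: both split the cyclic sum via the decomposition \eqref{Oij} into an $L$-part (killed by relation \eqref{rel3} of Theorem~\ref{theoLAlg}), an $e_ie_j$-part (killed by $C_{ij}=C_{ji}$), and an $O_ie_j-O_je_i$-part, the last cancellation resting in both cases on $\{e_i,O_j\}=\{e_j,O_i\}$ (the paper rewrites the right factor as $e_iO_j-e_jO_i$ and cancels the twelve terms pairwise, whereas you push $e_j$ past $O_k$ via $\{e_j,O_k\}=C_{jk}-\delta_{jk}$ — a cosmetic difference). The only omission is the degenerate case of coinciding indices, where \eqref{Oij} does not apply but the identity is trivially true since $O_{ij}=-O_{ji}$.
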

\begin{proof}
If any of the indices are equal, the identity becomes trivial as $O_{ij}=-O_{ji}$.	For distinct $i,j,k$, we have by \eqref{Oi}, \eqref{Oij} and using $O_ie_j-O_je_i = e_iO_j-e_jO_i$
\begin{align*}
	[O_{ij},O_k] + [O_{jk},O_i]+[O_{ki},O_j]  =   \ 	&  \epsilon\frac12\sum_{l=1}^N  e_l(	[L_{ij} ,C_{lk}]+[L_{jk},C_{li}] +[L_{ki},C_{lj}])\\*
	 &  + \epsilon \frac12 ([e_ie_j,O_k] + [e_je_k,O_i]+[ e_ke_i,O_j])  \\* 
	&  +    (O_i e_j - O_j  e_i)O_k-O_k(e_i O_j - e_j  O_i)  +  (O_j e_k - O_k  e_j)O_i\\* 
	&  -O_i (e_j O_k - e_k  O_j) +  (O_k e_i - O_i  e_k)O_j-O_j(e_k O_i - e_i  O_k) .
\end{align*}
The first line of the right-hand side vanishes by Theorem~\ref{theoLAlg}, while the second line does so by direct calculation plugging in the definition of~$O_i$ and using $C_{ij}=C_{ji}$. Finally, the remaining terms of the last two lines cancel out pairwise.
\end{proof}

For the next result, we first write out the form~\eqref{OA2} of $O_A$ for $A$ a list of three elements, say $i,j,k$ which are all distinct:
	\begin{equation}
	O_{ijk} 	 = \epsilon e_ie_je_k+   O_i e_je_k-O_j e_ie_k+O_k e_ie_j   +L_{ij} e_k -L_{ik} e_j +L_{jk} e_i.	\label{Oijk}
	\end{equation}		
The commutation relations for symmetries with two indices are as follows.
\begin{theo}\label{theoAlgOii}
	In the algebra $\cA\otimes \mathcal{C}$, for  $i,j,k,l\in\{1,\dotsc,N\}$ the symmetries 
satisfy
\begin{align*}
	[O_{ij},O_{kl}]  
	=\ & (O_{il}+ \epsilon [O_i,O_l])\delta_{jk} +(O_{jk}+ \epsilon [O_j,O_k])\delta_{il}+(O_{ki}+ \epsilon [O_k,O_i] )\delta_{lj}+(O_{lj}+ \epsilon [O_l,O_j])\delta_{ki} 
	\\ 
	&+\frac12( \{O_i,O_{jkl}\}
	-\{O_j,O_{ikl}\}-\{O_{ijl},O_k\} 
	+\{O_{ijk},O_l\}
	).
\end{align*}
\end{theo}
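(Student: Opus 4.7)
The plan is to expand the left-hand side via the explicit decomposition \eqref{Oij}, writing $O_{ij} = L_{ij} + \tfrac{\epsilon}{2}e_ie_j + O_ie_j - O_je_i$ and analogously for $O_{kl}$. Bilinearity then splits $[O_{ij},O_{kl}]$ into sixteen sub-commutators, which I regroup into four types: a pure $L$-$L$ commutator, a pure Clifford commutator $\tfrac{\epsilon^2}{4}[e_ie_j,e_ke_l]$, mixed commutators of $L$ with $Oe$, and commutators of $Oe$ with $Oe$. The $L$-$L$ piece is supplied directly by Theorem~\ref{theoLAlg}, yielding $L_{il}C_{jk}+L_{jk}C_{il}+L_{ki}C_{lj}+L_{lj}C_{ki}$, while the pure Clifford piece reduces via \eqref{Clifford} to $\epsilon(\delta_{jk}e_ie_l-\delta_{ik}e_je_l-\delta_{jl}e_ie_k+\delta_{il}e_je_k)$.

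For the mixed and pure $Oe$ commutators, the key identity is $\{e_i,O_j\}=C_{ij}-\delta_{ij}$, which follows immediately from the explicit form \eqref{Oi} of $O_j$ together with $C_{ij}=C_{ji}$ (relation \eqref{cond2b}). Combined with Lemma~\ref{lemma2}, this lets me rewrite every mixed commutator as a combination of $L_{\bullet\bullet}$, $O_\bullet$ and Clifford monomials. On the right-hand side, I will expand each $O_{ijk},O_{jkl},\dots$ via \eqref{Oijk}, so that the four symmetrized anticommutators become sums of $\{O_\bullet,L_{\bullet\bullet}\}e_\bullet$, $\{O_\bullet,O_\bullet\}e_\bullet e_\bullet$ and Clifford terms. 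The crucial rewriting is $C_{jk}=\{e_j,O_k\}+\delta_{jk}$, which splits each $L_{il}C_{jk}$ contribution from Theorem~\ref{theoLAlg} into an $L_{il}\delta_{jk}$ piece and a piece that, together with the $Oe$-commutator contributions, reassembles into $\epsilon[O_i,O_l]\delta_{jk}$ on the diagonal and into the symmetrized anticommutators off the diagonal.

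To keep the computation manageable I would first treat the case where $i,j,k,l$ are pairwise distinct, in which all four Kronecker-delta diagonal terms vanish and only the anticommutator sum on the RHS needs to be reproduced. Matching then amounts to a comparison, at each Clifford degree (scalar, bivector, quadrivector), of the $L$-$C$-$O$-$e$ combinations produced on either side. The coincident-index cases can then be handled separately, using $O_{ii}=0$ and the cyclic identity of Theorem~\ref{theoOijOk} to cross-check the diagonal $\delta$-contributions and to eliminate duplications.

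The main obstacle is the bookkeeping: every reordering of Clifford generators through \eqref{Clifford} generates both a sign flip and a Kronecker-delta term, and the expansion of the four anticommutators on the RHS via \eqref{Oijk} produces many summands whose Clifford degrees must be matched carefully against the LHS. A potentially cleaner alternative is to use the compact form \eqref{OA} of $O_{ij}$, namely $O_{ij}=\tfrac12(\uD\,\ux_{\{i,j\}}e_ie_j - e_ie_j\ux_{\{i,j\}}\uD-\epsilon e_ie_j)$, and work entirely inside the $\mathfrak{osp}(1|2)$-framework of Theorem~\ref{theoosp}, exploiting $\uD^2=\epsilon\Delta$ and the symmetries from Section~\ref{sec:2}; however, the Clifford expansion outlined above appears to be the most transparent route to recognizing the precise combination of $\delta$-terms and anticommutators on the RHS.
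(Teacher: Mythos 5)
Your proposal follows essentially the same route as the paper's proof: expand both $O_{ij}$ and $O_{kl}$ via \eqref{Oij}, feed the $L$--$L$ commutator through Theorem~\ref{theoLAlg}, convert $C_{jk}$ into $\{e_j,O_k\}+\delta_{jk}$ via $\{e_i,O_j\}=[p_i,x_j]-\delta_{ij}$, and reassemble the pieces into the three-index symmetries \eqref{Oijk}; the paper additionally invokes Lemma~\ref{lemmA} (rather than Lemma~\ref{lemma2}) to handle the $[O_ie_j-O_je_i,O_ke_l-O_le_k]$ block. The only slip is a factor of $\tfrac12$ in your reduction of the pure Clifford piece $\tfrac{\epsilon^2}{4}[e_ie_j,e_ke_l]$, which should read $\tfrac{\epsilon}{2}(\delta_{jk}e_ie_l+\dotsb)$, but this does not affect the viability of the argument.
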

\begin{proof}
For the cases where $i=j$ or $k=l$ or $\{i,j\}=\{k,l\}$, both sides of the equation reduce to zero, so from now on we assume that $i\neq j$ and $k\neq l$ and $\{i,j\}\neq\{k,l\}$. 
Plugging in \eqref{Oij}, we have
\begin{align*}
	[O_{ij},O_{kl}]  =   \ &\left[ L_{ij}+\epsilon\frac12 e_ie_j+ O_i e_j - O_j  e_i 
	,L_{kl} +\epsilon\frac12 e_ke_l+ O_k e_l - O_l  e_k \right]   \\
	=\ & [L_{ij},L_{kl}] 
	+ \left[ L_{ij} , O_k e_l - O_l  e_k \right]  
	+ \left[ O_i e_j - O_j  e_i ,L_{kl}  \right]+[  O_i e_j - O_je_i , O_k e_l - O_le_k ] \\*
	&+\frac14\left[  e_ie_j, e_ke_l \right]+\epsilon\frac12\left[  e_ie_j ,  O_k e_l \right] 
	+\epsilon\frac12\left[ e_ie_j ,   - O_l e_k \right] 
	+\epsilon\frac12\left[  O_i e_j ,  e_ke_l \right]
	+\epsilon\frac12\left[   - O_j  e_i ,  e_ke_l \right]	
\end{align*}
By Theorem~\ref{theoLAlg}, and using $\{e_i,O_j\} = [p_i,x_j] - \delta_{ij}$, we have 
\begin{align*}
[ L_{ij}, L_{kl}] = \ &  L_{il}\delta_{jk} +L_{jk}\delta_{il}+L_{ki}\delta_{lj}+L_{lj}\delta_{ki}\\*
 & + L_{il}\{e_j,O_k\} +L_{jk}\{e_i,O_l\}+L_{ki}\{e_l,O_j\}+L_{lj}\{e_k,O_i\}.
\end{align*}
Using $\{e_i,O_j\} = \{e_j,O_i\}=\frac12\{e_i,O_j\}+ \frac12\{e_j,O_i\} $, the terms in the last line can be rewritten as
\begin{align*}
 & \frac12( (L_{il}e_j+L_{lj}e_i)O_k +(L_{jk}e_i+L_{ki}e_j)O_l+(L_{lj}e_k+L_{jk}e_l)O_i+(L_{ki}e_l+L_{il}e_k)O_j) \\
 & +\frac12( (L_{jk}O_l+L_{lj}O_k)e_i +(L_{il}O_k+L_{ki}O_l)e_j+(L_{lj}O_i+L_{il}O_j)e_k+(L_{ki}O_j+L_{jk}O_i)e_l) .
\end{align*}
Together with 
\begin{align*}
[ 	L_{ij} , O_k e_l - O_l  e_k ]   =\  & 
 	[L_{ij} , O_k ]e_l -[L_{ij} , O_l]  e_k =   \{L_{ij}e_k,O_l\} - \{L_{ij}e_l,O_k\}
 \\
[ 	 O_i e_j - O_j  e_i ,	L_{kl}  ]   =\ & 
[O_i , L_{kl}]  e_j -[ O_j, L_{kl}]   e_i
   =  \{L_{kl}e_j,O_i\} - \{L_{kl}e_i,O_j\}
,
\end{align*}
we find that $[ L_{ij}, L_{kl}] + [ 	L_{ij} , O_k e_l - O_l  e_k ]  + [ 	 O_i e_j - O_j  e_i ,	L_{kl}  ]$ equals
\begin{align*}
&	 L_{il}\delta_{jk} +L_{jk}\delta_{il}+L_{ki}\delta_{lj}+L_{lj}\delta_{ki}    
\\* 
&+\frac12( (L_{il}e_j+L_{lj}e_i)O_k +  \{- L_{ij}e_l,O_k\} +(L_{jk}e_i+L_{ki}e_j)O_l +  \{  L_{ij}e_k,O_l\}\\* 
&
+(L_{lj}e_k+L_{jk}e_l)O_i+   \{   O_i,  L_{kl}e_j\}
+(L_{ki}e_l+L_{il}e_k)O_j + \{  O_j , - L_{kl}e_i\})  \\*
& +\frac12( (L_{jk}O_l+L_{lj}O_k+ [	L_{kl} , O_j]  )e_i 
+(L_{il}O_k+L_{ki}O_l- [	L_{kl} , O_i])e_j \\*
&+(L_{lj}O_i+L_{il}O_j-[	L_{ij}, O_l ]  )e_k 
+(L_{ki}O_j+L_{jk}O_i+[	L_{ij}, O_k ])e_l  ) .
\end{align*}
This simplifies to
\begin{align*}
&	 L_{il}\delta_{jk} +L_{jk}\delta_{il}+L_{ki}\delta_{lj}+L_{lj}\delta_{ki}  +\frac12( \{L_{il}e_j+L_{lj}e_i+L_{ji}e_l,O_k\} \\* 
& 
+\{L_{jk}e_i+L_{ki}e_j+L_{ij}e_k,O_l\}+\{O_i,L_{lj}e_k+L_{jk}e_l+L_{kl}e_j\}+\{O_j,L_{ki}e_l+L_{il}e_k+L_{lk}e_i\}
) ,
\end{align*}
by means of 
\begin{align*}
&  ([L_{jk},O_l]+[L_{lj},O_k]+ 	[L_{kl} , O_j ] )e_i 
+([L_{il},O_k]+[L_{ki},O_l]- [	L_{kl} , O_i])e_j \\*
&+([L_{lj},O_i]+[L_{il},O_j]-[	L_{ij}, O_l]   )e_k 
+([L_{ki},O_j]+[L_{jk},O_i]+[	L_{ij}, O_k] )e_l  
=0 ,
\end{align*}
which is a direct consequence of Theorem~\ref{theoLAlg} after plugging in the definition~\eqref{Oi} of $O_i$.

Now, the other terms appearing in $[O_{ij},O_{kl}]$ can be expanded as follows. First,  \[
	[  e_ie_j, e_ke_l ]=2\epsilon( \delta_{jk} e_ie_l -\delta_{il}e_ke_j-\delta_{lj}e_ie_k+\delta_{ik}e_le_j ). 
\]	
Moreover, we have
\begin{align*}
		[  e_i e_j , O_k e_l]= \ & e_i e_jO_k e_l - O_k e_le_i e_j \\
		= \ & e_i e_j \{e_l,O_k\}-e_i e_j e_lO_k + O_k e_ie_l e_j - O_k (\epsilon2\delta_{il}) e_j \\
		= \ & e_i e_j \{e_l,O_k\}-e_i e_j e_lO_k  - O_k e_i e_je_l+ O_k e_i(\epsilon2\delta_{jl}) - O_k (\epsilon2\delta_{il}) e_j
		.
\end{align*} 
As $\{e_k,O_l\}= \{e_l,O_k\}$, after interchanging $k$ and $l$ in this result, subtraction yields the following
\[ 
[  e_i e_j , O_k e_l-O_le_k]= 
\delta_{jl} \epsilon 2O_ke_i- \delta_{il} \epsilon 2O_ke_j 
-\delta_{jk} \epsilon 2O_le_i+ \delta_{ik} \epsilon 2O_le_j  +\{e_ie_je_k,O_l\}  -\{e_ie_je_l,O_k\}.
\]

For the last term, $[  O_i e_j - O_je_i , O_k e_l - O_le_k ]$, we use Lemma~\ref{lemmA} to find
\begin{align*}
		2	[  O_i e_j - O_je_i , O_k e_l - O_le_k ]= \ &  
		2(O_i e_j - O_je_i )(e_kO_l - e_l O_k  )
		- 2( O_k e_l - O_le_k )(e_iO_j  - e_jO_i ) \\
		= \ &  (  O_k e_le_j - O_le_ke_j )O_i+ O_i (e_je_k  O_l  - e_je_l O_k  ) \\*
		& + ( - O_k e_le_i + O_le_ke_i )O_j+O_j(- e_i e_kO_l +e_i e_l O_k  )  \\*
		& +  (-O_i e_je_l + O_je_i e_l )O_k +O_k(- e_le_iO_j  + e_le_jO_i  )\\*
		&+ (  O_i e_j e_k- O_je_i e_k )  O_l+ O_l(e_k e_iO_j  -e_k e_jO_i   ).
\end{align*}
We first consider the case where one of $i,j$ is equal to either $k$ or $l$, for instance, say $i=l$:
\begin{align*}
	 &  (  O_k e_ie_j - O_ie_ke_j )O_i+ O_i (e_je_k  O_i  - e_je_i O_k  )  + ( - O_k \epsilon + O_ie_ke_i )O_j+O_j(- e_i e_kO_i +\epsilon O_k  )  \\*
		& +  (-O_i e_je_i + O_j\epsilon )O_k +O_k(- \epsilon O_j  + e_ie_jO_i  ) +(  O_i e_j e_k- O_je_i e_k )  O_i+ O_i(e_k e_iO_j  -e_k e_jO_i   ) \\	
		= \ &  (  O_k e_ie_j - O_ie_ke_j-O_j e_i e_k+O_i e_j e_k- O_je_i e_k+O_k  e_ie_j  )O_i\\*		
		&  
		+ O_i (e_je_k  O_i  - e_je_i O_k +   e_ke_i O_j+e_k e_iO_j  -e_k e_jO_i  - e_je_i  O_k  )    +\epsilon 2[O_j,O_k] 
\end{align*}	
Using Lemma~\ref{lemmA}, this equals	
\begin{align*}	
	 \ &  (  O_k e_ie_j - O_ie_ke_j-O_j e_i e_k+  O_i e_j e_k- O_je_i e_k+O_k  e_ie_j )O_i\\*
	&+ O_i (  O_ie_je_k  -  O_ke_je_i +   O_je_ke_i + O_je_k e_i  -O_ie_k e_j  -   O_ke_je_i )  +\epsilon 2[O_j,O_k]     	 \\
	= \ &  2\{  O_k e_ie_j -O_j e_i e_k+  O_i e_j e_k ,O_i\}  +\epsilon 2[O_j,O_k]     		.
\end{align*}		
Finally, when $i,j,k,l$ are all distinct, $	2	[  O_i e_j - O_je_i , O_k e_l - O_le_k ]$ equals
\begin{align*}
		 & (  O_k e_le_j - O_le_ke_j )O_i+ O_i (-e_ke_j  O_l  + e_le_j O_k +e_ke_l  O_j -e_ke_l  O_j)  
		\\*
		& +( - O_k e_le_i + O_le_ke_i )O_j +  O_j( e_ke_i O_l-e_le_i  O_k +e_le_k O_i-e_le_k O_i )
		\\*
		&  +(-O_i e_je_l + O_je_i e_l )O_k+ O_k( e_ie_lO_j  - e_je_lO_i +e_je_iO_l  - e_je_iO_l ) 
		\\
		&  +(  O_i e_j e_k- O_je_i e_k )  O_l+ O_l(-e_ie_k O_j  +e_je_k O_i  +e_ie_j O_k-e_ie_j O_k ) \\
		= \ &  (  O_k e_le_j - O_le_ke_j -O_je_le_k )O_i + O_i (-O_l e_ke_j   +  O_ke_le_j + O_je_ke_l  )
		\\*
		& +( - O_k e_le_i + O_le_ke_i - O_ie_ke_l )O_j + O_j(  O_le_ke_i-O_ke_le_i   + O_ie_le_k ) 
		\\*
		&  +( -O_i e_je_l + O_je_i e_l -O_le_ie_j)O_k + O_k( O_je_ie_l  -O_i e_je_l +O_le_je_i  )
		\\*
		& + (  O_i e_j e_k- O_je_i e_k - O_ke_je_i )  O_l + O_l(-O_je_ie_k   +O_ie_je_k   +O_ke_ie_j  )\\
		= \ &  \{  O_k e_le_j - O_le_ke_j + O_je_ke_l  ,O_i\}
		+\{  O_le_ke_i - O_k e_le_i + O_ie_le_k  ,O_j\} 
		\\*
		& + \{   O_je_i e_l -O_i e_je_l+O_le_je_i ,O_k \}
		+ \{  O_i e_j e_k- O_je_i e_k +O_ke_ie_j,O_l \}  .
\end{align*}
This boils down to $[  O_i e_j - O_je_i , O_k e_l - O_le_k ]$ being equal to
\begin{align*}
& 	 \epsilon\delta_{jk}[O_i , O_l]  + \epsilon\delta_{il}[O_j, O_k]+\epsilon\delta_{lj}[O_k , O_i]  +  \epsilon\delta_{ki}[O_l,O_j] \\*
	& + \frac12(\{  O_k e_le_j - O_le_ke_j + O_je_ke_l  ,O_i\}
	+\{  O_le_ke_i - O_k e_le_i + O_ie_le_k  ,O_j\} 
	\\*
	& + \{   O_je_i e_l -O_i e_je_l+O_le_je_i ,O_k \}
	+ \{  O_i e_j e_k- O_je_i e_k +O_ke_ie_j,O_l \}  ).
\end{align*}

Hence, combining all of the above, we find
	\begin{align*}
[O_{ij},O_{kl}]	=\ & L_{il}\delta_{jk} +L_{jk}\delta_{il}+L_{ki}\delta_{lj}+L_{lj}\delta_{ki}  +\frac12( \{L_{il}e_j+L_{lj}e_i+L_{ji}e_l,O_k\} 
	\\*
	& +\{L_{jk}e_i+L_{ki}e_j+L_{ij}e_k,O_l\}
	+\{L_{ki}e_l+L_{il}e_k+L_{lk}e_i,O_j\}\\*
	&
	+\{L_{lj}e_k+L_{jk}e_l+L_{kl}e_j,O_i\})  +\epsilon\frac12( \delta_{jk} e_ie_l -\delta_{il}e_ke_j-\delta_{lj}e_ie_k+\delta_{ik}e_le_j ) \\*
 & +\epsilon\delta_{jk}[O_i , O_l]  + \epsilon\delta_{il}[O_j, O_k]+\epsilon\delta_{lj}[O_k , O_i]  +  \epsilon\delta_{ki}[O_l,O_j] \\*
 &  +\frac12(\{  O_k e_le_j - O_le_ke_j + O_je_ke_l  ,O_i\}
 +\{  O_le_ke_i - O_k e_le_i + O_ie_le_k  ,O_j\} 
 \\*
 & + \{   O_je_i e_l -O_i e_je_l+O_le_je_i ,O_k \}
 + \{  O_i e_j e_k- O_je_i e_k +O_ke_ie_j,O_l \}  )\\*
	&+\epsilon\frac12( \delta_{jl} \epsilon 2O_ke_i - \delta_{il} \epsilon 2O_ke_j  - \{e_ie_je_l,O_k\}  -\delta_{jk} \epsilon 2O_le_i + \delta_{ik} \epsilon 2O_le_j + \{e_ie_je_k,O_l\} ) \\*
	&
	-\epsilon\frac12( \delta_{lj} \epsilon 2O_ie_k - \delta_{kj} \epsilon 2O_ie_l - \{e_ke_le_j,O_i\}  -   \delta_{li} \epsilon 2O_je_k+ \delta_{ki} \epsilon 2O_je_l + \{e_ke_le_i,O_j\} ).
		\end{align*}
Collecting the appropriate terms, we recognize all ingredients to make symmetries with three indices \eqref{Oijk} and we arrive at the desired result	
	\begin{align*}
	[O_{ij},O_{kl}]	=\ & (O_{il}+ \epsilon [O_i,O_l])\delta_{jk} +(O_{jk}+ \epsilon [O_j,O_k])\delta_{il}+(O_{ki}+ \epsilon [O_k,O_i] )\delta_{lj}+(O_{lj}+ \epsilon [O_l,O_j])\delta_{ki} 
	\\*
	&+\frac12( \{O_i,O_{jkl}\}
	-\{O_j,O_{ikl}\} -\{O_{ijl},O_k\} 
	+\{O_{ijk},O_l\}
	).\qedhere
	\end{align*}
\end{proof}

In summary, for $i,j,k,l$ all distinct elements of $\{1,\dotsc,N\}$ we have
	\begin{align*}
	[O_{ij},O_{ki}]  
	=\ & O_{jk}+ \epsilon [O_j,O_k] 
	+\{O_{ijk},O_i\}
	\\ 
	[O_{ij},O_{kl}]  
	= \ 	&\frac12( \{O_i,O_{jkl}\}
	-\{O_j,O_{ikl}\}-\{O_{ijl},O_k\} 
	+\{O_{ijk},O_l\}
	).
	\end{align*}

\subsubsection{Relations for symmetries with general index}	
	
Now, we are interested in relations for general symmetries with an arbitrary index list $A$, i.e.
\[
	O_{A}  =  \frac{1}{2}\big(\uD\, \ux_{A} e_A -e_A  \ux_{A}  \uD - \epsilon e_A \big).
\]
Before doing so, we make a slight detour clearing up some conventions and notations. 
An important fact to take into account is the interaction of the appearing Clifford numbers with different lists as index (recall Definition~\ref{deflist}). For instance, if $A$ and $B$ denote two lists of $\{1,\dotsc,N\}$, the following properties are readily shown to hold by direct computation:
\begin{equation}\label{eAeB}
e_Be_A =  (-1)^{|A||B|-|A\cap B|}e_Ae_B,
\end{equation}
and
\begin{equation}\label{eA2}
(e_A)^2 = \epsilon^{|A|}(-1)^{\frac{|A|^2-|A|}{2}}.
\end{equation}
The product $e_Ae_B$ in fact reduces (up to a sign) to contain only $e_i$ with $i$ an index in the set ${(A\cup B)\setminus(A\cap B)}$, since all indices in $A\cap B$ appear twice and cancel out (for these set operations we disregard the order of the lists $A$ and $B$ and view them just as sets). The remaining indices are what is called the symmetric difference of the sets $A$ and $B$. We will denote this associative operation by $A\triangle B = (A\cup B)\setminus(A\cap B) = (A\setminus B)\cup(B\setminus A)$. 
When applied to two lists $A$ and $B$, we view them as sets and the resultant $A\triangle B$ is a set. 
Note that $|A\triangle B| = |A| +|B|-2|A\cap B|$.  

When dealing with interactions between symmetries $O_A$ and $O_B$ for two lists $A$ and $B$, products of the kind $e_Ae_B$ are exactly what we encounter. 
Not wanting to overburden notations, but still taking into account all resulting signs due to the anticommutation relations~\eqref{Clifford} if one were to work out the reduction $e_Ae_B$, we introduce the following notation
\begin{equation}\label{OAB}
O_{A, B}  =  \frac{1}{2}\big(\uD\, \ux_{A\triangle B} e_Ae_B -e_Ae_B  \ux_{A\triangle B}  \uD - \epsilon e_Ae_B \big).
\end{equation}	
Note that the order of $A$ and $B$ matters, as by~\eqref{eAeB} we have $O_{A, B} = (-1)^{|A||B|-|A\cap B|}O_{B, A}$. Moreover, up to a sign $O_{A, B}$ is equal to $O_{A\triangle B}$, where the elements of $A\triangle B$ are in ascending order when used as a list, see Remark~\ref{remaeA}. 
Since the symmetric difference operation on sets is associative, one easily extends this definition to an arbitrary number of lists, e.g.
\[
O_{A, B,C}  =  \frac{1}{2}\big(\uD\, \ux_{A\triangle B\triangle C} e_Ae_Be_C -e_Ae_Be_C  \ux_{A\triangle B\triangle C}  \uD - \epsilon e_Ae_Be_C \big).
\]	
As an example, if we consider the lists $A=234$ and $B=31$, then the set $A\triangle B$ contains the elements $2,4,1$ but not $3$ as $3$ appears in both $A$ and $B$, so we have
\begin{align*}
O_{234,31}  = \ & \frac{1}{2}\big(\uD\, \ux_{\{2,4,1\}} e_{234}e_{31} -e_{234}e_{31}  \ux_{\{2,4,1\}}  \uD - \epsilon e_{234}e_{31} \big)\\
= \ & - \epsilon \frac{1}{2}\big(\uD\, (x_2e_2 + x_4e_4 + x_1e_1) e_{241} -e_{241}  (x_2e_2 + x_4e_4 + x_1e_1)  \uD - \epsilon e_{241} \big) \\
= \ & - \epsilon O_{241} = - \epsilon O_{124}.
\end{align*}

We elaborate on one final convention. If $A$ and $B$ denote two lists of $\{1,\dotsc,N\}$, then when viewed as ordinary sets, the intersection $A\cap B$ contains all elements of $A$ that also belong to $B$ (or equivalently, all elements of $B$ that also belong to $A$). As a list appearing as index of a Clifford number, we distinguish between $A\cap B$ and $B\cap A$ in the sense that we understand the elements of $A\cap B$ to be in the sequential order of $A$, while those of $B\cap A$ are in the sequential order of $B$. If $A=124$ and $B=231$, then $e_{A\cap B} = e_{12}$, while $e_{B\cap A} = e_{21}$.

The framework where operators of the form~\eqref{OAB} make their appearance is one where both commutators and anticommutators are considered. Inspired by property~\eqref{eAeB}, we define the ``supercommutators''
\begin{align}\label{scom}
	\llbracket O_{A},O_{B}\rrbracket_- =\ &  O_AO_B - (-1)^{|A||B|-|A\cap B|} O_BO_A,
\\ \label{sacom}
	\llbracket O_{A},O_{B}\rrbracket_+ = \ & O_AO_B + (-1)^{|A||B|-|A\cap B|} O_BO_A.
\end{align}
The algebraic relations we obtained in Theorem~\ref{theoAlgOii} can now be generalized to higher index versions. We start with a generalization of Theorem~\ref{theoOijOk}. 
\begin{theo}\label{OAa}
	In the algebra $\cA\otimes \mathcal{C}$, for $A$ a list of distinct elements of $\{1,\dotsc,N\}$, we have
\begin{equation*}
	\sum_{a\in A} \llbracket O_{a},O_{a, A}\rrbracket_-  = 0  . 
\end{equation*}
\end{theo}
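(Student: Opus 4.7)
My strategy is to reduce the claim to an alternating identity over $A$ that specializes, at $|A|=3$, to Theorem~\ref{theoOijOk}.

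The first step is to rewrite each $O_{a,A}$ (for $a\in A$) as a scalar multiple of $O_{A\setminus\{a\}}$. Writing $A=a_1\cdots a_n$ and taking $a=a_k$, commuting $e_{a_k}$ past $e_{a_1},\ldots,e_{a_{k-1}}$ gives the Clifford identity
\[
e_{a_k}e_A=(-1)^{k-1}\epsilon\, e_{A\setminus\{a_k\}},
\]
and since $a\triangle A = A\setminus\{a\}$, substituting into the definition~\eqref{OAB} and pulling out the common factor yields
\[
O_{a_k,A} = (-1)^{k-1}\,\epsilon\, O_{A\setminus\{a_k\}}.
\]

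Next, I would identify the sign governing the supercommutator~\eqref{scom} for $\llbracket O_a,O_{a,A}\rrbracket_-$. The effective Clifford grading of $O_{a,A}$ is $|A|-1$ and $a\notin A\setminus\{a\}$, so the relevant weight is $(-1)^{|A|-1}$; thus $\llbracket O_a,O_{a,A}\rrbracket_-$ is the ordinary commutator when $|A|$ is odd and the anti-commutator when $|A|$ is even. A quick sanity check at $|A|=2$ confirms the parity: the alternative sign would force $[O_i,O_j]=0$, which is not generically true. Combined with the reduction above, the statement becomes equivalent to the alternating identity
\[
\sum_{k=1}^{|A|}(-1)^{k-1}\bigl(O_{a_k}\,O_{A\setminus\{a_k\}} - (-1)^{|A|-1}\,O_{A\setminus\{a_k\}}\,O_{a_k}\bigr) = 0.
\]

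The cases $|A|\le 2$ are immediate (scalarity of $O_{\emptyset}$ and symmetry of the anti-commutator, respectively), and $|A|=3$ is precisely Theorem~\ref{theoOijOk} after applying $O_{ij}=-O_{ji}$ to adjust signs. For $|A|\ge 4$ I would expand each $O_{A\setminus\{a_k\}}$ via the explicit form~\eqref{OA2} and collect contributions by homogeneity type: the pure Clifford piece, the terms linear in $O_i$, and those linear in $L_{ij}$. Lemma~\ref{lemmA} provides the reordering identities that align the Clifford factors across summands, while the $L_{ij}$-contributions collapse via the identities of Theorem~\ref{theoLAlg}. An alternative route is induction on $|A|$ using the recursion for $O_A$ derived at the end of Section~\ref{sec:3.2}, reducing the general case to the three-index identity already established.

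The main obstacle lies in the last step for $|A|\ge 4$: each $O_{A\setminus\{a_k\}}$ contributes $O(|A|^2)$ terms, so the alternating sum has many summands whose pairwise cancellations hinge on careful bookkeeping of both the $(-1)^{k-1}$ factor and the signs produced by Clifford reorderings. The underlying mechanism is uniform, however: every surviving triple of contributions reduces, after applying Lemma~\ref{lemmA}, to an instance of the three-index relation of Theorem~\ref{theoOijOk}, so that the homogeneous pieces vanish componentwise.
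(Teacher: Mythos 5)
Your reduction is sound and matches the paper's own setup: the identity $e_{a_k}e_A=(-1)^{k-1}\epsilon\,e_{A\setminus\{a_k\}}$, hence $O_{a_k,A}=(-1)^{k-1}\epsilon\,O_{A\setminus\{a_k\}}$, the sign $(-1)^{|A|-1}$ governing the supercommutator, and the base cases $|A|\le 3$ (with $|A|=3$ being exactly Theorem~\ref{theoOijOk}) are all correct. The paper's proof likewise proceeds by substituting \eqref{OA2} and \eqref{Oi} and invoking Lemma~\ref{lemmA} and Theorem~\ref{theoLAlg}, so up to this point you are on the same track.

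The gap is that the case $|A|\ge 4$ --- the entire content of the theorem beyond Theorem~\ref{theoOijOk} --- is not carried out, and the one concrete mechanism you propose for it, namely that ``every surviving triple of contributions reduces, after applying Lemma~\ref{lemmA}, to an instance of the three-index relation of Theorem~\ref{theoOijOk}'', does not describe how the cancellation actually works. After substituting \eqref{OA2}, the alternating sum splits into homogeneous pieces that vanish for \emph{different} reasons: the scalar piece $\epsilon\frac{|A|-2}{2}\big(\sum_{a\in A} O_ae_ae_A - e_A\sum_{a\in A} e_aO_a\big)$ dies by Lemma~\ref{lemmA}; the quadratic cross-terms $\sum_{a\in A}\sum_{b\in A\setminus\{a\}}O_ae_ae_Ae_bO_b-\dotsb$ cancel by interchanging the two summation indices, not by any three-index relation; the commutator terms $[L_{ij},C_{la}]$ with $l\notin A\setminus\{a,i,j\}$ assemble over three-element subsets of $A$ into the cyclic identity \eqref{rel3} of Theorem~\ref{theoLAlg}; and the anticommutator terms $\{L_{ij},C_{la}\}$ with $l\in A\setminus\{a,i,j\}$ must be regrouped over \emph{four}-element subsets of $A$ and cancel using $C_{ij}=C_{ji}$ together with $L_{ij}=-L_{ji}$ --- a mechanism with no counterpart in the three-index case. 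So the claim that the general case decomposes componentwise into copies of Theorem~\ref{theoOijOk} is unsubstantiated and, for the four-subset piece, incorrect; the alternative induction via the recursion at the end of Section~\ref{sec:3.2} is likewise only named, not performed. As written, the proposal establishes the theorem only for $|A|\le 3$.
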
 
Note that $O_{a, A}$ is up to a sign equal to $O_{ A\setminus\{a\}  }$ since $a\in A$, where $A\setminus\{a\}$ is the list $A$ with the element $a$ removed.
\begin{proof}
By direct calculation and using \eqref{Oi}, \eqref{OA2} and Lemma~\ref{lemmA} to arrive at the second line we have
\begin{align*}
		& 	\sum_{a\in A} (O_{a}O_{a, A}  -(-1)^{|A|-1}  O_{a, A} O_{a}) 	
	=   \     \epsilon \frac{|A|-2}{2} \Big(\sum_{a\in A}O_ae_ae_A-(-1)^{|A|-1} \sum_{a\in A} e_ae_A O_{a} \Big)  \\* 
	&  \qquad \qquad \qquad \qquad \qquad \qquad\qquad+ \epsilon\sum_{a\in A}   \sum_{b\in A\setminus\{a\}}  O_{a} e_ae_{A}e_bO_b   -(-1)^{|A|-1} \epsilon\sum_{a\in A}   \sum_{b\in A\setminus\{a\}} O_b e_b e_a e_{A} O_a	\\*
		&  +\epsilon\frac12	\sum_{a\in A} 	\sum_{\{i,j\}\subset A\setminus\{a\}}\sum_{l\notin A\setminus\{a,i,j\}}  	[L_{ij} ,C_{la}]e_le_ie_je_ae_A   +  \epsilon\frac12	\sum_{a\in A} 	\sum_{\{i,j\}\subset A\setminus\{a\}}\sum_{l\in A\setminus\{a,i,j\}}  	\{L_{ij} ,C_{la}\}e_le_ie_je_ae_A \\* 
	&  -  \epsilon\frac12	\sum_{a\in A} 	\sum_{\{i,j\}\subset A\setminus\{a\}} 	L_{ij} (e_a e_ie_je_ae_A - 	(-1)^{|A|-1}  e_ie_je_ae_A e_a).
\end{align*}	
As $(-1)^{|A|-1}e_ae_A = e_Ae_a$ for $a\in A$, the last line vanishes identically. Similarly, the first line of the right-hand side vanishes by Lemma~\ref{lemmA}. Moreover, the second line then vanishes by interchanging the summations.  In the third line, we write the first two summations as one summation running over all three-element subsets of $A$ as follows
\[
	\epsilon\frac12		\sum_{\{a,i,j\}\subset A}\sum_{l\notin A\setminus\{a,i,j\}}   	([L_{ij} ,C_{la}]+[L_{ai} ,C_{lj}]+[L_{ja} ,C_{li}])e_le_ie_je_ae_A  	,
\]
which vanishes by Theorem~\ref{theoLAlg}. Finally, rewriting the four summations in the fourth line as one summation over all four-element subsets of $A$, one sees that all terms in this summation cancel out using $C_{ij}=C_{ji}$ and $L_{ij}=-L_{ji}$.

\end{proof}

Using Theorem~\ref{OAa} for the list $A=ijkl$ yields
\[
\{O_i,O_{jkl}\}
-\{O_j,O_{ikl}\}
+\{O_k,O_{ijl}\} 
-\{O_l,O_{ijk}\}
 =0.
\]
By means of this identity, the relation of Theorem~\ref{theoAlgOii} for four distinct indices can be cast also in two other formats:
\begin{equation}\label{Oijkl}
	\llbracket O_{ij},O_{kl}\rrbracket_-=  [O_{ij},O_{kl}]  
=  \{O_i,O_{jkl}\}
-\{O_j,O_{ikl}\}
= -\{O_{ijl},O_k\} 
+\{O_{ijk},O_l\}
.
\end{equation}

The results of Theorem~\ref{theoAlgOii} are actually special cases of two different more general relations. By means of the following three theorems and the supercommutators~\eqref{scom}--\eqref{sacom}, one is able to swap two symmetry operators $O_A$ and $O_B$ for $A,B$ arbitrary lists of $\{1,\dotsc,N\}$. Moreover, the supercommutators reduce to explicit expressions in terms of the symmetries and supercommutators containing (at least) one symmetry with just one index. 
First, we generalize ~\eqref{Oijkl} to arbitrary disjoint lists. 
\begin{theo}\label{theoAdisjB}
		In the algebra $\cA\otimes \mathcal{C}$, for two lists of $\{1,\dotsc,N\}$, denoted by $A$ and $B$, such that $A\cap B=\emptyset$ as sets, we have
\[	
	\llbracket O_{A},O_{B}\rrbracket_-  = 
	  \epsilon \sum_{a\in A} \llbracket O_a, O_{a, A, B}\rrbracket_-  \,  .
\]
\end{theo}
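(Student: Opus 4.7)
The plan is to expand both sides directly using the defining formula $O_A = \tfrac{1}{2}(\uD\,\ux_A e_A - e_A\ux_A\uD - \epsilon e_A)$ for each symmetry. The disjointness hypothesis $A\cap B=\emptyset$ plays two roles: it forces $\llbracket O_A,O_B\rrbracket_- = O_AO_B - (-1)^{|A||B|}O_BO_A$, and it yields the clean Clifford reordering $e_Ae_B = (-1)^{|A||B|}e_Be_A$, so that the purely Clifford pieces $\epsilon^{2} e_Ae_B$ cancel immediately between $O_AO_B$ and its signed partner.

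Next I would expand $O_AO_B$ as a sum of nine products coming from the three summands of $O_A$ and $O_B$, and subtract the mirror expansion. Each resulting term carries at most two instances of $\uD$; whenever two of them become adjacent (as in $e_A\ux_A\uD\cdot\uD\ux_Be_B$) I would replace $\uD^{2}$ by $\epsilon\Delta$, and whenever a lone $\uD$ sits next to some $e_A$ or $e_B$ I would use identity (\ref{DeA}) to push it across, generating correction terms of the form $2\uD_Ae_A$ or $2\uD_Be_B$. Repeated application of the property $\uD O_B = (-1)^{|B|}O_B\uD$ from Theorem~\ref{TOA} then collapses the outer $\uD$'s, leaving only contributions built out of the ``bulk'' operators $\ux_A,\ux_B$, their commutators with $\uD$, and the surviving Clifford factors.

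To match with the right-hand side I would then use the one-index formula $O_a = \tfrac{\epsilon}{2}([\uD,x_a]-e_a)$ of Theorem~\ref{DOi} together with the multi-list definition (\ref{OAB}) of $O_{a,A,B}$. Since $a\in A$ and $A\cap B=\emptyset$, one has $A\triangle B\setminus\{a\} = (A\setminus\{a\})\cup B$ and $e_ae_Ae_B$ reduces (with sign) to a Clifford monomial on the same index set, so every summand $\llbracket O_a,O_{a,A,B}\rrbracket_-$ becomes directly comparable to the LHS. Summing over $a\in A$ produces a telescoping pattern over the elements of $A$ that matches the LHS residue term by term, modulo applications of the symmetry $C_{ij}=C_{ji}$ and the commutativity relations (\ref{cond1})--(\ref{cond2}).

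The principal obstacle will be the sign bookkeeping: the multiple sources of signs from Clifford reorderings of $e_A,e_B,e_a$, the parities $(-1)^{|A|}$ and $(-1)^{|B|}$ arising when $\uD$ crosses $e_A$ or $e_B$, and the parity in the definition of $\llbracket\cdot,\cdot\rrbracket_-$ must all be tracked consistently. A useful sanity check is the case $|A|=|B|=2$ with the four indices distinct, where the identity reduces to $[O_{ij},O_{kl}] = \{O_i,O_{jkl}\} - \{O_j,O_{ikl}\}$; this form is already obtained from Theorem~\ref{theoAlgOii} together with the cyclic identity of Theorem~\ref{OAa} applied to the four-element list $ijkl$, and verifying the general calculation against this specialisation gives the needed check on the signs.
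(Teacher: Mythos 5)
Your plan follows essentially the same route as the paper's proof: both sides are expanded via the defining expression \eqref{OA} (and \eqref{OAB} for $O_{a,A,B}$), the terms are organized by the number of occurrences of $\uD$, the pure Clifford terms cancel by disjointness, and identity \eqref{DeA} together with $C_{ij}=C_{ji}$, Lemma~\ref{lemma2} and the commutativity relations is used to match the two expansions term by term. The only cosmetic difference is that the paper never substitutes $\uD^2=\epsilon\Delta$ nor invokes $\uD O_B=(-1)^{|B|}O_B\uD$, manipulating the $\uD\uD$-terms directly instead, which does not change the substance of the argument.
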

Note that in this case ${a\triangle A\triangle B}= {(A\setminus \{a\})\cup B  }$. Moreover, following a similar strategy (or using Theorem~\ref{OAa}) one also obtains 
\[	
\llbracket O_{A},O_{B}\rrbracket_-  = 
\epsilon \sum_{b\in B} \llbracket  O_{ A, B, b},O_b\rrbracket_-\, .
\] 
\begin{proof}
A practical property for this proof and the following ones is \eqref{DeA}. 	
By definition, plugging in \eqref{OA}, we find
\begin{align*}
	\llbracket O_{A},O_{B}\rrbracket_-  = \ & O_AO_B - (-1)^{|A||B|-|A\cap B|} O_BO_A \\
	= \ & \frac14\big(\uD\, \ux_A e_A - e_A\ux_A   \uD - \epsilon e_A \big)\big(\uD\, \ux_B e_B - e_B \ux_B  \uD - \epsilon e_B \big) \\
	&  - (-1)^{|A||B|} \frac14 \big(\uD\, \ux_B e_B - e_B \ux_B  \uD - \epsilon e_B \big)\big(\uD\, \ux_A e_A -  e_A\ux_A  \uD - \epsilon e_A \big).
\end{align*}
First, note that for the terms of $\llbracket O_{A},O_{B}\rrbracket_-  $ which do not contain $\uD$, we have		
	\[
	e_A e_B - (-1)^{|A||B|} e_Be_A = 0.
	\]	
For the terms with a single occurrence of $\uD$, we have (up to a factor $-\epsilon/4$)
\begin{align*}
&(\uD\, \ux_A e_A - e_A\ux_A   \uD  ) e_B   - (-1)^{|A||B|}  e_B (\uD\, \ux_A e_A -  e_A\ux_A  \uD)
 \\
& 
- (-1)^{|A||B|}(\uD\, \ux_B e_B - e_B \ux_B  \uD  ) e_A   + e_A (\uD\, \ux_B e_B - e_B \ux_B  \uD  )
\\
	= \ & \sum_{a\in A} \big(\uD\, x_a e_a e_Ae_B
		-   x_a e_A e_a   \uD   e_B
		- (-1)^{|A||B|}  e_B \uD\,   x_a e_ae_A 
		+(-1)^{|A||B|}e_Be_A  e_ax_a \uD  \big)
	\\
	&
		- (-1)^{|A||B|} \uD\, \ux_B e_Be_A    
		+ (-1)^{|A||B|}e_B \ux_B  \uD   e_A
		+ e_A \uD\, \ux_B e_B   
	- e_Ae_B \ux_B  \uD 
	 \\
	= \ & \sum_{a\in A} \big(\uD\, x_a e_a e_Ae_B
	-   (-1)^{|A|-1} x_a    \uD e_A e_a e_B
	- (-1)^{|A||B|+|A|-1}  	e_B e_a e_A\uD\,   x_a 
		+ e_Ae_Be_a x_a \uD    \big)
		\\
		& +  (-1)^{|A|-1} \sum_{a\in A}   x_a\sum_{l\in A\setminus\{a\}}2p_le_l e_A e_a     e_B 
		- (-1)^{|A||B|}\sum_{a\in A}   e_B \sum_{l\in A\setminus\{a\}}2p_le_l  e_a e_A   x_a\\	
		&
		-   \uD\, \ux_B e_A e_B 
		+ (-1)^{|A||B|+|A|}e_B \ux_B e_A \uD 
			+ (-1)^{|A|}\uD\,e_A \, \ux_B e_B   
		- e_Ae_B \ux_B  \uD   
		\\
	& 	+ (-1)^{|A||B|}e_B \ux_B \sum_{a\in A}2p_ae_a   e_A 
		- (-1)^{|A|} \sum_{a\in A}2p_ae_a   e_A\ux_B e_B
 \\
	= \ & \sum_{a\in A} \big(\uD\, x_a e_a e_Ae_B
	-    x_a    \uD e_ae_A  e_B
	- (-1)^{|B|+|A|-1}  	 e_a e_Ae_B\uD\,   x_a
		+(-1)^{|A|+|B|-1} e_ae_Ae_B x_a \uD     \big)
	\\
	& + (-1)^{|A|+|B|-1} \sum_{a\in A}   e_a  e_A     e_B \sum_{l\in A\setminus\{a\}}x_le_l\epsilon(\uD \, e_a + e_a \uD)
	+ \sum_{a\in A}  \epsilon(\uD \, e_a + e_a \uD)\sum_{l\in A\setminus\{a\}}e_l     x_l  e_ae_Ae_B  \\
	&	+  \sum_{a\in A} (-1)^{|B|+|A|-1}e_a   e_A e_B\ux_B\epsilon(\uD \, e_a + e_a \uD) 
	 +\sum_{a\in A}\epsilon(\uD \, e_a + e_a \uD)\ux_B e_a   e_A e_B
 \\
	= \ & \sum_{a\in A} \big(\uD\, x_a e_a e_Ae_B
	-    x_a    \uD e_ae_A  e_B	
	- (-1)^{|B|+|A|-1}  	 e_a e_Ae_B\uD\,   x_a 
	+(-1)^{|A|+|B|-1} e_ae_Ae_B x_a \uD   
	\\
	& + \epsilon(-1)^{|A|+|B|-1}    e_a  e_A     e_B \ux_{a\triangle A\triangle B}\uD \, e_a  
	- \epsilon    e_a e_a e_A     e_B \ux_{a\triangle A\triangle B}\uD 	\\
	&
		- \epsilon(-1)^{|A|+|B|-1} \uD\,   \ux_{a\triangle A\triangle B}  e_a e_A     e_B e_a
			+ \epsilon e_a\uD\,  \ux_{a\triangle A\triangle B}    e_a e_A     e_B,
\end{align*}
where we made use of \eqref{DeA}, $\{\uD , e_a\}= \epsilon 2p_a$ and $\ux_{a\triangle A\triangle B} = \ux_{A\setminus\{a\}} + \ux_B$ for $a\in A$ and $A\cap B=\emptyset$. 	
	
Next, we work out the terms of $\llbracket O_{A},O_{B}\rrbracket_-  $ containing two occurrences of $\uD$, that is,
\begin{align*}
& \big(\uD\, \ux_A e_A - e_A\ux_A   \uD \big)
\big(\uD\, \ux_B e_B - e_B\ux_B   \uD \big)  - (-1)^{|A||B|}  \big(\uD\, \ux_B e_B - e_B\ux_B   \uD\big)
\big(\uD\, \ux_A e_A -  e_A\ux_A  \uD  \big).
\end{align*}
For the terms having $\uD\uD$ in the middle, we readily find
\begin{align*}
& -e_A\ux_A   \uD\uD\, \ux_B e_B   + (-1)^{|A||B|}  e_B\ux_B   \uD\uD\, \ux_A e_A\\
= \  & -\sum_{a\in A} \big(e_A x_a e_a   \uD\uD\, \ux_B e_B  
 - (-1)^{|A||B|}  e_B\ux_B   \uD\uD\, x_a e_a  e_A\big)\\
= \  & -\sum_{a\in A} \big( x_a    \uD\uD\, \ux_B e_ae_Ae_B   
- (-1)^{|A|-1+|B|}  e_a  e_Ae_B\ux_B   \uD\uD\, x_a \big)\\
= \  & -\sum_{a\in A} \big( x_a    \uD\uD\, \ux_{a\triangle A\triangle B}  e_ae_Ae_B   
- (-1)^{|A|-1+|B|}  e_a  e_Ae_B\ux_{a\triangle A\triangle B}    \uD\uD\, x_a \big),
\end{align*}
since
\begin{align*}
  & \sum_{a\in A} \big( x_a    \uD\uD\, \ux_{a\triangle A}  e_ae_Ae_B   
- (-1)^{|A|-1+|B|}  e_a  e_Ae_B\ux_{a\triangle A}    \uD\uD\, x_a \big)\\
= \ &  \sum_{a\in A} \sum_{l\in A\setminus\{a\}} x_a    \uD\uD\, x_{l}e_l  e_a e_Ae_B  
-  \sum_{a\in A} \sum_{l\in A\setminus\{a\}} x_{l}   \uD\uD\, x_a  e_a e_le_Ae_B \\
= \ & 0.
\end{align*}
Similarly, we have
\begin{align*}
&  -\uD\, \ux_A e_A e_B\ux_B   \uD  + (-1)^{|A||B|}\uD\, \ux_B e_B e_A\ux_A   \uD\\
= \  & -\sum_{a\in A} \big(\uD\, x_a e_a  e_A e_B\ux_B   \uD
- (-1)^{|A||B|}\uD\, \ux_B e_B   e_A x_a e_a   \uD\big)\\
= \  & -\sum_{a\in A} \big(\uD\, x_a  e_a  e_Ae_B\ux_{a\triangle A\triangle B}    \uD 
- (-1)^{|A|-1+|B|}  \uD\, \ux_{a\triangle A\triangle B}  e_ae_Ae_B x_a    \uD \big).
\end{align*}
Finally, we manipulate the remaining terms as follows
\begin{align*}
& \uD\, \ux_A e_A \uD\, \ux_B e_B 
 + e_A\ux_A   \uD e_B\ux_B   \uD  
 - (-1)^{|A||B|} \uD\, \ux_B e_B\uD\, \ux_A e_A 
 - (-1)^{|A||B|}  e_B\ux_B   \uD e_A\ux_A  \uD \\
 = \ & \sum_{a\in A} \big( \uD\, x_a e_a e_A \uD\, \ux_B e_B 
 + e_Ax_a e_a   \uD e_B\ux_B   \uD  
 - (-1)^{|A||B|} (\uD\, \ux_B e_B\uD\, x_a e_ae_A 
 +  e_B\ux_B   \uD e_Ax_a e_a  \uD)\big)\\
 = \ & \sum_{a\in A} \big( (-1)^{|A|-1}\uD\, x_a \uD\, e_a e_A \ux_B e_B 
 + (-1)^{|A|-1}x_a   \uD e_A e_ae_B\ux_B   \uD  \\*
 &
 - (-1)^{|A||B|+|A|-1} \uD\, \ux_B e_Be_ae_A\uD\, x_a  
 - (-1)^{|A||B|+|A|-1}   e_B\ux_B  e_Ae_a \uD x_a   \uD\big)\\*
 & +\sum_{a\in A}\sum_{l\in A\setminus\{a\}}  
 \big( -(-1)^{|A|-1} \uD\, x_a 2p_le_l e_a e_A  \ux_B e_B 
 -(-1)^{|A|-1} x_a  2p_le_le_Ae_a   e_B\ux_B   \uD  \\*
 &
 - (-1)^{|A||B|} \uD\, \ux_B e_B 2p_le_le_ae_A  x_a 
  - (-1)^{|A||B|} e_B\ux_B    2p_le_l e_Ae_a  x_a  \uD\big)\\*
  = \ & \sum_{a\in A} \big( \uD\, x_a \uD\,\ux_{a\triangle A\triangle B}  e_a e_Ae_B  
  + x_a   \uD e_ae_A e_B\ux_{a\triangle A\triangle B}   \uD  \\
  &
  - (-1)^{|B|+|A|-1} \uD\, \ux_{a\triangle A\triangle B} e_ae_Ae_B\uD\, x_a  
  - (-1)^{|A|-1+|B|}   e_a e_Ae_B  \ux_{a\triangle A\triangle B}\uD x_a   \uD\big).
\end{align*}
In the last step, we used the following (and a similar result for the other two terms)
\begin{align*}
& \sum_{a\in A}  \uD\, x_a \uD\,\sum_{l\in A\setminus\{a\}}  x_le_l  e_a e_Ae_B  
- (-1)^{|B|+|A|-1} \sum_{a\in A} \uD\, \sum_{l\in A\setminus\{a\}}  x_le_l e_ae_Ae_B\uD\, x_a \\
= \  &\sum_{a\in A} \sum_{l\in A\setminus\{a\}} \uD\, x_a \uD\,  x_le_l  e_a e_Ae_B  
+ \sum_{a\in A}  \sum_{l\in A\setminus\{a\}}  \uD\,x_l\uD\,  e_l e_ae_Ae_Bx_a\\
& -\sum_{a\in A}  \sum_{l\in A\setminus\{a\}}  \uD\,x_l\sum_{b\in A\setminus\{a,l\}\cup B} 2p_b e_b  e_l e_ae_Ae_Bx_a\\
= \  & -\sum_{a\in A}  \sum_{l\in A\setminus\{a\}}  \uD\,x_l\sum_{b\in A\setminus\{a,l\}} 2p_b e_b  e_l e_ae_Ae_Bx_a-\sum_{a\in A}  \sum_{l\in A\setminus\{a\}}  \uD\,x_l\sum_{b\in  B} 2p_b e_b  e_l e_ae_Ae_Bx_a\\
= \  & -\sum_{\{a,l,b\}\subset A}  2\uD\,(x_l p_bx_a-x_a p_bx_l+x_b p_ax_l-x_b p_lx_a+x_a p_lx_b-x_l p_ax_b) e_b  e_l e_ae_Ae_B\\
&-\sum_{\{a,l\}\subset A}\sum_{b\in  B} 2\uD\,(x_l p_b x_a-x_a p_b x_l) e_b  e_l e_ae_Ae_B\\
= \  
&-\sum_{\{a,l\}\subset A}\sum_{b\in  B} 2\uD\,(x_l (p_b x_a-x_ap_b)-x_a(p_b x_l-x_lp_b)) e_b  e_l e_ae_Ae_B\\
= \  
&-\sum_{\{a,l\}\subset A}\sum_{b\in  B} 2\uD\,(x_l (p_a x_b-x_bp_a)-x_a(p_l x_b-x_bp_l)) e_b  e_l e_ae_Ae_B\\
= \  
&-\sum_{\{a,l\}\subset A}\sum_{b\in  B} 2\uD\,(x_lp_a-x_ap_l) x_b e_b  e_l e_ae_Ae_B-\sum_{\{a,l\}\subset A}\sum_{b\in  B}x_b(x_a p_l-x_lp_a) e_b  e_l e_ae_Ae_B\\
= \  
&\sum_{a\in A}\sum_{l\in A\setminus\{a\}} \uD\,x_a2p_l \ux_B e_l e_ae_Ae_B
-\sum_{a\in A}\sum_{l\in A\setminus\{a\}}\uD\,\ux_B2p_lx_a    e_l e_ae_Ae_B.
\end{align*}
To arrive at this result, we made use of property~\eqref{DeA},  Lemma~\ref{lemma2}, the commutativity of $x_1,\dotsc,x_N$ and $C_{ij}=C_{ji}$.

Putting everything together and comparing with 
\begin{align*}	
&   \sum_{a\in A} \llbracket O_a, O_{a, A, B}\rrbracket_-   = \sum_{a\in A}\frac{\epsilon}{4}\big((\uD\, x_{a}  -  x_{a}  \uD - e_a )
(\uD\, \ux_{a\triangle A\triangle B} e_ae_{A}e_B -e_ae_{A}e_B  \ux_{a\triangle A\triangle B}  \uD - \epsilon e_a e_{A}e_B )\\
& -(-1)^{|A|+|B|-1 }\frac{\epsilon}{4}
(\uD\, \ux_{a\triangle A\triangle B} e_ae_{A}e_B -e_ae_{A}e_B  \ux_{a\triangle A\triangle B}  \uD - \epsilon e_a e_{A}e_B )
(\uD\, x_{a}  -  x_{a}  \uD - e_a )\big),
\end{align*}	
the proof is completed.	
\end{proof}

\begin{theo}\label{theoAsubsetB}
	In the algebra $\cA\otimes \mathcal{C}$, for two lists of $\{1,\dotsc,N\}$, denoted by $A$ and $B$, such that $A\subset B$ as sets, we have
	\[
	\llbracket O_{A},O_{B}\rrbracket_-  = \epsilon \sum_{a\in A} \llbracket O_a, O_{a, A, B}\rrbracket_-   \, .
	\]	
\end{theo}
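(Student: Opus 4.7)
The plan is to mirror the direct expansion used in the proof of Theorem~\ref{theoAdisjB}, adapting it to the case $A\cap B = A$, $A\triangle B = B\setminus A$, and sign $(-1)^{|A||B|-|A|}$. Expanding $O_A O_B - (-1)^{|A||B|-|A|} O_B O_A$ via the definition~\eqref{OA} produces terms naturally grouped by the number of occurrences of $\uD$, namely zero, one, or two. I would treat the three groups in that order.

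The zero-$\uD$ terms reduce to $e_A e_B - (-1)^{|A||B|-|A|} e_B e_A$, which vanishes by~\eqref{eAeB} with $|A\cap B|=|A|$. For the one-$\uD$ terms, sweep each $\uD$ past its adjacent Clifford product $e_A$ or $e_B$ using~\eqref{DeA} and the relation $\{\uD,e_a\}=\epsilon\,2p_a$. This generates correction sums $\sum_{a\in A} 2p_a e_a e_A$ and analogous sums over $B$, which overlap now that $A\subset B$. The key identity enabling reassembly is that for $a\in A\subset B$,
\[
\ux_{a\triangle A\triangle B} = \ux_{(B\setminus A)\cup\{a\}} = \ux_{B\setminus A} + x_a e_a = \ux_B - \ux_{A\setminus\{a\}},
\]
which replaces the identity $\ux_{a\triangle A\triangle B}=\ux_{A\setminus\{a\}}+\ux_B$ used in the disjoint case. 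Using this, the one-$\uD$ terms can be repackaged into pieces of the target $\epsilon\sum_{a\in A}\llbracket O_a, O_{a,A,B}\rrbracket_-$.

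The two-$\uD$ block contains the real difficulty. The $\uD\uD$-in-the-middle terms are handled as in Theorem~\ref{theoAdisjB}: their cross contributions cancel by antisymmetry in the pair $a,l\in A$. The remaining terms, of the form $\uD\ux_A e_A\,\uD\, \ux_B e_B$ and its mirrors, are treated by moving the inner $\uD$ across $e_A$ with~\eqref{DeA}, producing double sums $\sum_{a\in A}\sum_{l\in A\setminus\{a\}}$ coupled to products such as $\uD\,\ux_B\, e_l e_a e_A e_B$ together with their mirror versions on the $B$-side. Simplifying these using Lemma~\ref{lemma2}, the commutativity of the $x_i$'s, and $C_{ij}=C_{ji}$, one finds that the residual products $x_l p_b x_a$ run with $b$ ranging over $B$; since $B\supset A$, this index range must be split as $b\in A\setminus\{a,l\}$ versus $b\in B\setminus A$. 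The $b\in A\setminus\{a,l\}$ piece collapses to zero by exactly the triple-antisymmetry argument that killed the $\{a,l,b\}\subset A$ sum in the disjoint case, and the $b\in B\setminus A$ piece plays the role of the old $b\in B$ contribution.

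The main obstacle is this combinatorial bookkeeping in the two-$\uD$ block: the overlap $A\subset B$ creates additional contractions $(e_a)^2=\epsilon$ whenever an index of $A$ meets its copy inside $e_B$, and each such contraction produces a sign and a shortened Clifford word that must be tracked. Once all signs are reconciled, the surviving terms should match, line by line, the expansion
\[
\epsilon\sum_{a\in A}\llbracket O_a, O_{a,A,B}\rrbracket_- = \frac{\epsilon}{4}\sum_{a\in A}\Bigl((\uD x_a - x_a\uD - e_a)\,M_a - (-1)^{|A|+|B|-1} M_a\,(\uD x_a - x_a\uD - e_a)\Bigr),
\]
where $M_a = \uD\,\ux_{a\triangle A\triangle B}\, e_a e_A e_B - e_a e_A e_B\, \ux_{a\triangle A\triangle B}\,\uD - \epsilon\, e_a e_A e_B$ is the expansion of $2\,O_{a,A,B}$ from~\eqref{OAB}. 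The calculation is a direct parallel of the proof of Theorem~\ref{theoAdisjB}, with the disjoint-case identities for $A\cap B=\emptyset$ systematically replaced by their $A\subset B$ analogues.
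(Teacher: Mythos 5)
Your proposal is viable but follows a genuinely different route from the paper's own proof. The paper proves Theorem~\ref{theoAsubsetB} in Appendix~\ref{sec:5} starting from the form~\eqref{OA2} of $O_A$ and $O_B$ (the expansion in terms of $\epsilon\tfrac{|A|-1}{2}$, $\epsilon\sum_a O_ae_a$ and $\sum L_{ij}e_ie_j$), organizing the cancellations around Lemma~\ref{lemmA}, property~\eqref{eAeB} and the relations of Theorem~\ref{theoLAlg}, in the style of the proof of Theorem~\ref{theoAlgOii}; the main text explicitly notes that one is \emph{not} limited to the form~\eqref{OA} used for Theorem~\ref{theoAdisjB}, which is exactly the form you propose to reuse. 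Your route --- direct expansion of~\eqref{OA}, grouping by the number of occurrences of $\uD$, with the disjoint-case identity $\ux_{a\triangle A\triangle B}=\ux_{A\setminus\{a\}}+\ux_B$ replaced by $\ux_{a\triangle A\triangle B}=\ux_{B\setminus A}+x_ae_a$ and the index range split into $b\in A\setminus\{a,l\}$ versus $b\in B\setminus A$ --- is structurally sound and buys a single computational template covering both Theorem~\ref{theoAdisjB} and Theorem~\ref{theoAsubsetB}; the paper's choice instead keeps the computation at the level of the already-established symmetries $O_i$, $L_{ij}$ and their algebra, so the cancellations are traceable to Theorem~\ref{theoLAlg} rather than to raw manipulations of the $x_i$ and $p_i$. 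One bookkeeping slip to correct: in your target expression the supercommutator sign should be $(-1)^{|A|+|B|}$ (equivalently $(-1)^{|B|-|A|}$), not $(-1)^{|A|+|B|-1}$. Since $a$ now belongs to $a\triangle A\triangle B=\{a\}\cup(B\setminus A)$, the intersection term in~\eqref{scom} contributes an extra $-1$ relative to the disjoint case, and $(-1)^{|B|-|A|}$ is indeed the sign that appears at the corresponding point of the paper's appendix computation; carrying your sign through literally would prevent the final term-by-term matching.
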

Note that in this case  ${a\triangle A\triangle B}= {\{a\}\cup (B\setminus A ) }$.
\begin{proof}
To prove this result, one is not limited to just the form~\eqref{OA} for $O_A$ and $O_B$ as we did in the proof of Theorem~\ref{theoAdisjB}. One may also use, for instance, the form~\eqref{OA2}, and employ a strategy similar to the one used in the proof of Theorem~\ref{theoAlgOii}. A proof in this style can be found in Appendix~\ref{sec:5}. 	
\end{proof}

\begin{theo}\label{theoOAOB}
	In the algebra $\cA\otimes \mathcal{C}$, for two lists of $\{1,\dotsc,N\}$, denoted by $A$ and $B$, we have
	\[
		\llbracket O_{A},O_{B}\rrbracket_+ = \epsilon O_{A, B}+ (e_{A\cap B})^2\llbracket O_{A, (A\cap B)},O_{(A\cap B),  B}\rrbracket_++ (e_{A\cap B})^2  \llbracket O_{A\cap B},O_{(A\cap B) , A, B}\rrbracket_+  .	
	\]
\end{theo}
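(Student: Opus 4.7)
The plan is to proceed by direct expansion, following the same basic strategy as the proofs of Theorems~\ref{theoAdisjB} and \ref{theoAsubsetB}. Set $C = A\cap B$ and write the supercommutator as $\llbracket O_A,O_B\rrbracket_+ = O_AO_B + (-1)^{|A||B|-|C|}O_BO_A$. Substituting the defining form $O_X = \tfrac{1}{2}(\uD\,\ux_X e_X - e_X\ux_X\uD - \epsilon e_X)$ into each factor yields nine terms, which I would group by the number of occurrences of $\uD$ (zero, one, or two). The grading-zero terms collapse at once by the reciprocity $e_Be_A = (-1)^{|A||B|-|C|}e_Ae_B$ from \eqref{eAeB}; the grading-one terms are simplified via \eqref{cond2b}, the identity \eqref{DeA}, and Lemma~\ref{lemma2}; the grading-two terms are reduced using $\uD^2 = \epsilon\Delta$ together with the commutator $[\Delta,x_j]=2p_j$ coming from \eqref{cond2}, exactly as in the proof of Theorem~\ref{theoAdisjB}.

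I would then apply the same grading-by-$\uD$ expansion to each of the three pieces on the right-hand side, using that $(e_C)^2 = \epsilon^{|C|}(-1)^{|C|(|C|-1)/2}$ is a central scalar which may be freely commuted. The role of each piece then becomes transparent: $\llbracket O_{A,C},O_{C,B}\rrbracket_+$ governs the interactions between indices in $A\setminus C$ and indices in $B\setminus C$ (structurally parallel to the disjoint case of Theorem~\ref{theoAdisjB}), while $\llbracket O_C,O_{C,A,B}\rrbracket_+$ governs the interactions involving the common indices in $C$ (structurally parallel to the subset case of Theorem~\ref{theoAsubsetB}). The residual term $\epsilon O_{A,B}$ absorbs what neither anticommutator can supply, in particular the scalar $-\tfrac{1}{2}\epsilon\,e_Ae_B$ that balances the $\tfrac{1}{2}e_Ae_B$ produced on the left. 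An alternative route, in the spirit of the proof of Theorem~\ref{theoAsubsetB} sketched in the appendix, would be to use the expanded form \eqref{OA2} of each $O_X$ and reduce via the quadratic relations of Theorem~\ref{theoLAlg}; either approach should lead to the same conclusion.

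The main obstacle is the sign and Clifford bookkeeping. The symbol $O_{A,B}$ is not the bare $O_{A\triangle B}$ but its $(e_C)^2$-rescaled signed cousin, and analogous factors appear throughout $O_{A,C}$, $O_{C,B}$, and $O_{C,A,B}$; in particular, the associated swap signs in the various supercommutators must all be reconciled. One must carefully track how products such as $e_Ae_B$, $e_Ae_C\,e_Ce_B$ and $e_Ce_Ae_B$ reduce via \eqref{eAeB} and \eqref{eA2}, and how the natural three-way decomposition into $A\setminus C$, $B\setminus C$ and $C$ interacts with the expansions at each grading level. I expect the grading-one step to be by far the most delicate: there, summations over $A$, $B$ and $C$ must be repeatedly split using \eqref{DeA} and recombined using $C_{ij}=C_{ji}$ and $L_{ij}=-L_{ji}$, and the need to distinguish $A\cap B$-indices from $A\triangle B$-indices throughout makes the accounting considerably longer than in either of the two special cases already established.
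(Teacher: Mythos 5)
Your proposal follows essentially the same route as the paper's own proof (given in the Appendix): expand both sides via the defining form \eqref{OA}, grade the resulting terms by the number of occurrences of $\uD$, split $\ux_A$ and $\ux_B$ along the three-way decomposition $A\setminus B$, $B\setminus A$, $A\cap B$ and recombine into $\ux_{A\triangle B}$ and $\ux_{A\cup B}$, with the signs controlled by \eqref{eAeB}, \eqref{eA2} and the correction terms from \eqref{DeA} cancelling via the $L_{ij}$ and $C_{ij}$ identities. The plan, including the observation about the scalar $-\tfrac{\epsilon}{2}e_Ae_B$ balancing the zero-grading terms, is sound and matches the paper's argument.
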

Note that
$
{A\triangle B} = {(A\setminus B)\cup(B\setminus A)}$, while ${A\triangle (A\cap B)} = {A\setminus B}$ and ${(A\cap B)\triangle  B} = {B\setminus A}$, and finally ${(A\cap B) \triangle A\triangle B} = {A\cup B}$.

\begin{proof}	
Because of its length and as it employs a similar strategy as used already in the proof of Theorem~\ref{theoAdisjB}, we have moved the proof of this result to Appendix~\ref{sec:5}.
\end{proof}

\begin{coro}\label{coro}
	In the algebra $\cA\otimes \mathcal{C}$, for two lists of $\{1,\dotsc,N\}$, denoted by $A$ and $B$, we have	
\begin{align*}
		\llbracket O_{A},O_{B}\rrbracket_+ 
		 = \ & \epsilon O_{A, B}+ (e_{A\cap B})^2 2 O_{A, (A\cap B)}O_{(A\cap B),  B}+ (e_{A\cap B})^2  2 O_{A\cap B}O_{(A\cap B) , A, B}\\
		 & - \epsilon(e_{A\cap B})^2 \sum_{a\in A\setminus B} \llbracket O_a,O_{(A\cap B),  B}\rrbracket_-
		 - \epsilon(e_{A\cap B})^2  \sum_{a\in A\cap B} \llbracket O_a,O_{(A\cap B) , A, B}\rrbracket_- \,.
\end{align*}
\end{coro}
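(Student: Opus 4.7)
My plan is to deduce the corollary directly from Theorem~\ref{theoOAOB} by expanding each of the two super-anticommutators on its right-hand side via the elementary identity
\[
\llbracket X, Y \rrbracket_+ = 2\, X Y - \llbracket X, Y \rrbracket_-,
\]
which is just a rearrangement of $XY + \sigma YX = 2 XY - (XY - \sigma YX)$ with $\sigma = (-1)^{|X||Y|-|X\cap Y|}$. Applied to the two terms $(e_{A\cap B})^2\llbracket O_{A,(A\cap B)}, O_{(A\cap B),B}\rrbracket_+$ and $(e_{A\cap B})^2 \llbracket O_{A\cap B}, O_{(A\cap B),A,B}\rrbracket_+$, this immediately produces the $2\, O_{A,(A\cap B)} O_{(A\cap B),B}$ and $2\, O_{A\cap B} O_{(A\cap B),A,B}$ contributions stated in the corollary, together with two $\llbracket\cdot,\cdot\rrbracket_-$ terms that remain to be simplified.

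To handle the two remaining super-commutators, I would invoke the two preceding theorems. The effective index sets of $O_{A,(A\cap B)}$ and $O_{(A\cap B),B}$ are $A\setminus B$ and $B\setminus A$, which are disjoint, so Theorem~\ref{theoAdisjB} (extended in the natural way to compound index lists) reduces $\llbracket O_{A,(A\cap B)}, O_{(A\cap B),B}\rrbracket_-$ to a sum $\epsilon\sum_{a\in A\setminus B}\llbracket O_a, O_{a,A,(A\cap B),(A\cap B),B}\rrbracket_-$. Similarly, the effective index sets of $O_{A\cap B}$ and $O_{(A\cap B),A,B}$ are $A\cap B$ and $A\cup B$ with $A\cap B \subset A\cup B$, so Theorem~\ref{theoAsubsetB} reduces the second super-commutator to $\epsilon\sum_{a\in A\cap B}\llbracket O_a, O_{a,(A\cap B),(A\cap B),A,B}\rrbracket_-$.

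In each of these summands, the inner $O$-operator has an index list with a doubled block $(A\cap B),(A\cap B)$. By \eqref{eAeB}--\eqref{eA2} this doubled block collapses to the scalar $(e_{A\cap B})^2$, while the remaining Clifford factors form precisely the lists $(A\cap B),B$ and $(A\cap B),A,B$ that appear in the corollary. Tracking this scalar together with the factor $\epsilon$ supplied by Theorems~\ref{theoAdisjB} and~\ref{theoAsubsetB}, and with the $-1$ coming from the identity $\llbracket X,Y\rrbracket_+ = 2XY - \llbracket X,Y\rrbracket_-$, produces the two prefactors $-\epsilon\,(e_{A\cap B})^2$ and the two summations displayed in the statement.

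The main obstacle in carrying out this plan is purely notational: one must be vigilant with the Clifford reordering signs \eqref{eAeB}, the squared factors \eqref{eA2}, and the grading sign $(-1)^{|X||Y|-|X\cap Y|}$ implicit in every supercommutator when translating between the compound-index form $O_{X,Y,\dots}$ used in Theorem~\ref{theoOAOB} and the canonical single-list form $O_A$ in which Theorems~\ref{theoAdisjB} and~\ref{theoAsubsetB} are stated. Once this bookkeeping is settled, the corollary is simply an algebraic rewriting of Theorem~\ref{theoOAOB} with no further input.
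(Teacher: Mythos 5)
Your strategy coincides with the paper's own proof, which consists of exactly the two steps you describe: the identity $\llbracket O_{A},O_{B}\rrbracket_- +\llbracket O_{A},O_{B}\rrbracket_+ = 2  O_AO_B$ applied to the two super-anticommutators on the right-hand side of Theorem~\ref{theoOAOB}, followed by Theorem~\ref{theoAdisjB} (because $(A\setminus B)\cap (B\setminus A)=\emptyset$) and Theorem~\ref{theoAsubsetB} (because $A\cap B\subset A\cup B$). So there is no divergence in method.

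There is, however, a concrete problem with your final identification. After applying Theorem~\ref{theoAdisjB}, the inner operator of the first sum is $O_{a, A, (A\cap B), (A\cap B), B}$; collapsing the doubled block via \eqref{eA2} gives $(e_{A\cap B})^2\, O_{a,A,B}$, whose Clifford factor is $e_a e_A e_B$ and whose effective index set is the $a$-dependent set $\{a\}\triangle A\triangle B = ((A\setminus B)\setminus\{a\})\cup(B\setminus A)$. This is \emph{not} the list $(A\cap B),B$ (effective set $B\setminus A$), so your claim that ``the remaining Clifford factors form precisely the lists $(A\cap B),B$ and $(A\cap B),A,B$'' is false: the identity $e_ae_Ae_{A\cap B}e_{A\cap B}e_B=(e_{A\cap B})^2e_{A\cap B}e_B$ it would require does not hold in general. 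As a sanity check, for $A=ij$ and $B=kl$ with $i,j,k,l$ distinct your reading would yield $[O_{ij},O_{kl}]=\epsilon([O_i,O_{kl}]+[O_j,O_{kl}])$, contradicting \eqref{Oijkl}, which involves the three-index operators $O_{jkl}$ and $O_{ikl}$. What the derivation actually produces for the two correction terms is $-\epsilon\sum_{a\in A\setminus B}\llbracket O_a,O_{a,A,B}\rrbracket_- -\epsilon\sum_{a\in A\cap B}\llbracket O_a,O_{a,A,B}\rrbracket_-$, which combines to $-\epsilon\sum_{a\in A}\llbracket O_a,O_{a,A,B}\rrbracket_-$; the subscripts in the displayed corollary can only be reconciled with this if they are read as $O_{a,A,(A\cap B),(A\cap B),B}$ and $O_{a,(A\cap B),(A\cap B),A,B}$, i.e.\ with the prepended $a$ and the doubled block retained. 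Your write-up should state these $a$-dependent operators explicitly rather than force an identification that fails.
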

\begin{proof}
	Note first that $	\llbracket O_{A},O_{B}\rrbracket_- +\llbracket O_{A},O_{B}\rrbracket_+ = 2  O_AO_B $. 
Now, combine	Theorem~\ref{theoOAOB} with Theorem~\ref{theoAdisjB} and Theorem~\ref{theoAsubsetB}, using $(A\setminus B)\cap (B\setminus A)=\emptyset$ and $(A\cap B)\subset (A\cup B)$.
\end{proof}

By means of Theorem~\ref{theoAdisjB}, Theorem~\ref{theoAsubsetB} and Theorem~\ref{theoOAOB} (or thus Corollary~\ref{coro}), we can swap any two operators $O_A$ and $O_B$ where $A$ or $B$ is not a list of just one element. We briefly explain the need for three such theorems.  
Theorem~\ref{theoOAOB} yields an empty identity in two cases, when $A\cap B=\emptyset$ or when either $A$ or $B$ is contained in the other as sets. For example, say $A\cap B=\emptyset$, then we have
\begin{align*}
	\llbracket O_{A},O_{B}\rrbracket_+ =\ & \epsilon O_{A, B}+ (e_{A\cap B})^2\llbracket O_{A, (A\cap B)},O_{(A\cap B),  B}\rrbracket_+ + (e_{A\cap B})^2  \llbracket O_{A\cap B},O_{(A\cap B) , A, B}\rrbracket_+ \\	
	=\ & \epsilon O_{A, B}+ \llbracket O_{A},O_{ B}\rrbracket_+ +  O_{\emptyset}O_{A, B}+ O_{A, B}O_{\emptyset}\\	
	=\ & \llbracket O_{A},O_{ B}\rrbracket_+ \,,
\end{align*}
as $e_{\emptyset}= 1$ and $O_{\emptyset}= -\epsilon/2$. For these cases, we can resort to Theorem~\ref{theoAdisjB} or Theorem~\ref{theoAsubsetB}.
However, if $A$ is a list of a single element $a$, Theorem~\ref{theoAdisjB} and Theorem~\ref{theoAsubsetB} are empty identities:
\[	
\llbracket O_{A},O_{B}\rrbracket_-  = 
\epsilon \sum_{a\in A} \llbracket O_a, O_{a, A, B}\rrbracket_-   = 
\epsilon \llbracket O_a, O_{a, a, B}\rrbracket_-    = 
 \llbracket O_a, O_{ B}\rrbracket_-  ,
\]
but so is Theorem~\ref{theoOAOB} as either $a\in B$ or $a\cap B =\emptyset$. Now, for the case $a\notin B$ Theorem~\ref{OAa} yields
\begin{equation*}
O_a O_B = (-1)^{|B|}O_BO_a - \epsilon
\sum_{b\in B} \llbracket O_{b},O_{b,a, B}\rrbracket_-  \,  , 
\end{equation*}
while by definition we also have
\[
O_aO_B = \pm(-1)^{|B|-|a\cap B|}O_BO_a +\llbracket O_{a},O_{B}\rrbracket_{\mp}.
\]

We see that all expressions involving supercommutators can be reduced to sums of supercommutators containing (at least) one symmetry with just one index. 
In the following section, we review again examples of specific realizations of the elements $x_1,\dotsc,x_N$ and $p_1,\dotsc,p_N$ of the algebra $\cA$. For these examples, the one-index symmetries in particular take on an explicit form whose interaction with other symmetries can be computed explicitly. This form depends on the makeup of the symmetries $C_{ij}=[p_i,x_j]$ as $O_i$ is given by \eqref{Oi}.

\section{Examples}

We recall the two examples from the previous section.

\noindent\textbf{Example 4.1.} Let again $\Delta$ be the classical Laplace operator, then
\[
\uD = \sum_{i=1}^N e_i p_i= \sum_{i=1}^N e_i \frac{\partial}{\partial x_i}
\]
is the classical Dirac operator. The $\mathfrak{osp}(1|2)$ structure of Theorem~\ref{theoosp} here was obtained already in~\cite{HS}. 
The commutator $C_{li}$ in the definition~\eqref{Oi} of $O_i$ reduces to
$
[p_i,x_j] =   \delta_{ij},
$ so
\[
O_{i}    =  \epsilon \frac12\left(\sum_{l=1}^N  e_l \delta_{li}-   e_i \right)= \epsilon \frac12\left( e_i-   e_i \right) = 0.
\]
Moreover, \eqref{Oij} then becomes
\[
O_{ij} 	 =L_{ij}+\epsilon\frac12 e_ie_j  ,
\]
in accordance with results obtained in~\cite{DSS}, while for a general subset $A\subset \{1,\dotsc,N\}$ one has
\[
O_A = \bigg(\epsilon\frac{|A|-1}{2}- \sum_{\{i, j\} \subset A} L_{ij} e_ie_j\bigg)e_A.
\]
Since 	$
[p_i,x_j] =   0
$ for $i\neq j$, given a subset $A\subset \{1,\dotsc,N\}$ the operators $\ux_A$ and $\uD_A$ as defined by~\eqref{DAxA} in fact also generate a copy of the Lie superalgebra $\mathfrak{osp}(1|2)$ whose Scasimir element we will denote by 
\[
\mathcal{S}_A   = \frac12\left( [\uD_A,\ux_A]-\epsilon\right)  .
\] 
From \eqref{OA}, we see that in this case $
O_A$ equals $ \mathcal{S}_A e_A$. 

As in this case the one-index symmetries are identically zero, the algebraic relations simplify accordingly. The symmetries with two indices generate a realization of the Lie algebra $\mathfrak{so}(N)$, as seen from the relations of Theorem~\ref{theoAlgOii}. For general symmetries, 
Theorem~\ref{theoOAOB} now yields
\[
\llbracket O_{A},O_{B}\rrbracket_+ = \epsilon O_{A, B}+ (e_{A\cap B})^2 2 O_{A, (A\cap B)}O_{(A\cap B),  B}+ (e_{A\cap B})^2  2 O_{A\cap B}O_{(A\cap B) , A, B},
\]
as by Theorem~\ref{theoAdisjB} and Theorem~\ref{theoAsubsetB} 
$	
\llbracket O_{A},O_{B}\rrbracket_-  = 0
$
for $A\cap B=\emptyset$, or $A\subset B$, or $B\subset A$. This corresponds to (a special case of) the higher rank Bannai--Ito algebra of \cite{DeBie&Genest&Vinet-2016-2}, which strictly speaking is not included in the results obtained there.

To conclude, we mention another group of symmetries of $\uD$ and $\ux$ which are not inside the algebra $\cA \otimes \mathcal{C}$ in this case, but which will also be useful for the next example. For a normed vector $\alpha$, its embedding in the Clifford algebra
\[
\underline{\alpha} = \sum_{l=1}^N e_l \alpha_l 
\]
is an element of the so-called $\mathrm{Pin}$ group, which forms a double cover of the orthogonal group $\mathrm{O}(N)$. These elements have the property that 
\[
\underline{\alpha}\,\underline{v}= -\underline{w}\,\underline{\alpha},
\]
where $w=\sigma_{\alpha}(v)={v} -2\langle {v},{\alpha}\rangle{\alpha}/\|{\alpha}\|^2$. Hence, if we define the operators $S_{\alpha}$ as follows 
\begin{equation}
\label{Sa}
S_{\alpha} = \sum_{l=1}^N e_l \alpha_l \sigma_{\alpha} = \underline{\alpha} \,  \sigma_{\alpha} ,
\end{equation}
then it is immediately clear that they satisfy the following properties
\[
S_{\alpha}  \underline{v} =  -\underline{w} S_{\alpha},\qquad S_{\alpha} f(x)  =   f(\sigma_{\alpha}(x)) S_{\alpha},\qquad (S_{\alpha})^2 =\epsilon. 
\]
where again $w=\sigma_{\alpha}(v)$ and $f$ is a function or operator which does not interact with the Clifford generators. 
By direct computation, one can show that the operator $S_{\alpha}$ anticommutes with the Dirac operator. Moreover, the interaction of $S_{\alpha}$ and a symmetry operator $O_A$ is simply given by the action of the reflection associated with $\alpha$ on the coordinate vectors corresponding to the elements of $A$.

\noindent\textbf{Example 4.2.} We consider again the case where $p_1,\dotsc,p_N$ are given by the Dunkl operators~\eqref{Dunkl} associated with a given root system $R$ and with multiplicity function $k$.
Here, the $\mathfrak{osp}(1|2)$ structure of Theorem~\ref{theoosp} was obtained already in~\cite{2012_DeBie&Orsted&Somberg&Soucek_TransAmerMathSoc_364_3875,
	2009_Orsted&Somberg&Soucek_AdvApplCliffAlg_19_403}.

The commutator in the definition~\eqref{Oi} of $O_i$ is then given by
\[
C_{ij} = [\cD_i,x_j] = \delta_{ij} +  \sum_{\alpha\in R_+} 2k(\alpha) \alpha_i \alpha_j \sigma_{\alpha}.
\]
The symmetries of the Dunkl Dirac operator $\sum_{i=1}^N\cD_ie_i$ with one index thus become
\[
O_i = \epsilon \frac12\left(\sum_{l=1}^N  e_l \delta_{li} +  \sum_{l=1}^N  e_l\sum_{\alpha\in R_+} 2k(\alpha) \alpha_l \alpha_i \sigma_{\alpha} -   e_i\right) = \epsilon   \sum_{\alpha\in R_+} k(\alpha)\alpha_i  \sum_{l=1}^N  e_l\alpha_l \sigma_{\alpha} . 
\]
On the right-hand side we see the operators~\eqref{Sa} appear for the roots $\alpha\in R_+$. 
By direct computation, one can show that for a root $\alpha$, the operator $S_{\alpha}$ anticommutes with the Dirac--Dunkl operator. 
The one-index symmetry $O_i$ thus consists of linear combinations of the operators~\eqref{Sa} determined by the root system and by the multiplicity function $k$
\[
O_i =  \epsilon \sum_{\alpha\in R_+} k(\alpha)\alpha_iS_{\alpha}.
\]
Higher-index symmetries $O_A$ contain the Dunkl angular momentum operators, appended with the anticommuting symmetries $S_{\alpha}$ for $\alpha\in R_+$. For instance, if $A=\{i,j\}$ we have
\begin{equation*}
O_{ij}  = x_i \cD_j - x_j \cD_i + \epsilon\frac12 e_i e_j + \sum_{\alpha\in R_+} k(\alpha) (  e_i \alpha_j   - e_j\alpha_i  )S_{\alpha} .
\end{equation*}

The algebraic relations of Theorem~\ref{theoAdisjB} and Theorem~\ref{theoAsubsetB} can now be worked out explicitly as the action of the one-index symmetries is given by the reflections associated with the roots of the root system.

\noindent\textbf{Example 4.2.1.} For the root system with Weyl group $W =(\mathbb{Z}_2)^N$, our results are in accordance with what was already obtained in \cite{DeBie&Genest&Vinet-2016,DeBie&Genest&Vinet-2016-2}. Here, the Dunkl operators~\eqref{Dunkl} are given by
\[
\cD_i = \frac{\partial}{\partial x_i} + \frac{\mu_i}{x_i}(1-r_i) \qquad i\in\{1,\dotsc,N\}
\]
where $r_i$ is the reflection operator in the $x_i=0$ hyperplane and $\mu_i$ is the value of the multiplicity function on the conjugacy class of $r_i$. 
The one-index symmetry~\eqref{Oi} in this case reduces to
\[
O_i =  \epsilon \sum_{\alpha\in R_+} k(\alpha)\alpha_iS_{\alpha}
=  \epsilon \mu_i r_i e_i.
\]
Here, we also have $
[p_i,x_j] =   0
$ for $i\neq j$, so for a given subset $A\subset \{1,\dotsc,N\}$ the operators $\ux_A$ and $\uD_A$ as defined by~\eqref{DAxA} generate a copy of the Lie superalgebra $\mathfrak{osp}(1|2)$ with the Scasimir element 
\[
\mathcal{S}_A   = \frac12\left( [\uD_A,\ux_A]-\epsilon\right)  .
\] 
From \eqref{OA}, we see that in this case $
O_A$ equals $ \mathcal{S}_A e_A$. 
The relation with the symmetries denoted by $\Gamma_A$ in \cite{DeBie&Genest&Vinet-2016-2} is 
\[
\Gamma_A = \mathcal{S}_A \prod_{i\in A} r_i = \mathcal{S}_A \prod_{i\in A} \frac{1}{\mu_i}O_ie_i= \mathcal{S}_A e_A\prod_{i\in A} \frac{1}{\mu_i}O_i= O_A \prod_{i\in A} \frac{1}{\mu_i}O_i,
\]
where the product over $i\in A$ is taken according to the order of $A$. The operator $\Gamma_A$ commutes with the Dunkl Dirac operator by Corollary~\ref{cor}. The algebraic structure generated by the operators $\Gamma_A$ and corresponding to the relations of Theorem~\ref{theoOAOB} is the higher rank Bannai--Ito algebra of \cite{DeBie&Genest&Vinet-2016-2}. For the case $N=3$, see \cite{DeBie&Genest&Vinet-2016,2015_Genest&Vinet&Zhedanov_CommMathPhys_336_243}, this is the regular Bannai--Ito algebra~\cite{2012_Tsujimoto&Vinet&Zhedanov_AdvMath_229_2123}.

\noindent\textbf{Example 4.2.2.} For the root system of type $A_{N-1}$, with positive subsystem given, for instance, by
\[
R_+ = \Big\{  \frac{1}{\sqrt{2}}(\xi_i-\xi_j)   \Bigm\vert 1\leq i<j\leq N  \Big\} 
\]
where  $\{\xi_1,\dotsc,\xi_N\}$ is an orthonormal basis of $\mathbb{R}^N$, the associated Weyl group is the symmetric group $S_N$ of permutations on $N$ elements. All permutations in $S_N$ are conjugate so the multiplicity function $k(\alpha)$ has the same value for all roots, which we will denote by $\kappa$. 
The Dunkl operators~\eqref{Dunkl} are then given by
\[
\mathcal{D}_i = \frac{\partial}{\partial x_i} + \kappa  \sum_{j\neq i} \frac{1-g_{ij}}{x_i-x_j} \qquad i\in\{1,\dotsc,N\}
\]
where $g_{ij}$ denotes the reflection corresponding to the root ${1/\sqrt{2}}(\xi_i-\xi_j)$. 
The related operator of the form~\eqref{Sa} will be denoted as
\[
G_{ij} = \frac{1}{\sqrt{2}} (e_i - e_j) g_{ij} = -G_{ji}.
\]

In this case, the commutator $[p_i,x_j]$ does not reduce to zero for $i\neq j$. It is given by
\[
	[\mathcal{D}_i,x_j] =\begin{cases} 1+\kappa\displaystyle\sum_{l\neq i}g_{il} & \qquad \text{if }i=j \\ -\kappa g_{ij} & \qquad \text{if } i\neq j \end{cases}
\]
and the one-index symmetry~\eqref{Oi} becomes	
\[
O_i =  \epsilon \sum_{\alpha\in R_+} k(\alpha)\alpha_iS_{\alpha}
=  \epsilon \kappa\sum_{1\leq l<i} \frac{-1}{\sqrt{2}}G_{li}
+\epsilon \kappa\sum_{i<l\leq N} \frac{1}{\sqrt{2}}G_{il}=  \frac{\epsilon \kappa}{\sqrt{2}}\sum_{l=1}^N G_{il},
\]
where $G_{ii}=0$, while for the symmetry~\eqref{Oij} we have
\begin{align*}
O_{ij}  = x_i \cD_j - x_j \cD_i + \epsilon\frac12 e_i e_j + \frac{\epsilon\kappa}{\sqrt{2}} \sum_{l=1}^N (  e_i G_{jl}   - e_j G_{il}  ).
\end{align*}

The relations for two-index symmetries of Theorem~\ref{theoAlgOii}
are
\begin{align*}
[O_{ij},O_{kl}]  &=     \frac{\kappa}{\sqrt{2}}\big( (O_{lij}-O_{lik})(G_{jk}) +(O_{kji}-O_{kjl})(G_{il})+(O_{ikl}-O_{ikj})(G_{lj})+(O_{jlk}-O_{jli})(G_{ki}) \big)
\end{align*}
for four distinct indices, and when $l=i$ we have
\begin{align*}
& [O_{ij},O_{ki}] =   O_{jk} +  \frac{\kappa}{\sqrt{2}} 2O_{ijk}(G_{ij}-G_{ki})+  \frac{\kappa}{\sqrt{2}} \sum_{a\neq i,j,k} (O_{ijk} - O_{ajk})G_{ia} +\frac{\kappa^2}{2} \sum_{a=1}^N\sum_{b=1}^N [G_{ja},G_{kb}].
\end{align*}

\section{Summary and outlook}

The replacement of ordinary derivatives by Dunkl derivatives ${\cal D}_i$ in the expressions of the Laplace and the Dirac operator in $N$ dimensions gives rise to the Laplace--Dunkl $\Delta$ and Dirac--Dunkl operator $\underline{D}$. 
This paper was devoted to the study of the symmetry algebras of these two operators, i.e., to the algebraic relations satisfied by the operators commuting or anticommuting with $\Delta$ or $\underline{D}$. 

In the case of Dunkl derivatives, the underlying object is the reflection group $G$ acting in $N$-dimensional space, characterized by a reduced root system. 
The Dunkl derivatives themselves then consist of an ordinary derivative plus a number of difference operators depending on this reflection group.
So it can be expected that the reflection group $G$ plays an essential role in the structure of the symmetry algebra.

One of the leading examples was for $N=3$ and $G=\mathbb{Z}_2^3$. 
Even for this quite simple reflection group, the study of the symmetry algebras was already non-trivial, and led to the celebrated Bannai--Ito algebra~\cite{2015_Genest&Vinet&Zhedanov_CommMathPhys_336_243,DeBie&Genest&Vinet-2016}.
Following this, the second case where the symmetry algebra could be determined was for general $N$ and $G=\mathbb{Z}_2^N$~\cite{DeBie&Genest&Vinet-2016-2}, leading to a ``higher rank Bannai--Ito algebra,'' of which the structure is already highly non-trivial.

The question that naturally arises is whether the symmetry algebras of the Laplace--Dunkl and Dirac--Dunkl operators for other reflection groups $G$ can still be determined, and what their structure is.
We considered it as a challenge to study this problem.
Originally, we were hoping to solve the problem for the case $G=S_N$, which would already be a significant breakthrough.
Herein, $S_N$ is the symmetric group acting on the coordinates $x_i$ by permuting the indices; as a reflection group it is associated with the root system of type $A_{N-1}$.

Our initial attempts and computations for the case $G=S_N$ were not promising, and the situation looked extremely complicated, particularly because the explicit actions of the Dunkl derivatives ${\cal D}_i$ are already very complex. 
Fortunately, at that moment we followed the advice ``if you cannot solve the problem, generalize it.'' 
So we went back to the general case, with arbitrary reflection group $G$, and no longer focused on the explicit actions of the Dunkl derivatives ${\cal D}_i$, but on the algebraic relations among the coordinate operators $x_i$ and the ${\cal D}_i$.
Then we realized that we could still jump one level higher in the generalization, and just work with coordinate operators $x_i$ and ``momentum operators'' $p_i$ in the framework of a Wigner quantum system, by identifying the $p_i$ with ${\cal D}_i$.
As a consequence, we could forget about the actual meaning of the Dunkl derivatives, and just work and perform our computations in the associative algebra ${\cal A}$ (Definition~\ref{defA}).
This general or ``more abstract'' setting enabled us to determine the elements (anti)commuting with $\Delta$ or $\underline{D}$, and to construct the algebraic relations satisfied by these elements.
The resulting symmetry algebra, obtained in the paper, is still quite complicated.
But we managed to determine it (for general $G$), going far beyond our initial goal.
For the general Laplace--Dunkl operator, the symmetries and the symmetry algebra are described in Theorem~\ref{theoSyms} and Theorem~\ref{theoLAlg}.
For the general Dirac--Dunkl operator, the symmetries are determined in Theorem~\ref{DOi} and Theorem~\ref{TOA}. 
The relations for these symmetries (i.e., the symmetry algebra) are established in Section~\ref{ssec:3.4}, and follow from Theorems~\ref{theoAdisjB}, \ref{theoAsubsetB} and \ref{theoOAOB}.

The results of this paper open the way to several new investigations.
In particular, one could now go back to the interesting case $G=S_N$, starting with $N=3$~\cite{DBOVdJ}, and investigate how the symmetry algebra specializes. 
One of the purposes is to study representations of the symmetry algebra in that case, since this leads to null solutions of the Dirac operator.
As in the case of $G=\mathbb{Z}_2^N$\cite{2015_Genest&Vinet&Zhedanov_CommMathPhys_336_243,DeBie&Genest&Vinet-2016}, one can expect that interesting families of orthogonal polynomials should arise. Furthermore, note that for the case of $G=\mathbb{Z}_2^3$ a superintegrable model on the two-sphere was obtained~\cite{DeBie&Genest&Tsujimoto}. It is definitely worthwhile to investigate possible superintegrable systems for other groups $G$. 

In a different direction, one can examine whether the context of Wigner quantum systems, as used here for rational Dunkl operators, is still of use for other types of operators. 
Possible examples are trigonometric Dunkl operators, or the Dunkl operators appearing in discrete function theory.

Finally, it is known that solutions of Wigner quantum systems with a Hamiltonian of type~\eqref{Ham} can be described in terms of unitary representations of the Lie superalgebra $\mathfrak{osp}(1|2N)$~\cite{SV2005,VdJWigner}.
The action and restriction of the coordinate operators $x_i$ and the Dunkl operators ${\cal D}_i$ in these representations should be studied further.

\appendix

\section{Appendix}
\label{sec:5}

This appendix contains the proofs of Theorem~\ref{theoAsubsetB} and Theorem~\ref{theoOAOB}, which were omitted from the main text due to their length. 

\begin{proof}[Proof of Theorem~\ref{theoAsubsetB}]
We systematically go over every term appearing in $\llbracket O_{A},O_{B}\rrbracket_-$, using now the form~\eqref{OA2} for $O_A$ and $O_B$, that is,
\[
\left\llbracket \Big(\epsilon\frac{|A|-1}{2}+  \epsilon \sum_{a\in A} O_a e_a - \sum_{\{i, j\} \subset A} L_{ij} e_ie_j\Big) e_A,\Big(\epsilon\frac{|B|-1}{2}+  \epsilon \sum_{b\in B} O_b e_b - \sum_{\{k, l\} \subset B} L_{kl} e_ke_l\Big) e_B\right\rrbracket_-.
\]
Starting with the terms which contain no $O_i$ or $L_{ij}$,
we have using property~\eqref{eAeB} and $A\subset B$ 
\[
\frac{|A|-1}{2}\frac{|B|-1}{2}  (e_A  e_B - (-1)^{|A||B|-|A\cap B|}e_Be_A) 
=  \frac{|A|-1}{2}\frac{|B|-1}{2}  (e_A  e_B - e_Ae_B) 
=0.
\]

Next, using property~\eqref{eAeB} and Lemma~\ref{lemmA} we have 	
\begin{align*}
\left\llbracket  e_A , \sum_{b\in B} O_b e_b e_B\right\rrbracket_-  = \
& \sum_{b\in B} e_Ae_Be_bO_b -  (-1)^{|A||B|-|A\cap B|}\sum_{b\in B} O_b e_b e_B e_A  \\
= \
& \sum_{b\in B} e_Ae_Be_bO_b - \sum_{b\in B} O_b e_b  e_Ae_B  \\
= \
& \sum_{b\in A} e_Ae_Be_bO_b +\sum_{b\in B\setminus A} e_Ae_Be_bO_b  - \sum_{b\in B\setminus A}    e_Ae_Be_b O_b   - \sum_{b\in A} O_b e_b  e_Ae_B  \\
= \
& \sum_{b\in A} e_Ae_Be_bO_b - \sum_{b\in A} O_b e_b  e_Ae_B ,   
\end{align*}
while 
\begin{align*}
\left\llbracket   \sum_{a\in A} O_a e_a e_A , e_B\right\rrbracket_-   =\ &    \sum_{a\in A}O_a e_ae_Ae_B -  (-1)^{|A||B|-|A\cap B|}\sum_{a\in A}e_Be_Ae_a O_a  \\ =\ &   \sum_{a\in A}O_a e_ae_Ae_B - \sum_{a\in A}e_Ae_Be_a O_a  .   
\end{align*}
Hence, we have
\begin{align*}
& \frac{|A|-1}{2}\left\llbracket  e_A , \sum_{b\in B} O_b e_b e_B\right\rrbracket_-  
+  \frac{|B|-1}{2}\left\llbracket   \sum_{a\in A} O_a e_a e_A , e_B\right\rrbracket_-   \\
= \  &   \frac{|B|-1-(|A|-1)}{2} \left( \sum_{a\in A} O_a e_ae_Ae_B- \sum_{a\in A}e_Ae_Be_a O_a \right)  \\
= \  & \frac{|B|-|A|+1-1}{2} \sum_{a\in A}  \left( O_a  e_ae_Ae_B-(-1)^{|B|-|A|} e_ae_Ae_B O_a \right)\\
= \  &  \sum_{a\in A}  \Big\llbracket O_a ,\frac{|B|-|A|+1-1}{2} e_ae_Ae_B\Big\rrbracket_-.
\end{align*}

For the next part, using the notation $ A'= A\setminus \{a\}$, we find
\begin{align*}
\left\llbracket  \sum_{a\in A} O_a e_a e_A , \sum_{b\in B} O_b e_b e_B\right\rrbracket_-  
=  \ & \sum_{a\in A}\sum_{b\in B} (O_a e_a e_A    e_Be_b O_b -  (-1)^{|A||B|-|A\cap B|} O_b e_b e_B  e_A e_a O_a )  \\
= \ &  \sum_{a\in A}\sum_{b\in B} (O_a e_a e_A    e_Be_b O_b -  O_b e_b e_Ae_B   e_a O_a ) \\
= \ &  \sum_{a\in A} \bigg\llbracket  O_a , \sum_{b\in B\setminus A'}O_b e_b e_a e_A    e_B\bigg\rrbracket_-  \\
& + \sum_{a\in A}\sum_{b\in A'} O_a e_a e_A    e_Be_b O_b - \sum_{a\in A}\sum_{b\in A'}O_b e_b e_Ae_B   e_a O_a  .
\end{align*}
One easily sees that the summations in the last line cancel out.

Now, for the parts containing ``$L_{ij}$-terms,'' we have
\[
\left\llbracket \sum_{\{i, j\} \subset A} L_{ij} e_ie_j e_A , e_B \right \rrbracket_- = \sum_{\{i, j\} \subset A} L_{ij} \left(e_ie_j e_A  e_B -(-1)^{|A||B|-|A\cap B|} e_Be_ie_j e_A \right) = 0, 
\]
as, using property~\eqref{eAeB} and $A\subset B$, 
\begin{align*}
(-1)^{|A||B|-|A\cap B|} e_Be_ie_j e_A   =\ &  (-1)^{|A||B|-|A\cap B|+|A\setminus\{i,j\}||B|-|(A\setminus\{i,j\})\cap B|} e_ie_j e_A   e_B \\
=\ &  e_ie_j e_A   e_B.
\end{align*}

Moreover, using Lemma~\ref{lemmA} and property~\eqref{eAeB} while denoting $ A'= A\setminus \{a\}$, we have
\begin{align*}
&  \bigg\llbracket\epsilon\sum_{a\in A} O_ae_a  e_A ,  -\sum_{\{k, l\} \subset B} L_{kl} e_ke_l e_B
\bigg\rrbracket_- \\
= \ &  -\epsilon \sum_{a\in A} O_a  \sum_{\{k, l\} \subset B\setminus A'} L_{kl} e_ke_le_a  e_A  e_B
+\epsilon\sum_{a\in A}  \sum_{\{k, l\} \subset B\setminus A'} L_{kl} e_ke_l  e_Ae_Be_a   O_a\\
&  - \epsilon\sum_{a\in A} O_a  \sum_{\{k, l\} \subset A'} L_{kl} e_ke_le_a  e_A  e_B
+  \epsilon\sum_{a\in A} \sum_{\{k, l\} \subset A'} L_{kl} e_ke_l   e_Ae_B e_a O_a\\
& + \epsilon \sum_{a\in A} O_a  \sum_{k  \in A'}\sum_{l  \in B\setminus A'} L_{kl} e_ke_le_a  e_A  e_B
+ \epsilon\sum_{a\in A} \sum_{k  \in A'}\sum_{l  \in B\setminus A'} L_{kl} e_ke_l   e_Ae_Be_a  O_a\\
= \ &  - \epsilon\sum_{a\in A} \bigg\llbracket  O_a , \sum_{\{k, l\} \subset B\setminus A'} L_{kl} e_ke_le_a  e_A  e_B\bigg\rrbracket_- \\
&  - \epsilon\sum_{a\in A} O_a  \sum_{\{k, l\} \subset A'} L_{kl} e_ke_le_a  e_A  e_B
+ (-1)^{|B|-|A|} \epsilon\sum_{a\in A} \sum_{\{k, l\} \subset A'} L_{kl} e_ke_le_a   e_Ae_B  O_a\addtocounter{equation}{1}\tag{\theequation}\label{(b)}\\
& + \epsilon \sum_{a\in A} O_a  \sum_{k  \in A'}\sum_{l  \in B\setminus A'} L_{kl} e_ke_le_a  e_A  e_B \addtocounter{equation}{1}\tag{\theequation}\label{(c)}
+ (-1)^{|B|-|A|}\epsilon\sum_{a\in A} \sum_{k  \in A'}\sum_{l  \in B\setminus A'} L_{kl} e_ke_le_a   e_Ae_B  O_a
.
\end{align*}
We see that the final part to make $\epsilon \sum_{a\in A} \llbracket O_a, O_{a\triangle A\triangle B}\rrbracket_- $ appears here.  To complete the proof, we show that the last two lines, \eqref{(b)} and \eqref{(c)}, cancel out with the remaining terms of $\llbracket O_A,O_B\rrbracket_-$. 

Hereto, we use Lemma~\ref{lemmA}, to find (denoting again $ A'= A\setminus \{a\}$)
\begin{align*}
& \bigg\llbracket -\sum_{\{i, j\} \subset A} L_{ij} e_ie_j e_A ,
\epsilon\sum_{b\in B} O_b e_b e_B \bigg\rrbracket_- \\
= \  &   -\epsilon\sum_{\{i, j\} \subset A} L_{ij} e_ie_j e_A
e_B  \sum_{a\in A\setminus\{i,j\}} e_a O_a 
+(-1)^{|A||B|-|A\cap B|} \epsilon\sum_{\{i, j\} \subset A}\sum_{a\in A\setminus\{i,j\}} O_a L_{ij}e_a e_ie_j e_B
e_A \\*
&  - \epsilon\sum_{\{i, j\} \subset A} L_{ij} e_ie_j e_A
e_B  \sum_{b\in B\setminus(A\setminus\{i,j\})} e_b O_b 
+(-1)^{|A||B|-|A\cap B|} \epsilon\sum_{\{i, j\} \subset A} \sum_{b\in B\setminus(A\setminus\{i,j\})} O_b L_{ij}e_be_ie_j e_B
e_A \\
= \  &  - (-1)^{|B|-|A|} \epsilon\sum_{a\in A}\sum_{\{i, j\} \subset A'} L_{ij} e_ie_j  e_a e_A
e_B O_a 
+\epsilon\sum_{a\in A}\sum_{\{i, j\} \subset A'} O_a L_{ij} e_ie_j e_ae_A e_B
\\* & -\epsilon \sum_{\{i, j\} \subset A} L_{ij}  \sum_{b\in B\setminus(A\setminus\{i,j\})} O_b e_b e_ie_j e_A
e_B 
+\epsilon\sum_{\{i, j\} \subset A}\sum_{b\in B\setminus(A\setminus\{i,j\})} O_b L_{ij}e_b
e_ie_j e_A e_B \\
= \ & \epsilon\sum_{a\in A}O_a \sum_{\{i, j\} \subset A'}  L_{ij} e_ie_j e_ae_A e_B - (-1)^{|B|-|A|} \epsilon\sum_{a\in A}\sum_{\{i, j\} \subset A'} L_{ij} e_ie_j  e_a e_A
e_B O_a 
\\*
& -\epsilon\sum_{\{i, j\} \subset A}\sum_{b\in B\setminus(A\setminus\{i,j\})}[ L_{ij},O_b]  e_b e_ie_j e_A
e_B .\addtocounter{equation}{1}\tag{\theequation}\label{(d)}
\end{align*}
This already causes \eqref{(b)} to vanish, so \eqref{(c)} and \eqref{(d)} remain.

Next, we look at
\begin{equation}
\label{LijLklAinB}
\sum_{\{i, j\}\subset A} \sum_{\{k, l\}\subset B}L_{ij}L_{kl} e_ie_j e_A e_ke_l e_B - (-1)^{|A||B|-|A\cap B|} L_{kl}L_{ij} e_ke_l  e_Be_ie_j e_A .
\end{equation}
According to the sign of
\begin{align*}
(-1)^{|A||B|-|A\cap B|}e_ke_l e_Be_ie_j e_A   =\ &  (-1)^{|A||B|-|A\cap B|+|A\setminus\{i,j\}||B\setminus\{k,l\}|-|(A\setminus\{i,j\})\cap (B\setminus\{k,l\})|} e_ie_j e_A e_ke_l  e_B \\*
=\ &  (-1)^{|A\cap B|-|(A\setminus\{i,j\})\cap (B\setminus\{k,l\})|} e_ie_j e_A e_ke_l  e_B,
\end{align*}
the summation~\eqref{LijLklAinB} reduces to a combination of commutators and anticommutators involving $L_{ij}$ and $L_{kl}$. 
We first treat the anticommutators
\begin{align*}
&  - \sum_{k \in  A}\sum_{\{i, j\} \subset  A\setminus\{k\} }\sum_{l\in  B\setminus A  }\{L_{ij},L_{kl}\} e_ie_je_ke_l e_A  e_B \\
& - \sum_{i \in A}\sum_{\{j, k\} \subset  A\setminus\{i\} }\epsilon(\{L_{ik},L_{ji}\} e_ke_j e_A  e_B+\{L_{ij},L_{ki}\} e_je_k e_A  e_B).
\end{align*}
The last line reduces to
\begin{align*}
&   \sum_{i \in A}\sum_{\{j, k\} \subset  A\setminus\{i\} }\epsilon(\{L_{ik},L_{ji}\} -\{L_{ij},L_{ki}\}) e_ke_j e_A e_B =0, 
\end{align*}
while the first line vanishes by Theorem~\ref{theoLAlg} as it can be rewritten as
\begin{align*}
&   \sum_{\{k,i, j\} \subset  A }\sum_{l\in  B\setminus A  }(\{L_{ij},L_{kl}\} e_ie_je_ke_l e_A  e_B+\{L_{kj},L_{il}\} e_ke_je_ie_l e_A  e_B+\{L_{ik},L_{jl}\} e_ie_ke_je_l e_A  e_B) \\
= \ & \sum_{\{k,i, j\} \subset  A }\sum_{l\in  B\setminus A  }(\{L_{ij},L_{kl}\} +\{L_{jk},L_{il}\} +\{L_{ki},L_{jl}\} )e_ie_je_ke_l e_A  e_B.
\end{align*}

Next, we treat the remaining terms of the summation~\eqref{LijLklAinB} which reduce to four different summations of commutators
\begin{equation}
\begin{array}{cl}
& \displaystyle\sum_{ \{i, j\} \subset A }  \sum_{\{k, l\} \subset B\setminus A}[L_{ij},L_{kl}] e_ie_je_ke_l e_A  e_B 
+\sum_{ \{i, j\} \subset A} \sum_{ \{k, l\} \subset (B\cap A)\setminus\{i,j\}}[L_{ij},L_{kl}] e_ie_je_ke_l e_A  e_B\\
&  \displaystyle
-  \sum_{ i \in A}\sum_{ j \in A\setminus\{i\}}^{\phantom{l}}	\sum_{l\in  B\setminus A  }[L_{ij},L_{jl}] e_ie_je_je_l e_A  e_B +\sum_{ \{i, j\} \subset A}[L_{ij},L_{ij}] e_ie_je_ie_j e_A  e_B.
\end{array} \label{(e)}
\end{equation}
Note that while the last summation obviously vanishes, the second one does also as
\begin{align*}
& \sum_{ \{i, j\} \subset A\cap B} \sum_{ \{k, l\} \subset (B\cap A)\setminus\{i,j\}}[L_{ij},L_{kl}] e_ie_je_ke_l e_A  e_B \\ = \ &	\sum_{ \{i,j,k,l\} \subset A\cap B} ([L_{ij},L_{kl}]+[L_{ik},L_{lj}]+[L_{il},L_{jk}]+[L_{jk},L_{il}]+[L_{jl},L_{ki}]+[L_{kl},L_{ij}]) e_ie_je_ke_l e_A  e_B .
\end{align*}
By Theorem~\ref{theoLAlg} and using $\{e_i,O_j\} = [p_i,x_j] - \delta_{ij}$, the summand of the first summation of \eqref{(e)} can be written as
\begin{align*}
& L_{jk}\{e_i,O_l\}e_je_ke_ie_l e_A  e_B+	L_{il}\{e_j,O_k\}e_ie_le_je_k e_A  e_B
\\ & L_{lj}\{e_k,O_i\}e_le_je_ie_k e_A  e_B+L_{ki}\{e_l,O_j\}e_ke_ie_je_l e_A  e_B.
\end{align*}
Using $\{e_j,O_k\}=\{e_k,O_j\}$ and $\{e_i,O_l\}=\{e_l,O_i\}$, each term is altered to one containing $O_b$ where $b\in B\setminus A$. In this way, we find
\begin{align*}
&	 \epsilon(L_{jk}O_le_je_ke_l e_A  e_B+L_{lj}O_ke_le_je_k e_A  e_B+L_{il}O_ke_ie_le_k e_A  e_B+L_{ki}O_le_ke_ie_l e_A  e_B )\\
& +	L_{jk}e_iO_le_je_ke_ie_l e_A  e_B+L_{lj}e_iO_ke_le_je_ie_k e_A  e_B +L_{il}e_jO_ke_ie_le_je_k e_A  e_B+L_{ki}e_jO_le_ke_ie_je_l e_A  e_B.
\end{align*}
The summation, as in~\eqref{(e)}, of the first line reduces to
\begin{align*}
\epsilon\sum_{j\in A}  \sum_{k\in B\setminus A}\sum_{l\in (B\setminus A)\setminus\{k\}}\sum_{i\in A\setminus\{j\}}
L_{jk}O_le_je_ke_l e_A  e_B  = \ & \epsilon(|A|-1)\sum_{j\in A}   \sum_{k\in B\setminus A}\sum_{l\in (B\setminus A)\setminus\{k\}}
L_{jk}O_le_je_ke_l e_A  e_B ,
\end{align*}
while, using Lemma~\ref{lemmA} for the last line, we find
\begin{align*}
&-\sum_{j\in A}  \sum_{k\in B\setminus A}\sum_{i\in A\setminus\{j\}}\sum_{l\in (B\setminus A)\setminus\{k\}}
L_{jk}e_iO_le_l e_ie_je_ke_A  e_B\\
= \ 
&\sum_{j\in A}  \sum_{k\in B\setminus A}\sum_{i\in A\setminus\{j\}}
L_{jk}e_iO_ie_ie_ie_je_ke_A  e_B+\sum_{j\in A}  \sum_{k\in B\setminus A}\sum_{i\in A\setminus\{j\}}
L_{jk}e_iO_je_je_ie_je_ke_A  e_B \\
&-\sum_{j\in A}  \sum_{k\in B\setminus A}\sum_{i\in A\setminus\{j\}}\sum_{l\in \{i,j\}\cup (B\setminus A) \setminus\{k\}}
L_{jk}e_ie_ie_je_k e_A  e_Be_lO_l \\
= \ &\epsilon \sum_{j\in A}  \sum_{k\in B\setminus A}\sum_{i\in A\setminus\{j\}}
L_{jk}e_i(O_ie_j -O_je_i)e_ke_A  e_B -\epsilon\sum_{j\in A}  \sum_{k\in B\setminus A}\sum_{i\in A\setminus\{j\}}
L_{jk}e_je_k e_A  e_Be_iO_i\\
&-\epsilon(|A|-1)\sum_{j\in A}  \sum_{k\in B\setminus A}\sum_{l\in \{j\}\cup (B\setminus A) \setminus\{k\}}
L_{jk}O_le_le_je_k e_A  e_B\\
= \ &-\epsilon \sum_{j\in A}  \sum_{k\in B\setminus A}\sum_{i\in A\setminus\{j\}}
L_{jk}e_ie_jO_ie_ke_A  e_B -\epsilon(-1)^{|B|-|A|}\sum_{j\in A}  \sum_{k\in B\setminus A}\sum_{i\in A\setminus\{j\}}
L_{jk}e_je_k e_ie_A  e_BO_i\\
&-\epsilon(|A|-1)\sum_{j\in A}  \sum_{k\in B\setminus A}\sum_{l\in  (B\setminus A) \setminus\{k\}}
L_{jk}O_le_le_je_k e_A  e_B.
\end{align*}
In total, the first summation of \eqref{(e)} thus yields
\begin{equation}
\label{f}
-\epsilon \sum_{j\in A}  \sum_{k\in B\setminus A}\sum_{i\in A\setminus\{j\}}
L_{jk}e_ie_jO_ie_ke_A  e_B-(-1)^{|B|-|A|}\epsilon\sum_{j\in A}  \sum_{k\in B}\sum_{i\in A\setminus\{j\}}
L_{jk}e_je_k e_ie_A  e_BO_i.
\end{equation}

The final result is obtained following essentially the same strategy as used in the proof of Theorem~\ref{theoAlgOii} and Theorem~\ref{theoAdisjB}, now applied to the third summation of \eqref{(e)}:
\begin{align*}
& 
-\epsilon \sum_{ i \in A}\sum_{ j \in A\setminus\{i\}}\sum_{l\in B\setminus A  }[L_{ij},L_{jl}] e_ie_l e_A  e_B . 
\end{align*}
By Theorem~\ref{theoLAlg}, and using $\{e_i,O_j\} = [p_i,x_j] - \delta_{ij}$, the summand of this can be written as
\begin{equation}
-\epsilon L_{il} e_i e_le_A  e_B-\epsilon L_{il}\{e_j,O_j\}e_ie_l e_A  e_B+
\epsilon L_{ij}\{e_j,O_l\}e_ie_l e_A  e_B- \epsilon L_{lj}\{e_i,O_j\}e_ie_l e_A  e_B .\label{thirdsum}
\end{equation}
Here, the summation of the first term cancels out with
\[
\left\llbracket \epsilon \frac{|A|-1}{2}e_A,-\sum_{\{k, l\} \subset B} L_{kl} e_ke_l e_B \right \rrbracket_- 
=   \epsilon (|A|-1) \sum_{k\in A} \sum_{l \in B\setminus A}  L_{kl}  e_ke_l e_A  e_B  ,
\]
which, using property~\eqref{eAeB} and $A\subset B$, follows from
\begin{align*}
(-1)^{|A||B|-|A\cap B|}e_ke_l e_B e_A   =\ &  (-1)^{|A||B|-|A|+|A||B\setminus\{k,l\}|-|A\cap (B\setminus\{k,l\})|}  e_A e_ke_l  e_B \\
=\ &  (-1)^{|A|-|A\cap( B\setminus\{k,l\})|}  e_A e_ke_l  e_B.
\end{align*}
For the summation of the second and the fourth term of \eqref{thirdsum}, using $O_ie_j-O_je_i=e_iO_j-e_jO_i$, we have
\begin{align*}
& -\sum_{ i \in A}\sum_{ j \in A\setminus\{i\}}\sum_{l\in B\setminus A  }( \epsilon L_{il}O_je_je_i +L_{lj}O_j  )e_le_A  e_B
-\epsilon  \sum_{ i \in A}\sum_{ j \in A\setminus\{i\}}\sum_{l\in B\setminus A  }
(L_{il}e_j  +L_{lj}e_i)O_je_ie_le_A e_B \\
= \ &
-\sum_{ \{i,j\} \in A}\sum_{l\in B\setminus A  }
(\epsilon L_{il}O_je_je_i +L_{lj}O_j +\epsilon L_{jl}O_ie_ie_j +L_{li}O_i  ) e_le_A  e_B
\\*
&-\epsilon  \sum_{ \{i,j\} \in A}\sum_{l\in B\setminus A  }
(L_{il}e_j  +L_{lj}e_i)(O_je_i-O_ie_j)e_le_A e_B .
\\
= \ &
- \epsilon \sum_{ \{i,j\} \in A}\sum_{l\in B\setminus A  }
(L_{il}O_je_je_i
- L_{lj}O_ie_ie_j -L_{il}e_j e_iO_j+L_{lj}e_ie_jO_i  )e_le_A  e_B\\
= \ 
& \epsilon\sum_{ i \in A}\sum_{ j \in A\setminus\{i\}} \sum_{l\in B\setminus A  }
L_{lj}(O_ie_ie_j-e_ie_jO_i  )e_le_A  e_B. \addtocounter{equation}{1}\tag{\theequation}\label{g}
\end{align*}
When summed over $l$ in $B\setminus A$, the third term of \eqref{thirdsum} yields, using Lemma~\ref{lemmA},
\begin{align*}
& -\epsilon \sum_{l\in B\setminus A  }L_{ij} e_jO_le_le_ie_A  e_B  +\epsilon \sum_{l\in B\setminus A  }L_{ij}  O_le_le_je_ie_A  e_B  \\
= \ & \epsilon L_{ij} e_jO_ie_ie_ie_A  e_B
-\epsilon \sum_{l\in B\setminus A \cup\{i\} }L_{ij} e_je_ie_A  e_Be_lO_l  \\
&-\epsilon L_{ij} O_ie_ie_je_ie_A  e_B -\epsilon L_{ij} O_je_je_je_ie_A  e_B +\epsilon \sum_{l\in B\setminus A \cup\{i,j\} }L_{ij} e_je_ie_A  e_Be_lO_l  \\
= \ & L_{ij} e_iO_je_A  e_B+ (-1)^{|B|-|A|+1} 
L_{ij} e_ie_A  e_BO_j .
\end{align*}
Summing this over $i$ in $A$ and $j$ in $A\setminus\{i\}$ and combined with
\eqref{(c)}, \eqref{(d)}, \eqref{f}, and \eqref{g}, one observes that all terms cancel out.

\end{proof}


\begin{proof}[Proof of Theorem~\ref{theoOAOB}]
By definition, plugging in \eqref{OA}, the left-hand side expands to
\begin{align*}
\llbracket O_{A},O_{B}\rrbracket_+  = \ & O_AO_B + (-1)^{|A||B|-|A\cap B|} O_BO_A \\
= \ &\frac14 \big(\uD\, \ux_A e_A - e_A \ux_A \uD - \epsilon e_A \big)\big(\uD\, \ux_B e_B - e_B \ux_B  \uD - \epsilon e_B \big) \\
&  + (-1)^{|A||B|-|A\cap B|} \frac14  \big(\uD\, \ux_B e_B - e_B \ux_B  \uD - \epsilon e_B \big)\big(\uD\, \ux_A e_A - e_A\ux_A   \uD - \epsilon e_A \big).
\end{align*}
The idea of the proof is now as follows. We split up $\ux_{A}$ and $\ux_{B}$ into $\ux_{A} =  \ux_{A\setminus B} +\ux_{A\cap B} $	and $\ux_{B} =  \ux_{B\setminus A} +\ux_{A\cap B} $. We then combine the appropriate terms 
to make  $\ux_{A\triangle B} =  \ux_{A\setminus B} + \ux_{B\setminus A}  $ and $\ux_{A\cup B} =  \ux_{A\setminus B} + \ux_{B\setminus A} +\ux_{A\cap B} $, and in turn all terms that make up the right-hand side. In doing so, we continually make use of the following facts: by property~\eqref{eA2} we have
$(e_{A\cap B})^4=1$, hence, $(e_{A\cap B})^2$ is just a sign; 
\[
\uD e_A - (-1)^{|A|} e_A \uD = \sum_{a\in A} 2p_a e_a e_A\,;
\]
and for integer $n$
one has $(-1)^{n}=(-1)^{n^2}$ and $(-1)^{n(n+1)}=1$.

For the terms of $\llbracket O_{A},O_{B}\rrbracket_+  $ which do not contain $\uD$, we have	
\begin{align*}
& \frac14 e_A e_B   + (-1)^{|A||B|-|A\cap B|}  \frac14 e_Be_A = \frac12 e_Ae_B = -\frac12 e_Ae_B+\frac12 (e_{A\cap B})^4e_Ae_B+\frac12 (e_{A\cap B})^4e_Ae_B.
\end{align*}

Next, for the terms of $\llbracket O_{A},O_{B}\rrbracket_+  $ containing two occurrences of $\uD$, we have	
\begin{align*}
& \big(\uD\, \ux_A e_A -e_A  \ux_A  \uD \big)\big(\uD\, \ux_B e_B - e_B \ux_B  \uD \big)   + (-1)^{|A||B|-|A\cap B|}  \big(\uD\, \ux_B e_B-  e_B \ux_B \uD\big)\big(\uD\, \ux_A e_A - e_A\ux_A   \uD  \big)\\
& 	= \ \uD\, \ux_A e_A\uD\, \ux_B e_B - e_A\ux_A   \uD\uD\, \ux_B e_B -  \uD\, \ux_A e_A e_B\ux_B \uD + e_A\ux_A   \uD e_B\ux_B \uD \\
&  \ \  + (-1)^{|A||B|-|A\cap B|} \big( \uD\, \ux_B e_B\uD\, \ux_A e_A - e_B\ux_B   \uD\uD\, \ux_A e_A- \uD\, \ux_Be_B e_A \ux_A  \uD  
+ e_B\ux_B   \uD  e_A\ux_A  \uD  \big).
\end{align*}	
We first look at the terms having $\uD\uD$ in the middle:
\begin{align*}	
&  - e_A\ux_A   \uD\uD\, \ux_B e_B-
(-1)^{|A||B|-|A\cap B|} e_B\ux_B   \uD\uD\, \ux_A e_A\\ 
= \ &
- e_A( \ux_{A\setminus B} +\ux_{A\cap B}  ) \uD\uD\, (\ux_{B\setminus A} +\ux_{A\cap B} ) e_B\\
&-(-1)^{|A||B|-|A\cap B|}
e_B(\ux_{B\setminus A} +\ux_{A\cap B} )  \uD\uD\,( \ux_{A\setminus B} +\ux_{A\cap B}  ) e_A\\ 
= \ & -(e_{A\cap B})^2\big(e_A e_{A\cap B}\ux_{A\setminus B}  \uD\uD\, \ux_{B\setminus A} e_{A\cap B} e_B\\
&-(-1)^{|A||B|-|A\cap B|+|A\cap B|(|A|+|B|)}
e_{A\cap B}e_B\ux_{B\setminus A}  \uD\uD\, \ux_{A\setminus B}  e_A e_{A\cap B}\\
&
-
(-1)^{1+|A\cap B|(|A|+|B|-2|A\cap B|+1)+|A\cap B|-1}e_{A\cap B}e_Ae_B \ux_{A\setminus B} \uD\uD\, \ux_{A\cap B}  e_{A\cap B}\\
&-
e_{A\cap B}\ux_{A\cap B}   \uD\uD\, \ux_{B\setminus A} e_{A\cap B}e_A  e_B
-
e_{A\cap B}\ux_{A\cap B}   \uD\uD\, \ux_{A\cap B}  e_{A\cap B}e_Ae_B\\
&-(-1)^{1+|A\cap B|(|A|+|B|-2|A\cap B|+1)+|A\cap B|-1}
e_{A\cap B}e_Ae_B\ux_{B\setminus A}   \uD\uD\,\ux_{A\cap B} e_{A\cap B}  \\
&-
e_{A\cap B} \ux_{A\cap B}   \uD\uD\, \ux_{A\setminus B} e_{A\cap B} e_Ae_B-(-1)^{|A\cap B|(|A|+|B|-2|A\cap B|)}
e_{A\cap B} e_Ae_B\ux_{A\cap B}   \uD\uD\,\ux_{A\cap B}  e_{A\cap B}\big)\\
= \ & -(e_{A\cap B})^2\big(e_Ae_{A\cap B}  \ux_{A\setminus B}  \uD \uD\, \ux_{B\setminus A} e_{A\cap B}e_B - e_{A\cap B}  \ux_{{A\cap B}}  \uD\uD\, \ux_{A\cup B} e_{A\cap B}e_Ae_B  \\
&-(-1)^{(|A|-|A\cap B|)(|B|-|A\cap B|) }
e_{A\cap B}e_B  \ux_{B\setminus A}  \uD \uD\, \ux_{A\setminus B} e_Ae_{A\cap B} \\
&-(-1)^{|A\cap B|(|A|+|B|-|A\cap B|)-|A\cap B| }e_{A\cap B}e_Ae_B  \ux_{A\cup B}  \uD\uD\, \ux_{A\cap B} e_{A\cap B} \big).
\end{align*}	

In exactly the same manner, we have
\begin{align*}	
&  \uD\, \ux_A e_Ae_B\ux_B   \uD
+(-1)^{|A||B|-|A\cap B|} \uD\, \ux_B e_Be_A\ux_A    \uD \\ 
= \ 	&  \uD\, ( \ux_{A\setminus B} +\ux_{A\cap B}  ) e_Ae_B(\ux_{B\setminus A} +\ux_{A\cap B} )   \uD\\
& +
(-1)^{|A||B|-|A\cap B|} \uD\, (\ux_{B\setminus A} +\ux_{A\cap B} ) e_Be_A( \ux_{A\setminus B} +\ux_{A\cap B}  )    \uD \\ 
= \ & (e_{A\cap B})^2\big(	\uD\, \ux_{A\setminus B} e_A(e_{A\cap B})^2e_B  \ux_{B\setminus A}  \uD 
-(-1)^{(|A|-|A\cap B|)(|B|-|A\cap B|) }	\uD\, \ux_{B\setminus A} e_{A\cap B}e_Be_Ae_{A\cap B}  \ux_{A\setminus B}  \uD 
\\
& -\uD\, \ux_{A\cap B} (e_{A\cap B})^2e_Ae_B  \ux_{A\cup B}  \uD   
-(-1)^{|A\cap B|(|A|+|B|-|A\cap B|)-|A\cap B| }
\uD\, \ux_{A\cup B} e_{A\cap B}e_Ae_Be_{A\cap B}  \ux_{{A\cap B}}  \uD\big)
\end{align*}	

Next, for the remaining terms with two occurrences of $\uD$, we first have
\begin{align*}	
& \uD\, \ux_A e_A\uD\, \ux_B e_B + (-1)^{|A||B|-|A\cap B|}  \uD\, \ux_B e_B\uD\, \ux_A e_A 
\\
= \ & \uD\, ( \ux_{A\setminus B} +\ux_{A\cap B}) e_A\uD\, (\ux_{B\setminus A} +\ux_{A\cap B}) e_B \\*
&
+ (-1)^{|A||B|-|A\cap B|}  \uD\,  (\ux_{B\setminus A} +\ux_{A\cap B})e_B\uD\, ( \ux_{A\setminus B} +\ux_{A\cap B}) e_A \\
= \ &  (e_{A\cap B})^2 \big(\uD\,  \ux_{A\setminus B}  e_Ae_{A\cap B}\uD \, \ux_{B\setminus A}e_{A\cap B}  e_B
\\*
&
+ (-1)^{|A\cap B|(|A|+|B|-|A\cap B|-1)}  \uD\, \ux_{A\setminus B} e_{A\cap B}e_Ae_B\uD\, \ux_{A\cap B} e_{A\cap B}\\*
&
+\uD\, \ux_{A\cap B} e_{A\cap B}\uD\, \ux_{B\setminus A} e_{A\cap B}e_A e_B
+\uD\, \ux_{A\cap B}e_{A\cap B}\uD\, \ux_{A\cap B} e_{A\cap B} e_Ae_B\\*
&
+ (-1)^{|A||B|-|A\cap B|+|A\cap B|(|B|+|A|)}  \uD\,  \ux_{B\setminus A} e_{A\cap B}e_B\uD\,  \ux_{A\setminus B} e_Ae_{A\cap B}\\*
&+ (-1)^{|A\cap B|(|A|+|B|)} \uD\,  \ux_{B\setminus A} e_{A\cap B}e_Ae_B\uD\, \ux_{A\cap B} e_{A\cap B}
+ \uD\, \ux_{A\cap B}e_{A\cap B}\uD\,  \ux_{A\setminus B} e_{A\cap B} e_Ae_B\\*
&
+ (-1)^{|A\cap B|(|A|+|B|)}  \uD\, \ux_{A\cap B}e_{A\cap B}e_Ae_B\uD\,  \ux_{A\cap B} e_{A\cap B}\big)\\*
&- \uD\,  \ux_{A\setminus B} \sum_{l\in A\cap B} 2p_le_l \ux_{B\setminus A}  e_A e_B
-  \uD\,  \ux_{A\setminus B}  \sum_{l\in B\setminus A} 2p_le_l \ux_{A\cap B} e_A e_B- \uD\,  \ux_{A\cap B} \sum_{l\in A\setminus B} 2p_le_l \ux_{B\setminus A}  e_A e_B\\*
&
+  \uD\,  \ux_{A\cap B}  \sum_{l\in A\setminus B} 2p_le_l \ux_{ A\cap B}  e_Ae_B
-  \uD\,  \ux_{B\setminus A}  \sum_{l\in A\cap B} 2p_le_l \ux_{A\setminus B}  e_Ae_B
-  \uD\,  \ux_{B\setminus A}  \sum_{l\in A\setminus B} 2p_le_l \ux_{A\cap B}  e_Ae_B\\*
&
-  \uD\,  \ux_{A\cap B}  \sum_{l\in B\setminus A} 2p_le_l\, \ux_{A\setminus B}  e_Ae_B
- \uD\,  \ux_{A\cap B}  \sum_{l\in A\setminus B} 2p_le_l\, \ux_{A\cap B}  e_Ae_B \\
= \ &  (e_{A\cap B})^2 \big(\uD\,  \ux_{A\setminus B}  e_Ae_{A\cap B}\uD \, \ux_{B\setminus A}e_{A\cap B}  e_B
+   \uD\, \ux_{A\cap B}e_{A\cap B}\uD\,  \ux_{A\cup B} e_{A\cap B} e_Ae_B\\*
&
+ (-1)^{(|A|-|A\cap B|)(|B|-|A\cap B|) }  \uD\,  \ux_{B\setminus A} e_{A\cap B}e_B\uD\,  \ux_{A\setminus B} e_Ae_{A\cap B}\\*
&
+  (-1)^{|A\cap B|(|A|+|B|-|A\cap B|)-|A\cap B| }  \uD\, \ux_{A\cup B}e_{A\cap B}e_Ae_B\uD\,  \ux_{A\cap B} e_{A\cap B}\big)\\*
&- 2\uD\, \sum_{a\in A\setminus B} \sum_{c\in A\cap B}\sum_{b\in B\setminus A}\big(- x_a (p_c  x_b 
-    p_b x_c ) 
+   x_c (  p_a  x_b  -    p_b x_a )
+   x_b  (  p_c x_a 
-     p_a x_c )
\big) e_ae_be_ce_Ae_B
\end{align*}	
where the last line vanishes using  $L_{ij}=x_ip_j-x_jp_i = p_jx_i-p_ix_j$. Similarly one finds
\begin{align*}	
&  e_A\ux_A   \uD e_B\ux_B \uD + (-1)^{|A||B|-|A\cap B|}  e_B\ux_B   \uD  e_A\ux_A  \uD  \\
= \ & e_A ( \ux_{A\setminus B} +\ux_{A\cap B}) \uD\,e_B (\ux_{B\setminus A} +\ux_{A\cap B}) \uD\, \\
&
+ (-1)^{|A||B|-|A\cap B|}   e_B (\ux_{B\setminus A} +\ux_{A\cap B})\uD\,e_A ( \ux_{A\setminus B} +\ux_{A\cap B}) \uD\, \\
= \ &  (e_{A\cap B})^2 \big( e_Ae_{A\cap B} \ux_{A\setminus B}  \uD \,e_{A\cap B}  e_B \ux_{B\setminus A}\uD\,	+   e_{A\cap B} \ux_{A\cap B}\uD\, e_{A\cap B} e_Ae_B \ux_{A\cup B} \uD\,
\\
&
+ (-1)^{(|A|-|A\cap B|)(|B|-|A\cap B|) } e_{A\cap B}e_B \ux_{B\setminus A} \uD\,  e_Ae_{A\cap B} \ux_{A\setminus B} \uD\, 
\\
&
+  (-1)^{|A\cap B|(|A|+|B|-|A\cap B|)-|A\cap B| }   e_{A\cap B}e_Ae_B\ux_{A\cup B}\uD\, e_{A\cap B} \ux_{A\cap B} \uD\big).
\end{align*}	

Finally, for the terms with a single occurrence of $\uD$, we have (up to a factor $-\epsilon$)
\begin{align*}
&  \big(\uD\, \ux_A e_A-e_A\ux_A   \uD  \big) e_B  + (-1)^{|A||B|-|A\cap B|}  e_B \big(\uD\, \ux_A e_A -  e_A \ux_A \uD  \big)\\
& 
+  e_A \big(\uD\, \ux_B e_B - e_B\ux_B   \uD  \big)   + (-1)^{|A||B|-|A\cap B|}  \big(\uD\, \ux_B e_B - e_B\ux_B   \uD \big) e_A  \\
= \ & - \uD\, \ux_{A\setminus B}  e_Ae_B - \uD\, \ux_{B\setminus A}  e_Ae_B  +   e_Ae_B \ux_{A\setminus B}\uD +    e_Ae_B\ux_{B\setminus A}\uD\\
&+(e_{A\cap B})^2 \big( \uD\, \ux_{A\setminus B}  e_A(e_{A\cap B})^2e_B
+(-1)^{|A\cap B|(|A|+|B|)} \uD\, \ux_{A\setminus B} e_{A\cap B} e_Ae_Be_{A\cap B} \\
&
+ \uD\, \ux_{A\cap B}(e_{A\cap B})^2 e_A e_B 
-e_A e_{A\cap B}\ux_{A\setminus B}\uD  e_{A\cap B}  e_B 
-e_{A\cap B} \ux_{A\cap B} \uD  e_{A\cap B}e_A e_B\\
&
+ (-1)^{|A||B|+|A\cap B|(|B|+|A|-1)} e_{A\cap B} e_B \uD\, \ux_{A\setminus B} e_Ae_{A\cap B}  + (-1)^{|A\cap B|(|A|+|B|)}  e_{A\cap B}e_A e_B\uD\,  \ux_{A\cap B}e_{A\cap B}
\\ 
& -(-1)^{|A||B|+|A\cap B|(|B|+|A|-1)}e_{A\cap B}e_B e_Ae_{A\cap B}\ux_{A\setminus B}\uD 
-  (-1)^{|A\cap B|(|A|+|B|)} e_{A\cap B}e_Ae_B\ux_{A\setminus B}  \uD\,e_{A\cap B}\\
&    -  (e_{A\cap B})^2e_Ae_B\ux_{A\cap B}  \uD
+  e_A e_{A\cap B}\uD\, \ux_{B\setminus A} e_{A\cap B}e_B 
+ e_{A\cap B} \uD\, \ux_{A\cap B} e_{A\cap B} e_Ae_B \\*
&   
-   e_A(e_{A\cap B})^2e_B\ux_{B\setminus A}\uD -(e_{A\cap B})^2 e_Ae_B \ux_{B\setminus A}   \uD 
- (-1)^{|A\cap B|(|A|+|B|)}e_{A\cap B}e_Ae_Be_{A\cap B} \ux_{A\cap B}  \uD  \\*
& + (-1)^{|A\cap B|(|A|+|B|)}\uD\, \ux_{B\setminus A} e_{A\cap B} e_Ae_B e_{A\cap B}
+ (-1)^{|A||B|+|A\cap B|(|A|+|B|-1)}\uD\, \ux_{B\setminus A} e_{A\cap B} e_Be_A e_{A\cap B}\\*
&   + (-1)^{|A\cap B|(|A|+|B|)}\uD\,\ux_{A\cap B} e_{A\cap B}e_Ae_Be_{A\cap B}
- (-1)^{|A||B|+|A\cap B|(|B|+|A|-1)}  e_{A\cap B} e_B\ux_{B\setminus A} \uD e_A  e_{A\cap B} \\*
& 
- (-1)^{|A\cap B|(|A|+|B|)} e_{A\cap B} e_A e_B\ux_{A\cap B}  \uD  e_{A\cap B} \big), 
\end{align*}		
where, for instance, we made use of the following computation
\begin{align*}
& -e_A\ux_{A\cap B}   \uD  e_B  
- (-1)^{|A||B|-|A\cap B|}   e_B\ux_{A\cap B}   \uD  e_A  \\
=\ & -(e_{A\cap B})^2 \big((e_{A\cap B})^2e_A\ux_{A\cap B}   \uD  e_B  
+ (-1)^{|A||B|-|A\cap B|}   e_B\ux_{A\cap B}   \uD  (e_{A\cap B})^2e_A \big) \\
=\ & -(e_{A\cap B})^2 \big( (-1)^{|A|-|A\cap B|} e_{A\cap B}\ux_{A\cap B}   e_{A\cap B}e_A\uD  e_B  
+ (-1)^{|A||B|+|A\cap B||A|}   e_B\ux_{A\cap B}   \uD  e_{A\cap B}e_Ae_{A\cap B} \big) \\
=\ & -(e_{A\cap B})^2 \big(  e_{A\cap B}\ux_{A\cap B}  \uD e_{A\cap B}e_A  e_B  
+ (-1)^{|A||B|+|A\cap B||A|+|A|-|A\cap B|}   e_B\ux_{A\cap B}  e_{A\cap B}e_A \uD  e_{A\cap B}  \\
& - e_{A\cap B}\ux_{A\cap B} \sum_{l\in A\setminus B}2p_le_l  e_{A\cap B} e_A e_B  + \,(-1)^{|A||B|+|A\cap B||A|} e_B\ux_{A\cap B} \sum_{l\in A\setminus B}2p_le_l  e_{A\cap B}e_A e_{A\cap B} \big)\\
=\ &(e_{A\cap B})^2 \big( -e_{A\cap B} \ux_{A\cap B} \uD  e_{A\cap B}e_A e_B
- (-1)^{|A\cap B|(|A|+|B|)} e_{A\cap B} e_A e_B\ux_{A\cap B}  \uD  e_{A\cap B} \big).
\end{align*}

Putting everything together and comparing with 
\[
O_{A, B}  =  \frac{1}{2}\big(\uD\, \ux_{A\triangle B} e_Ae_B -e_Ae_B  \ux_{A\triangle B}  \uD - \epsilon e_Ae_B \big),
\]	
and	
\begin{align*}
&	\llbracket O_{A, (A\cap B)},O_{(A\cap B),  B}\rrbracket_+ = \frac{1}{4}\big(\uD\, \ux_{A\setminus B} e_Ae_{A\cap B} -e_Ae_{A\cap B}  \ux_{A\setminus B}  \uD - \epsilon e_Ae_{A\cap B} \big)\\
& \times
\big(\uD\, \ux_{B\setminus A} e_{A\cap B}e_B -e_{A\cap B}e_B  \ux_{B\setminus A}  \uD - \epsilon e_{A\cap B}e_B \big) +(-1)^{(|A|-|A\cap B|)(|B|-|A\cap B|) }\frac{1}{4}\\
&\times
\big(\uD\, \ux_{B\setminus A} e_{A\cap B}e_B -e_{A\cap B}e_B  \ux_{B\setminus A}  \uD - \epsilon e_{A\cap B}e_B \big)
\big(\uD\, \ux_{A\setminus B} e_Ae_{A\cap B} -e_Ae_{A\cap B}  \ux_{A\setminus B}  \uD - \epsilon e_Ae_{A\cap B} \big),
\end{align*}
and
\begin{align*}
&  \llbracket O_{A\cap B},O_{(A\cap B) ,A, B}\rrbracket_+ 
= \frac{1}{4}\big(\uD\, \ux_{A\cap B} e_{A\cap B} -e_{A\cap B}  \ux_{{A\cap B}}  \uD - \epsilon e_{A\cap B} \big)\\
& \times
\big(\uD\, \ux_{A\cup B} e_{A\cap B}e_Ae_B -e_{A\cap B}e_Ae_B  \ux_{A\cup B}  \uD - \epsilon e_{A\cap B}e_Ae_B \big) +(-1)^{|A\cap B|(|A|+|B|-|A\cap B|)-|A\cap B| }\frac{1}{4}\\
& \times\big(\uD\, \ux_{A\cup B} e_{A\cap B}e_Ae_B -e_{A\cap B}e_Ae_B  \ux_{A\cup B}  \uD - \epsilon e_{A\cap B} e_Ae_B \big)
\big(\uD\, \ux_{A\cap B} e_{A\cap B} -e_{A\cap B}  \ux_{A\cap B} \uD - \epsilon e_{A\cap B} \big) ,
\end{align*}
the proof is completed.	
\end{proof}

\section*{Acknowledgments}

The research of HDB is supported by the Fund for Scientific
Research-Flanders (FWO-V), project ``Construction of algebra realizations
using Dirac-operators,'' grant G.0116.13N.

\end{document}